\def \max {\, \mbox{max}}
\def \diag {\, \mbox{diag}\, }
\def\bbbn{{\mathbb N}}
\def\bbbc{{\mathbb C}}
\def\bbbd{{\mathbb D}}
\def\bbbz{{\mathbb Z}}
\def\bbbr{{\mathbb R}}
\def\bv{{\mathbf e}}
\def\bn{{\mathbf n}}
\def\bm{{\mathbf m}}
\def\be{{\mathbf e}}
\def\i{{\rm i}}
\def\ii{{\iota}}
\def\cL{{\cal L}}
\def\cM{{\cal M}}
\def\cS{{\cal S}}
\def\hu{{\hat u}}
\def\hv{{\hat v}}
\def\tit{{\tilde \tau}}
\def\tr{\mbox{tr}}
\def\tr{\mbox{tr}}
\newtheorem{Def}{Definition}
\newtheorem{The}{Theorem}
\newtheorem{Pro}{Proposition}
\newtheorem{Lem}{Lemma}
\newtheorem{Rem}{Remark}
\begin{document}
\title{Wave fronts and cascades of soliton interactions in the periodic two 
dimensional 
Volterra system}
\author{Rhys Bury, Alexander V. Mikhailov$^{\star}$ and Jing Ping 
Wang$ ^\dagger $
\\
$\dagger$ School of Mathematics, Statistics \& Actuarial Science, University of 
Kent, UK \\
$\star$ Applied Mathematics Department, University of Leeds, UK
}
\date{}

\maketitle

\begin{abstract}
In the paper we develop the dressing method for the solution of the 
two-dimensional periodic Volterra system with a period $N$. 
We derive  soliton solutions of arbitrary rank $k$ and give a 
full classification of rank $1$  solutions. We have found a new class of exact solutions corresponding to 
wave fronts which represent  smooth interfaces 
between two nonlinear periodic waves or a periodic wave and a trivial (zero) solution. The 
wave fronts are non-stationary and they propagate with a constant average 
velocity. The system also has soliton solutions similar to breathers, which resembles soliton webs in the KP theory. 
We associate the classification of soliton solutions with the Schubert 
decomposition of the Grassmanians ${\rm Gr}_\bbbr(k,N)$ and ${\rm Gr}_\bbbc(k,N)$.  
\end{abstract}

\section{Introduction}

Construction of explicit exact solutions for integrable systems is an important and well developed 
area of research. There is a variety of methods designed to tackle this problem 
and vast literature concerning soliton solutions of rank one.
In this paper we develop the dressing method in application to a periodic two-dimensional Volterra 
system and derive explicitly soliton solutions of arbitrary rank. We found  
solutions resembling breathers (in the theory of 
the sine-Gordon equation), nonlinear periodic waves and a new type of exact 
solution for integrable systems, which are  smooth interfaces 
between two nonlinear periodic waves or a periodic wave and a trivial (zero) 
solution.

The dressing method for Lax 
integrable systems was originally formulated and developed in 
\cite{zash, mr80c:81115}. Its predecessor 
was proposed by Bargmann (1949) \cite{bargmann}, where the author performed the dressings
of the Schr\"odinger operator and discovered potentials, which now we would 
associate with the profiles
of one and two soliton solutions for the Korteweg de--Vries (KdV) equation. The 
connection of the potentials
of the Schr\"odinger operators with solutions of the KdV equation was 
established much  later by Gardner, Greene, Kruskal and Miura,  who discovered the 
inverse spectral transform \cite{ggkm}. A year later an elegant interpretation 
of their results 
was given by Lax in \cite{lax68}, where the concept of Lax pair has first appeared.

In this paper, we develop the dressing method and study exact solutions for the  $2$--dimensional generalisation
of the periodic Volterra lattice \cite{mik79,mik81}
\begin{eqnarray}\label{2+1}
\left\{ \begin{array}{ll}\phi^{(i)}_{t}= \theta^{(i)}_{x} + \theta^{(i)} 
\phi^{(i)}_{x}
-e^{2 \phi^{(i-1)}}+ e^{2 \phi^{(i+1)}},\qquad & \phi^{(i+N)}=\phi^{(i)}, \quad 
\theta^{(i+N)}=\theta^{(i)}, \\ & \\
\theta^{(i+1)}-\theta^{(i)}+ \phi^{(i+1)}_{x}+\phi^{(i)}_{x}=0,& \sum_{i=1}^N 
\phi^{(i)}=\sum_{i=1}^N\theta^{(i)}=0,
\end{array}\right.  .
\end{eqnarray}
System (\ref{2+1}) can be regarded as an 
integrable discretisation of the Kadomtsev-Petviashvili (KP) equation (see Section \ref{sec33}).
The KP equation, originally 
derived 
for ion-acoustic waves of small amplitude in plasma \cite{KP70}, is
 a $2+1$--dimensional integrable generalisation of the KdV 
equation. Its mathematical theory made a deep impact to the theory of 
integrable equations and give rise to useful notions such as $\tau$ 
function and Sato Grassmanian \cite{satoKP}. The KP equation possesses a rich 
set of exact solutions, whose classification require  advanced techniques from 
cluster algebra, tropical geometry and combinatorics developed in 
\cite{kodama2011kp, Kodama2013, kodama2014kp}.

Equation (\ref{2+1}) was first derived in 1979 motivated by the reduction 
group theory for Lax representation \cite{mik79}. 
For a fixed period $N$, the variables $\theta ^{(i)}$ can be eliminated and 
thus 
(\ref{2+1}) can be rewritten as a system of $(N-1)$-component second order 
evolutionary equations. In the simplest nontrivial case $N=3$, the system 
(\ref{2+1}) becomes
\begin{eqnarray}\label{eqn3}
\left\{\begin{array}{l}3 \phi^{(1)}_{t}=\phi^{(1)}_{xx}+2\phi^{(2)}_{xx} + 2 
\phi^{(1)}_{x}
\phi^{(2)}_{x}
+{\phi^{(1)}_{x}}^2 +3 e^{2 \phi^{(2)}} -3 e^{-2 \phi^{(1)}-2\phi^{(2)}}\\
3 \phi^{(2)}_{t}=-2\phi^{(1)}_{xx}-\phi^{(2)}_{xx} - 2 \phi^{(1)}_{x} 
\phi^{(2)}_{x} -{\phi^{(2)}_{x}}^2 -3 e^{2
\phi^{(1)}} +3 e^{-2 \phi^{(1)}-2\phi^{(2)}}
\end{array}\right. 
\end{eqnarray}
and after a point transformation it takes the form of a  nonlinear 
Schr\"odinger 
type equation (system ``u4'' in \cite{mr89g:58092})
\[  iu_T=u_{XX}+(u_X^\star)^2+e^{-2u-2u^\star}+\omega^\star e^{-2\omega 
u-2\omega^\star u^\star}+\omega e^{-2\omega^\star u-2\omega u^\star},\quad 
\omega=e^{\frac{2\pi 
i}{3}}\, ,\]
where $\star$ denotes complex conjugation. In this case, the system is  
bi-Hamiltonian. A recursion operator and bi-Hamiltonian structure for 
system (\ref{eqn3}) are 
explicitly constructed from its Lax representation in \cite{wang09}.
A certain class of Darboux transformations for arbitrary fixed period $N$
has recently been constructed in \cite{MPW}.

For infinite $N$, equation (\ref{2+1}) is an integrable differential-difference
equation in $2 + 1$ dimensions. It appeared in \cite{FNR13} where the authors
classified a family of equations with
the non-locality of intermediate long wave type. Its infinitely many 
symmetries and conserved densities are constructed using its master symmetry 
in \cite{wang15}.

Bargmann's potentials correspond 
to a rational (in the wave number) factor to the Jost function \cite{bargmann}. In the dressing method we also
start with a rational in the spectral parameter $\lambda$ matrix factor $\Phi(\lambda)$, which modifies the fundamental solution of the
``undressed'' Lax pair. In the case of system (\ref{2+1}) the Lax operators 
contain $N\times N$ matrices and are invariant with respect to a reduction 
group isomorphic to $\bbbz_2\times\bbbd_N$. We construct the reduction group 
invariant dressing factors $\Phi(\lambda)$ which have $N$ or $2N$ simple poles 
belonging to the orbits generated by transformations $\lambda\mapsto \omega 
\lambda,\ \lambda\mapsto \lambda^*$, where $\omega=\exp(\frac{2\pi i}{N})$. The 
case of $N$ simple poles leads to a new class of solutions, which we call  
kink solutions, while solutions corresponding to the orbits with $2N$ poles we 
call breathers. This terminology is borrowed from the  sine-Gordon theory where 
a kink solution  corresponds to a dressing factor with one pole and  two poles 
factor leads to a breather solution  \cite{fad87, MPW15}. We could also 
construct $(n,m)$ multisoliton solutions with $n$ kinks and $m$ breathers, but 
this generalisation is rather straightforward and therefore in this paper we 
focus on solutions corresponding to a single orbit (i.e. one kink and one 
breather solutions).

A kink solution can 
be parametrised by a real number $\nu\notin \{\pm 1,0\}$ and a point 
on a real Grassmannian ${\rm Gr}_\bbbr (k,N)$, while a breather solution can be 
uniquely parametrised by a complex number 
$\mu\in\bbbc$ such that $ |\mu|\notin \{1,0\},\ \mbox{Im}\,\mu^N\ne 0$ and a 
point on a complex Grassmannian ${\rm Gr}_\bbbc (k,N)$. The number $k$ in ${\rm Gr}(k,N)$ is the rank of the soliton solution. 
There is a difference between the cases of even and odd $N$. When $N$ is even, 
there are two 
different orbits with $N$ points, namely $\{\nu\omega^k\}_{k=1}^N$ and 
$\{\nu\omega^{k+\frac{1}{2}}\}_{k=1}^N$. They results in two different kink 
solutions. A fine classification of wave 
interfaces (in the kink case) and soliton 
interactions (in the breather case) can be naturally given in the terms of the 
invariant
Schubert cell decomposition of the Grassmannian. In particular, elementary  
line breathers and periodic kink solutions correspond to 
one-dimensional invariant Schubert cells of the Grassmannians.

Kink solutions represent regions 
filled by non-linear periodic waves with moving interfaces between the regions, see Figure \ref{T}.
Thus we also call them wave fronts.
\begin{figure}[h!]
\centering
\includegraphics[angle=0,scale=.5]{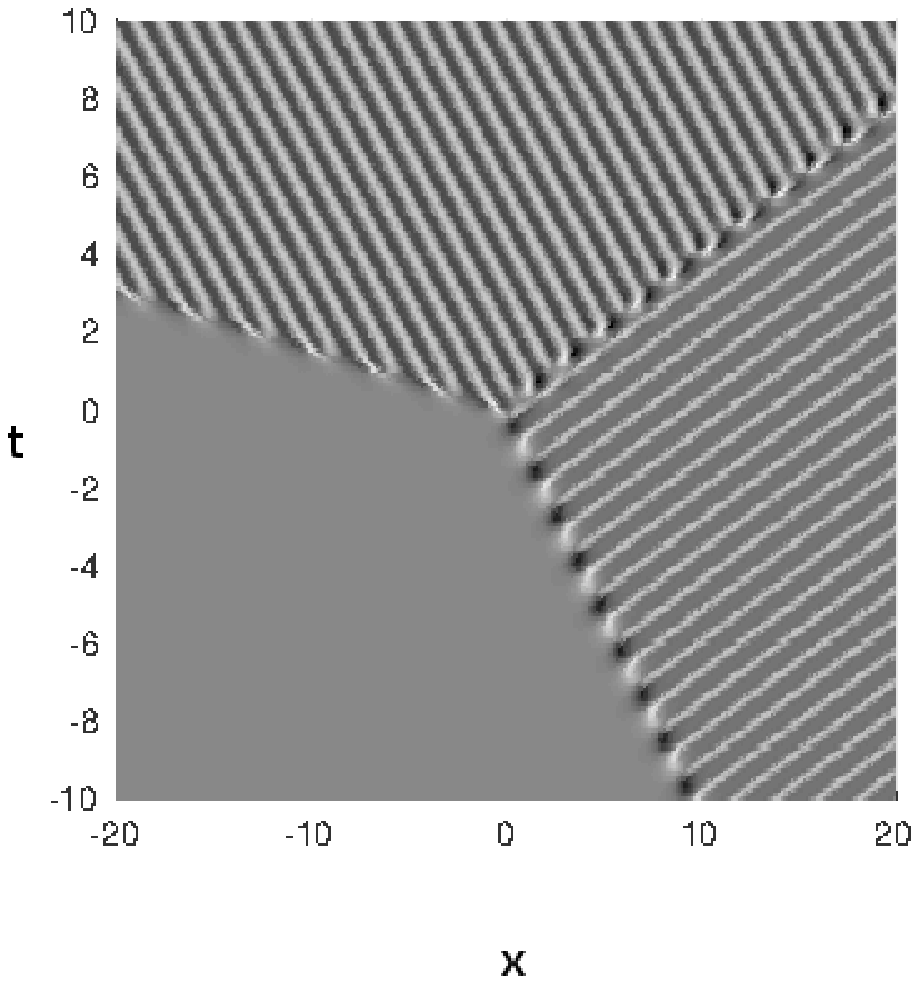}\includegraphics [ 
angle=0 
, scale=0.5]{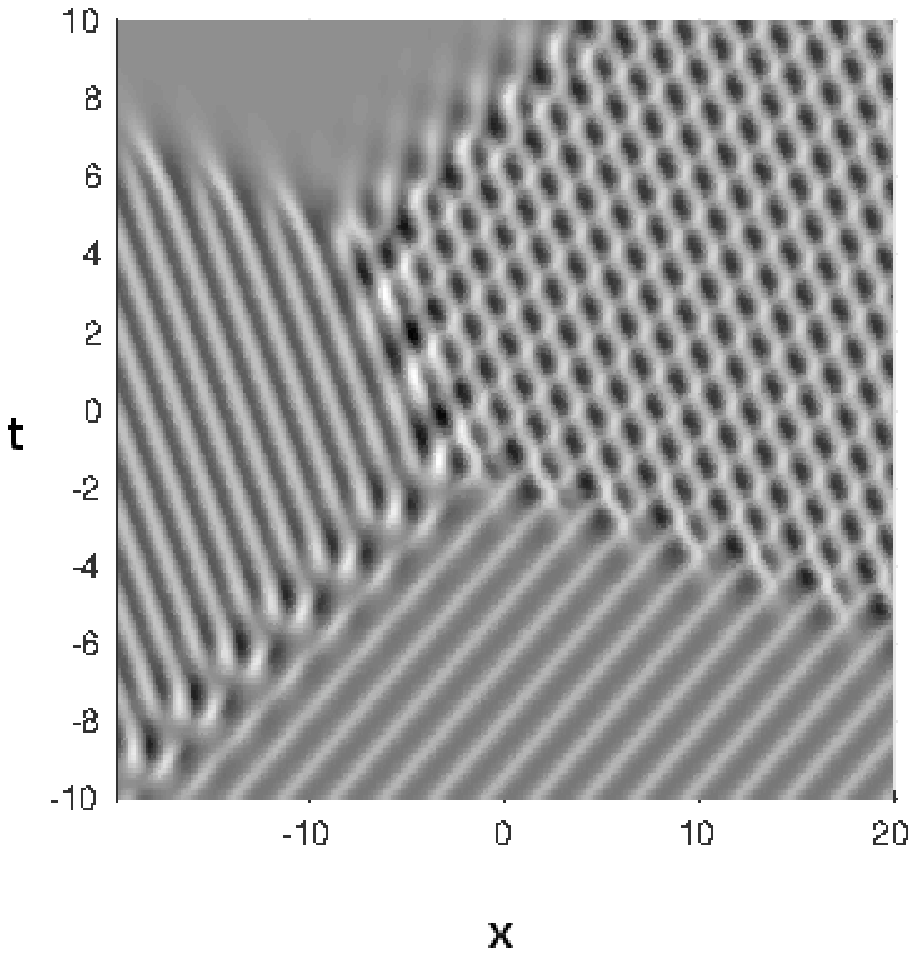}
\caption{Density plots of $\phi^{(1)}(x,t)$ for $N=5$  of rank 1 and rank 2 
kink 
solutions}
\label{T}
\end{figure}
Breathers correspond to a 
cascade of soliton decays and fusions. The density plots of a breather 
solution in the $(x,t)$-plane resemble soliton webs of the KP 
equations in the $(x,y)$-plane for a fixed moment of time, see Figure \ref{Tb}.  
\begin{figure}[h!]
\centering
\includegraphics [angle=0,scale=0.45]{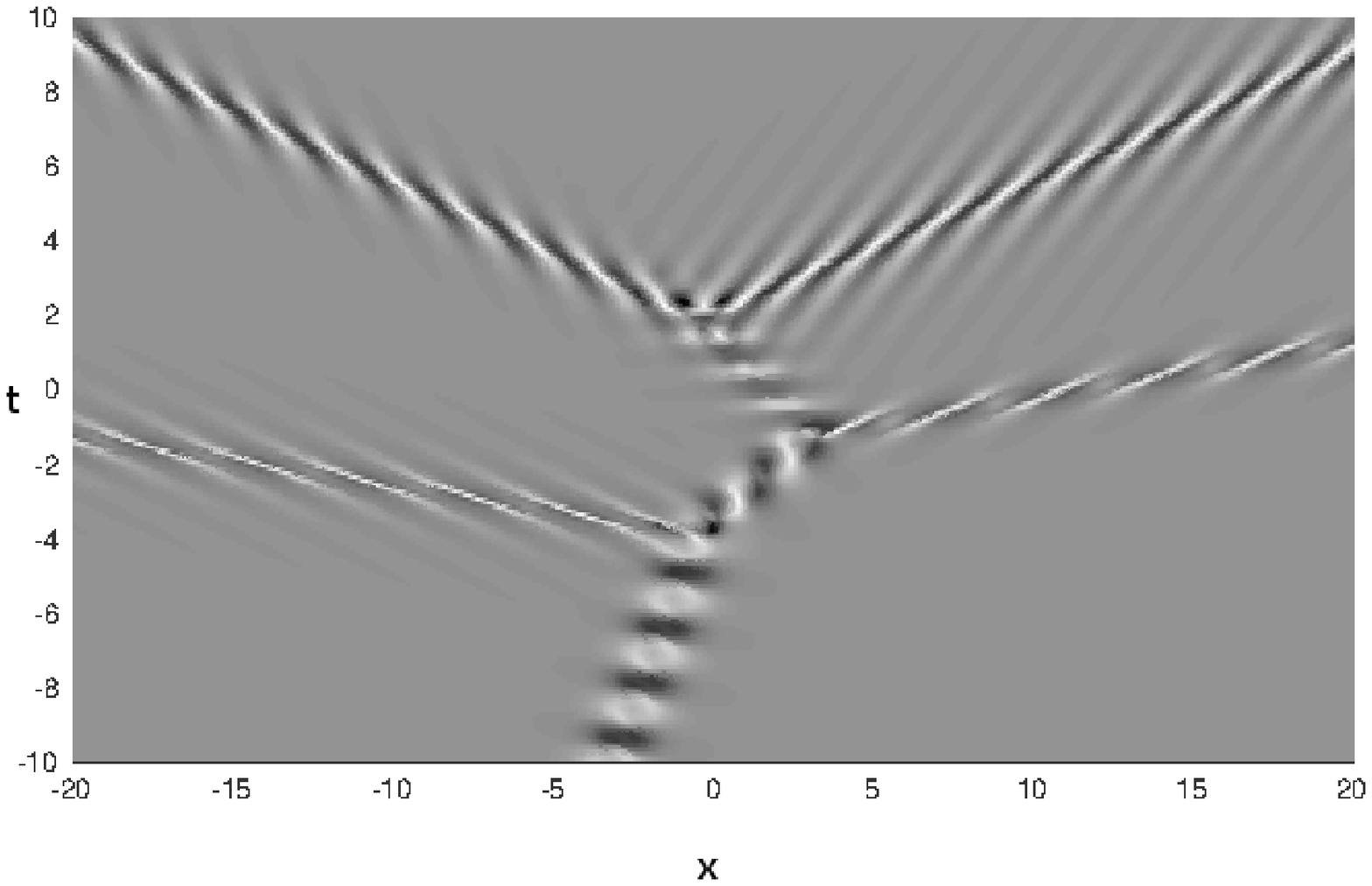} 
\includegraphics[angle=0,scale=0.45]{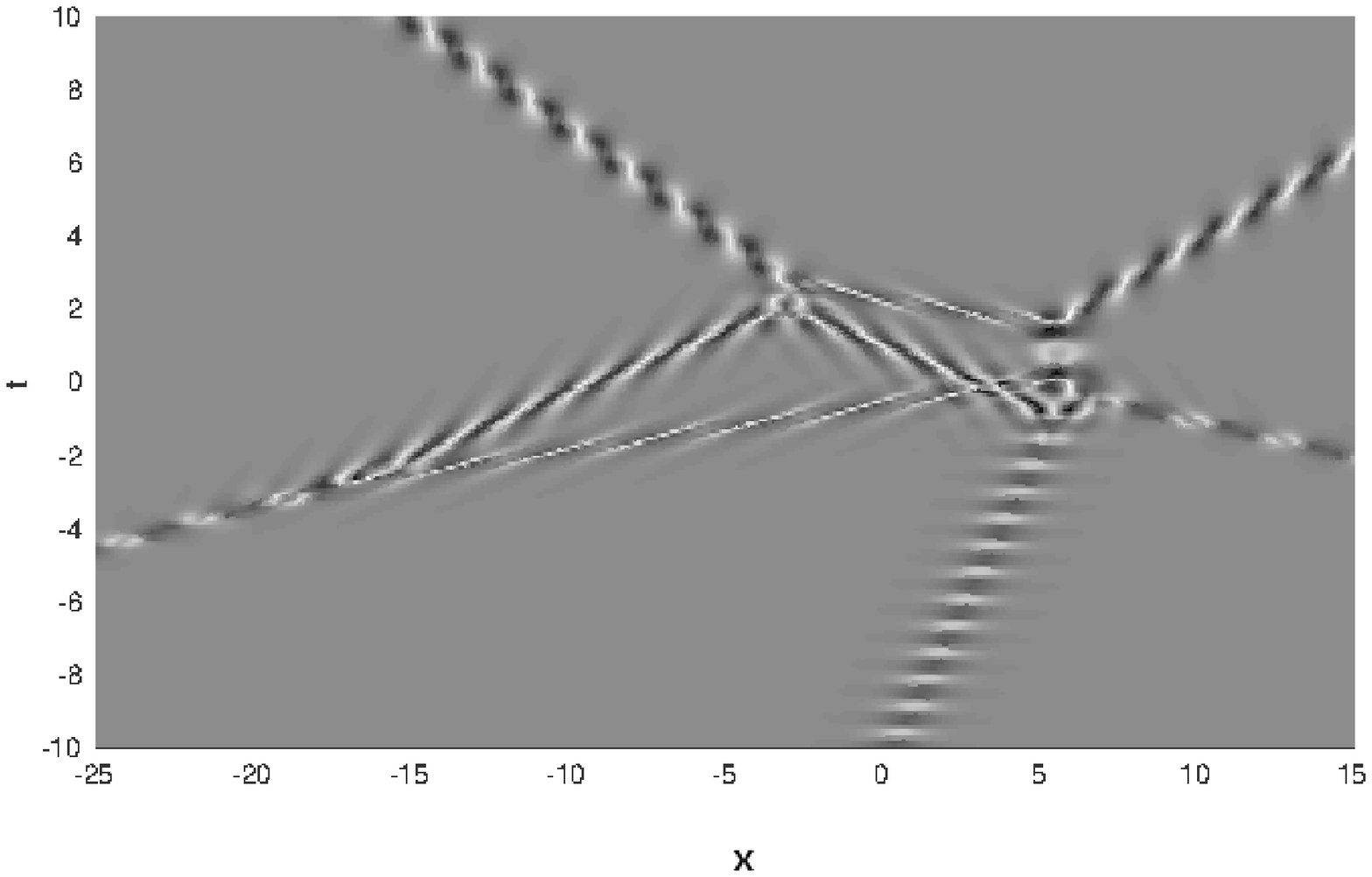}
\caption{Density plots of $\phi^{(1)}(x,t)$ for $N=5$ of rank 1 and rank 2 
breather
solutions}\label{Tb}
\end{figure}
In the paper we give explicit and detailed derivation for these two types of 
soliton solutions of arbitrary rank for the 
two dimensional 
Volterra system (\ref{2+1}) and give a complete classification of rank 
$1$ kink and breather solutions.

The arrangement of this paper is as follows: In Section \ref{sec2}, we recall the Lax 
representation for equation (\ref{2+1}) and its  reduction group. In Section \ref{sec3}, 
we discuss the dressing method in the presence of the reduction group. 
We explicitly derive both kink 
and breather solutions for equation (\ref{2+1}) on the trivial 
background using the dressing method. All exact solutions emerging from the 
dressing method can be written in the form
\begin{eqnarray}\label{tt}
\phi^{(i)}=\frac{1}{2}\ln \frac{\tau_{i-1} \tau_{i+1}}{\tau_i^2}.
\end{eqnarray}
Under a certain continuous limit $N\to \infty$, equation (\ref{2+1}) converges to 
the well known KP equation. In this limit $\tau_i$ in (\ref{tt}) can be related to 
the Hirota $\tau$-function for the bilinear form of the KP equation 
\cite{hirota}. 
In Section \ref{sec4}, we classify both kink and breather solutions of rank $1$ 
according to the eigenspaces of the constant matrix $\Delta$ in the Lax 
operators of equation (\ref{2+1}). For rank $1$ kink solutions, we start with 
a description of all possible rank 1 kink solutions in the cases of $N=3, 4$ and 
further prove the general results for arbitrary dimensions. For rank $1$ breather solutions we 
present some typical configurations and the general result on the number 
of possible distinct configurations. Our definition soliton graphs based on 
tropicalisation is motivated by \cite{kodama2011kp} although we do not have 
structure associated with Wronskian of solutions.
In the Conclusion we summarise our results and discuss the feasibility of a 
full classification of 
higher rank solutions. 

\section{Lax representation and the dihedral reduction group}\label{sec2}
Let us consider general matrix operators of the form
\begin{eqnarray}
\hat{L}(\lambda) = D_x + \mathbf{X}(x,t,\lambda), \qquad
\hat{M}(\lambda)= D_t + \mathbf{T}(x,t,\lambda),\label{gL}
\end{eqnarray}
where $D_x$ and $D_t$ are total derivatives in $x$ and $t$ respectively, $\lambda\in \bbbc$ 
is a spectral parameter, 
$\mathbf{X}$ and $\mathbf{T}$ are $N\times N$ traceless matrices
\begin{eqnarray*} \mathbf{X}(x,t,\lambda) &=&  \mathbf{X}_0 + \lambda^{-1} 
\mathbf{U} + \lambda \overline{\mathbf{U}}\\
\mathbf{T}(x,t,\lambda)& =& \mathbf{T}_0 + \lambda^{-1} \mathbf{A} + \lambda \overline{\mathbf{A}} +
\lambda^{-2} \mathbf{B} + \lambda^2 \overline{\mathbf{B}}
\end{eqnarray*}
and the matrices $\mathbf{X}_0, \mathbf{U},\overline{\mathbf{U}},\mathbf{T}_0 
,\mathbf{A}, \overline{\mathbf{A}}, \mathbf{B}$ and $\overline{\mathbf{B}} $ are 
functions of $x$ and $t$.
The compatibility condition $[\hat{L}, \hat{M}]=0$, that is,
\begin{equation}
  \mathbf{T}_x - \mathbf{X}_t + [\mathbf{X},\mathbf{T}] = 0\label{compcond}
\end{equation}
gives $7(N^2-1)$ partial differential equations (coefficients of  
$\lambda^{-3},\ldots ,\lambda^{3}$) for  $7N^2$ matrix entries. 

We define a group of  automorphisms generated by the following two transformations
for an operator ${\bf d}(\lambda)$: the first one is
\begin{equation}
\ii: {\bf d}(\lambda)\mapsto -{\bf d}^{\dagger}(\lambda^{-1}),
\label{redgroup1}
\end{equation}
where ${\bf d}^{\dagger}$ is the adjoint operator of operator ${\bf d}$.  The 
second one is
\begin{equation}
s:{\bf d}(\lambda)\mapsto Q {\bf d}(\omega^{-1} \lambda) Q^{-1},\qquad Q=\diag 
(\omega^i),\quad
\omega=\exp\frac{2\pi {\rm i}}{N} .
\label{redgroup3}
\end{equation}
These two non-commuting transformations satisfy
\begin{eqnarray*}
 \ii^2=s^N = {\rm id},\quad \ii s\ii = s^{-1}  
\end{eqnarray*}
and therefore generate the dihedral group denoted by $\bbbd_N$.
We call the 
group generated by transformations (\ref{redgroup1}) and (\ref{redgroup3}) the 
reduction group \cite{mik79,mik81, LM04, LM05, sara, bury}.
Note that the transformation $\ii$ (\ref{redgroup1}) is an outer automorphism of 
the Lie algebra 
$sl(n)$ over the Laurent polynomial ring $\mathbb{C}[\lambda,\lambda^{-1}]$.

\begin{Pro} If the linear operators (\ref{gL})  are  invariant with respect to the 
reduction group $\bbbd_N$, then they can be written in the 
form
 \begin{eqnarray}
 \hat{ L}&=&D_x + \lambda^{-1} \mathbf{u}\Delta - \lambda \Delta^{-1}\mathbf{u}\label{L1}\\
\hat{M}&=&D_t + \lambda^{-1} \mathbf{a}\Delta - \lambda
\Delta^{-1}\mathbf{a} + \lambda^{-2}  \mathbf{b}\Delta ^2 - \lambda ^2\Delta^{-2}\mathbf{b} \label{M1}
\end{eqnarray}
where $\mathbf{u},\mathbf{a},\mathbf{b}$ are diagonal matrices and 
$\Delta$ is the shift operator given by 
\begin{equation}
 \label{Delta}
 \Delta =\begin{pmatrix}
0& 1 & 0 & \dots & 0 \\
0 & 0 & 1 & \dots & 0 \\
0& 0& 0& \dots&1 \\
1 & 0&0 &  \dots& 0
\end{pmatrix}.
\end{equation}
\end{Pro}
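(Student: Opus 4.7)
The plan is to exploit the two generators $\ii$ and $s$ of $\bbbd_N$ separately: the symmetry $s$ forces each coefficient of $\lambda^k$ in $\mathbf{X}$ and $\mathbf{T}$ to live on a single shifted diagonal of the $N\times N$ matrix algebra, while $\ii$ provides the Hermitian--conjugation relation between the coefficients of $\lambda^k$ and $\lambda^{-k}$ and fixes a reality condition on the diagonal factors.

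First I would examine $s$--invariance $\hat L(\lambda)=Q\hat L(\omega^{-1}\lambda)Q^{-1}$. Because $Q$ is constant, $QD_xQ^{-1}=D_x$, so the condition reduces to $\mathbf{X}(\lambda)=Q\mathbf{X}(\omega^{-1}\lambda)Q^{-1}$. Matching coefficients of $\lambda^k$ gives $\mathbf{M}_k=\omega^{-k}Q\mathbf{M}_kQ^{-1}$. With $Q=\diag(\omega^i)$ the $(i,j)$--entry of $Q\mathbf{M}_kQ^{-1}$ equals $\omega^{i-j}(\mathbf{M}_k)_{ij}$, hence $(\mathbf{M}_k)_{ij}=0$ unless $j-i\equiv k \pmod N$. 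Thus $\mathbf{X}_0,\mathbf{T}_0$ are diagonal, while $\mathbf{U},\mathbf{A}$ are supported on the $+1$--shifted diagonal (with wrap--around), $\overline{\mathbf{U}},\overline{\mathbf{A}}$ on the $-1$--shifted diagonal, and $\mathbf{B},\overline{\mathbf{B}}$ on the $\pm 2$--shifted ones. These patterns are precisely $\mathbf{u}\Delta,\;\mathbf{a}\Delta,\;\mathbf{b}\Delta^2$ and $\Delta^{-1}\tilde{\mathbf{u}},\;\Delta^{-1}\tilde{\mathbf{a}},\;\Delta^{-2}\tilde{\mathbf{b}}$ for suitable diagonal matrices. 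This already produces the skeleton of (\ref{L1}) and (\ref{M1}).

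Next I would impose $\ii$--invariance $\hat L(\lambda)=-\hat L^{\dagger}(\lambda^{-1})$. Since $D_x^{\dagger}=-D_x$ (integration by parts), the derivative part is automatic and comparing powers of $\lambda$ in $\mathbf{X}(\lambda)=-\mathbf{X}^{\dagger}(\lambda^{-1})$ yields $\mathbf{X}_0=-\mathbf{X}_0^{\dagger}$ and $\overline{\mathbf{U}}=-\mathbf{U}^{\dagger}$, and analogously for $\hat M$. Using that $\Delta$ is a permutation matrix so $\Delta^{\dagger}=\Delta^{-1}$, one gets $\overline{\mathbf{U}}=-\Delta^{-1}\mathbf{u}^{\dagger}$; comparing with $\overline{\mathbf{U}}=\Delta^{-1}\tilde{\mathbf{u}}$ from the $s$--step forces $\tilde{\mathbf{u}}=-\mathbf{u}^{\dagger}$ and the consistency with the single diagonal label $\mathbf{u}$ used in (\ref{L1}) amounts to the reality of the diagonal entries of $\mathbf{u}$. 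The same computation for the $\lambda^{\pm 1}$ and $\lambda^{\pm 2}$ terms of $\hat M$ gives $\overline{\mathbf{A}}=-\Delta^{-1}\mathbf{a}$ and $\overline{\mathbf{B}}=-\Delta^{-2}\mathbf{b}$ with real diagonal $\mathbf{a}$ and $\mathbf{b}$.

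The main obstacle is the handling of the $\lambda^{0}$ coefficients $\mathbf{X}_0$ and $\mathbf{T}_0$: the combined $s$ and $\ii$ constraints only force them to be purely imaginary diagonal matrices, not zero. To match the announced form one must argue that such constant (in $\lambda$) terms can be removed by a constant diagonal gauge $\Psi\mapsto H\Psi$ that commutes with $Q$ and respects $\ii$; after this normalisation $\mathbf{X}_0=\mathbf{T}_0=0$, and substituting the diagonal matrices found above into the expansions of $\mathbf{X}$ and $\mathbf{T}$ reproduces exactly (\ref{L1}) and (\ref{M1}).
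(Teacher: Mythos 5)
Your use of the $s$--invariance is essentially the paper's own argument: matching powers of $\lambda$ in $\mathbf{X}(\lambda)=Q\mathbf{X}(\omega^{-1}\lambda)Q^{-1}$ and reading off which shifted diagonal each coefficient matrix may occupy is exactly how one gets $\mathbf{U}=\mathbf{u}\Delta$, $\mathbf{A}=\mathbf{a}\Delta$, $\mathbf{B}=\mathbf{b}\Delta^2$ and the diagonality of $\mathbf{X}_0,\mathbf{T}_0$. (There is a harmless sign slip in your congruence $j-i\equiv k\pmod N$ --- for the coefficient of $\lambda^{-1}$ the correct condition is $j-i\equiv 1$ --- but the support patterns you then list are the right ones.)

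The genuine gap is in your reading of the transformation $\ii$ and the resulting treatment of $\mathbf{X}_0,\mathbf{T}_0$. In the paper ${\bf d}^{\dagger}$ is the formal adjoint whose matrix part is the \emph{transpose}, not the Hermitian conjugate: $\ii$ is explicitly described as an outer automorphism of $sl(n)$ over $\mathbb{C}[\lambda,\lambda^{-1}]$, which forces $\mathbb{C}$--linearity, and the consequences are recorded in (\ref{invs}) as $\mathbf{X}_0=-\mathbf{X}_0^{\tr}$, $\overline{\mathbf{U}}=-\mathbf{U}^{\tr}$, etc. Complex conjugation enters only later through the additional symmetry $r$, which extends $\bbbd_N$ to $\bbbz_2\times\bbbd_N$ and is not part of the hypothesis here. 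With the transpose, a matrix that is simultaneously diagonal and antisymmetric vanishes identically, so $\mathbf{X}_0=\mathbf{T}_0=0$ follows at once and your ``main obstacle'' does not exist; likewise $\overline{\mathbf{U}}=-\mathbf{U}^{\tr}=-\Delta^{\tr}\mathbf{u}=-\Delta^{-1}\mathbf{u}$ with no reality condition on $\mathbf{u}$ (the proposition does not assert one). By substituting $^{\dagger}$ for $^{\tr}$ you are left with nonzero purely imaginary diagonal $\mathbf{X}_0,\mathbf{T}_0$ and are forced into a gauge transformation $\Psi\mapsto H\Psi$, which is doubly problematic: since $\mathbf{X}_0,\mathbf{T}_0$ are a priori functions of $(x,t)$, $H$ cannot be constant and its existence requires a compatibility condition $\partial_t\mathbf{X}_0=\partial_x\mathbf{T}_0$ you have not established; and even granting it, a gauge transformation changes the operators, so you would only show that the invariant operators are gauge--equivalent to the form (\ref{L1})--(\ref{M1}), which is weaker than what the proposition claims.
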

\begin{proof}
It follows from the invariance under the transformation 
$\ii$ (\ref{redgroup1}), that is,
\[ (\hat{L}(\lambda) ,\hat{M}(\lambda)) =(-\hat{L}^{\dagger} (\lambda^{-1}) 
,-\hat{M}^{\dagger}(\lambda^{-1})),\]
that 
\begin{eqnarray}\label{invs}
\overline{\mathbf{U}}=-\mathbf{U}^{\tr},\quad \overline{\mathbf{A}}=- 
\mathbf{A}^{\tr},\quad \overline{\mathbf{B}}=-\mathbf{B}^{\tr}, \quad  
\mathbf{X}_0=-\mathbf{X}_0^{\tr}, \quad  
\mathbf{T}_0=-\mathbf{T}_0^{\tr}, 
\end{eqnarray}
where $^{\tr}$ denotes matrix transposition.
The invariance under the transformation $s$ implies that 
$\mathbf{X}_0,\mathbf{T}_0$ are diagonal and the matrices 
$\mathbf{U},\mathbf{A}$ and $\mathbf{B}$ are of the form
\[ \mathbf{U}=\mathbf{u}\Delta,\quad \mathbf{A}=\mathbf{a}\Delta,\quad 
\mathbf{B}= \mathbf{b}\Delta ^2,\]
where  $\mathbf{u},\mathbf{a}$ and $\mathbf{b}$ are  diagonal matrices and $\Delta$ 
is given by (\ref{Delta}).
Combining with (\ref{invs}), we get
$\mathbf{X}_0=\mathbf{T}_0=0$ and further the expressions (\ref{L1}) and 
(\ref{M1}) in the 
statement.
\end{proof}

Let 
$\mathbf{u}=\diag(u^{(i)}), \mathbf{a}=\diag(a^{(i)})$ and  $\mathbf{b}=\diag(b^{(i)}
)$. Then the compatibility condition of Lax operators 
(\ref{L1}) and (\ref{M1}) leads to $3N$ equations 
\begin{eqnarray}
 && u^{(i)} b^{(i+1)}-b^{(i)} u^{(i+2)}=0\, , \label{lam3}\\
&& D_x(b^{(i)})+u^{(i)} a^{(i+1)}-a^{(i)} u^{(i+1)}=0\, ,\label{lam2}\\
&& D_t (u^{(i)})=D_x(a^{(i)})-u^{(i-1)} b^{(i-1)}+b^{(i)} u^{(i+2)}\label{lam1}
\end{eqnarray}
in $3N$ variables $u^{(i)}, a^{(i)}$ and $b^{(i)}$, where $i=1, \cdots, N$. In 
this paper we shall assume that all upper indices, taking value from $1$ to $N$, are counted modulo $N$, 
if not stated otherwise.
Take \begin{eqnarray}\label{u}
\mathbf{u}=\diag (\exp(\phi^{(1)}),\ldots,\exp(\phi^{(N)}))\, ,\quad 
\mathbf{a}=\diag (\theta^{(1)}\exp 
(\phi^{(1)}),\ldots,\theta^{(N)}\exp(\phi^{(N)})) .
\end{eqnarray} 
It follows from (\ref{lam3})-(\ref{lam1}) that we can set 
$b^{(i)}=\exp\left(\phi^{(i)}+\phi^{(i+1)}\right)$ and  $\sum_{i=1}^N 
\phi^{(i)}=\sum_{i=1}^N\theta^{(i)}=0$ without losing 
generality. In the variables $\phi^{(i)}$ and $\theta^{(i)}$ the system of equations 
(\ref{lam3})-(\ref{lam1}) leads to the 2-dimensional generalisation of the
Volterra system (\ref{2+1}). The corresponding operators (\ref{L1}) 
and (\ref{M1}) can be expressed as the invariant operators
under the reduction group $\bbbd_N $, namely,
\begin{eqnarray}\label{laxv}
\begin{array}{ll}
 \cL=D_{x}+U,& U= \lambda^{-1}\mathbf{u}
\Delta-\lambda \Delta^{-1} \mathbf{u},\\
\cM=D_{t}+V,& V= 
\lambda^{-2}\mathbf{u}\Delta\mathbf{u}\Delta+\lambda^{-1}\mathbf{a}
\Delta-\lambda \Delta^{-1}\mathbf{a}-\lambda^{2}  \Delta^{-1} 
\mathbf{u}\Delta^{-1}\mathbf{u} ,
\end{array}
\end{eqnarray}
where $\mathbf{u}$ and  $\mathbf{a}$ are defined by (\ref{u}) and the matrix $\Delta$ 
is given by (\ref{Delta}).
The condition of commutativity of these operators
\begin{equation}\label{zeroc}
 [\cL, \cM]=D_x(V)-D_{t}(U)+[U, V]=0
\end{equation}
leads to the $2$--dimensional generalisation of the Volterra lattice 
(\ref{2+1}) \cite{mik79, mik81}. 
This is often called a zero curvature representation or Lax representation of 
equation (\ref{2+1}).
These two operators, $ \cL$ and $\cM$, are conventionally called the Lax 
pair.

If we assume that the functions $\phi^{(k)},\theta^{(k)}$ in (\ref{u}) are 
real, then the Lax operators $\cL,\cM$ are also invariant with respect to 
transformation
\begin{equation}
 r\, :\, \cL(\lambda)\mapsto \cL^\star (\lambda^\star),\quad 
\cM(\lambda)\mapsto \cM^\star (\lambda^\star),
\end{equation}
where $^\star$ means its complex conjugate.
This transformation extends the dihedral group. The group generated by 
$s,\ii,r$ is isomorphic to $\bbbz_2\times\bbbd_N$.

\section{Rational dressing method for the generalised Volterra lattice}\label{sec3}
In this section, we use the method of rational dressing \cite{zash, mr80c:81115, 
mik81} to construct new exact solutions of (\ref{2+1}) 
starting from a known exact solution. Let us 
denote by
$U_0, V_0$ the matrices $U,V$ in which $\phi^{(i)}$ are 
replaced by the known exact solution $\phi_0^{(i)}, i=1, \cdots, N$ of (\ref{2+1}), that is,
$$
U_0= \lambda^{-1}\mathbf{u}_0\Delta-\lambda \Delta^{-1}\mathbf{u}_0, \qquad
V_0= \lambda^{-2}\mathbf{u}_0\Delta\mathbf{u}_0\Delta+\lambda^{-1}\mathbf{a}_0
\Delta-\lambda \Delta^{-1}\mathbf{a}_0-\lambda^{2}  \Delta^{-1} 
\mathbf{u}_0\Delta^{-1}\mathbf{u}_0 .
$$
The corresponding overdetermined linear system  
\begin{equation}
\label{naked}
\cL_0 \Psi_0=(D_x+U_0) \Psi_0=0, \qquad 
\cM_0 \Psi_0=(D_t+  V_0)\Psi_0=0 
\end{equation}
has a common fundamental solution $\Psi_0(\lambda,x,t)$.
Following \cite{zash, mr80c:81115} we shall 
assume that the fundamental solution $\Psi(\lambda,x,t)$ for the new 
(``dressed'') linear problems 
\begin{equation}
\label{dressed}
\cL \Psi=(D_x+U) \Psi=0, \qquad 
\cM \Psi=(D_t+ V)\Psi=0 
\end{equation}
is of the form 
\begin{eqnarray}\label{dress}
 \Psi=\Phi(\lambda) \Psi_0, \quad \det \Phi \not= 0,
\end{eqnarray}
where the dressing matrix $\Phi(\lambda)$ is assumed to be rational in the 
spectral parameter $\lambda$ and to be invariant with respect to the symmetries
\begin{eqnarray}
&& \Phi^{-1}(\lambda^{-1})=\Phi^{\tr}(\lambda);\label{Inphi1}\\
&& Q \Phi(\omega^{-1} \lambda) Q^{-1}=\Phi(\lambda). 
\label{Inphi2}
\end{eqnarray}
Conditions (\ref{Inphi1}) and (\ref{Inphi2}) guarantee that the corresponding Lax operators 
$\mathcal{L}$ and $\mathcal{M}$ are invariant under transformations
(\ref{redgroup1}) and (\ref{redgroup3}).

We are going to derive real solutions for the real equation. Thus we also 
require
\begin{equation}
\Phi^{\star}(\lambda^{\star})=\Phi(\lambda).
\label{redgroup2}
\end{equation} 
It follows from (\ref{naked}), (\ref{dressed}) and (\ref{dress}) that 
\begin{eqnarray}
&& \Phi (D_x+U_0) \Phi^{-1}=U; \label{chix}\\
&& \Phi (D_t+V_0) \Phi^{-1}=V. \label{chit}
\end{eqnarray}
These equations enable us to specify the form of the dressing matrix $\Phi$ and 
construct the corresponding ``dressed'' solution $\phi^{(i)}$ of equation 
(\ref{2+1}).

Let us consider the most
trivial case when the dressing matrix $\Phi$ does not 
depend on the spectral parameter $\lambda$. In this case the dressing matrix does not result
any new solutions.  
\begin{Pro}
Assume that $\Phi$ is a $\lambda$ independent dressing matrix for the two 
dimensional generalisation of the Volterra lattice (\ref{2+1}) and $\phi_0^{(i)}$ is a real solution. If it satisfies
 (\ref{Inphi1})--(\ref{redgroup2}),  
then the
matrix $\Phi=\pm I_N$, where $I_N$ is the $N\times N$identity matrix and the real solutions on the background $\phi_0^{(i)}$ are
$\phi^{(i)}=\phi_0^{(i)}$.
\end{Pro}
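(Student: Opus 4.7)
The plan is to use the three invariance conditions to constrain $\Phi$ to a real orthogonal diagonal matrix, then to use the dressing identities (\ref{chix})--(\ref{chit}) to deduce that $\Phi$ is in fact constant and equals $\pm I_N$. First, under the hypothesis that $\Phi$ is independent of $\lambda$, condition (\ref{Inphi1}) collapses to $\Phi^{-1}=\Phi^{\tr}$, so $\Phi$ is orthogonal; condition (\ref{Inphi2}) becomes $Q\Phi Q^{-1}=\Phi$, and since $Q=\diag(\omega,\omega^2,\ldots,\omega^N)$ has $N$ pairwise distinct eigenvalues, its commutant in the matrix algebra consists exactly of the diagonal matrices, forcing $\Phi=\diag(\epsilon_1,\ldots,\epsilon_N)$; condition (\ref{redgroup2}) then forces each $\epsilon_i$ to be real. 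A real orthogonal diagonal matrix necessarily has $\epsilon_i\in\{\pm 1\}$, so at this stage $\Phi=\diag(\epsilon_1,\ldots,\epsilon_N)$ with $\epsilon_i\in\{\pm 1\}$, a priori still depending on $x$ and $t$.

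Second, the dressing identity (\ref{chix}), viewed as an equality of first-order differential operators, is equivalent to the matrix relation $\Phi_x=\Phi U_0-U\Phi$. The left-hand side is $\lambda$-independent, whereas the right-hand side, given the explicit form of $U$ and $U_0$, is a Laurent polynomial in $\lambda$ whose only potentially nonzero coefficients lie at $\lambda^{\pm 1}$. Matching the $\lambda^{0}$ coefficient therefore yields $\Phi_x=0$, and the analogous argument applied to (\ref{chit}) yields $\Phi_t=0$. Hence each $\epsilon_i$ is in fact a constant $\pm 1$.

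Third, the vanishing of the $\lambda^{-1}$ coefficient of $\Phi U_0-U\Phi$ reads $\Phi\mathbf{u}_0\Delta=\mathbf{u}\Delta\Phi$. Reading off the $(i,i{+}1)$ entry, with indices taken modulo $N$, produces $\epsilon_i u_0^{(i)}=\epsilon_{i+1}u^{(i)}$, equivalently $u^{(i)}=(\epsilon_i/\epsilon_{i+1})\,u_0^{(i)}$. Since $u^{(i)}=\exp(\phi^{(i)})>0$ and $u_0^{(i)}=\exp(\phi_0^{(i)})>0$, the ratio $\epsilon_i/\epsilon_{i+1}$ must be positive; as the $\epsilon_j$ take values in $\{\pm 1\}$, this forces $\epsilon_i=\epsilon_{i+1}$. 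Iterating cyclically around all $N$ indices gives $\epsilon_1=\cdots=\epsilon_N$, so $\Phi=\pm I_N$ and $\phi^{(i)}=\phi_0^{(i)}$. I do not expect a substantive obstacle; the only mildly delicate points are justifying that the commutant of $Q$ is exactly the diagonal subalgebra (which relies on $\omega$ being a primitive $N$-th root of unity so that $Q$ has simple spectrum), and observing that the $\lambda^{+1}$ coefficient of $\Phi U_0-U\Phi=0$ is automatic from the $\lambda^{-1}$ identity by transposition and the relation $\Delta^{\tr}=\Delta^{-1}$.
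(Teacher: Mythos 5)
Your proposal is correct and follows essentially the same route as the paper's proof: both reduce the symmetry conditions (\ref{Inphi1})--(\ref{redgroup2}) to the statement that $\Phi$ is a real, orthogonal, diagonal (hence $\pm 1$-valued) constant matrix, and both then use the $\lambda^{-1}$ relation $\Phi\mathbf{u}_0\Delta=\mathbf{u}\Delta\Phi$ together with positivity of $\exp(\phi^{(i)})$ and $\exp(\phi_0^{(i)})$ to force all diagonal signs to coincide. You merely spell out the Laurent-coefficient matching and the commutant argument that the paper leaves implicit.
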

\begin{proof} 
Under the assumption that the dressing matrix $\Phi$ is
independent of the spectral parameter $\lambda$, it follows from (\ref{chix}) 
and (\ref{chit}) that
\begin{eqnarray}\label{consd}
D_x \Phi=0; \ D_t \Phi=0;\ \Phi \mathbf{u}_0 \Delta=\mathbf{u} \Delta \Phi; \ 
\Phi \Delta^{-1} \mathbf{u}_0 =\Delta^{-1} \mathbf{u} \Phi;\  \Phi \mathbf{a}_0 
\Delta=\mathbf{a} \Delta \Phi; \ 
\Phi \Delta^{-1} \mathbf{a}_0 =\Delta^{-1} \mathbf{a} \Phi.
\end{eqnarray}
It is obvious that the matrix $\Phi$ is independent
of $x$ and $t$.
Since $\Phi$ satisfies (\ref{Inphi1})--(\ref{redgroup2}) , we 
deduce that matrix $\Phi$ is real,  $\Phi \Phi^{\tr}=I_N$ and $\Phi$ is diagonal. 
Thus the constant matrix $\Phi$ has $\pm 1$ on the diagonal.
Substituting such $\Phi$ into (\ref{consd}), we get $\Phi=\pm I_N$ and $\phi^{(i)}=\phi_0^{(i)}$
since both $\phi^{(i)}$ and $\phi_0^{(i)}$ are real.
\end{proof}

A $\lambda$--dependent dressing matrix $\Phi(\lambda)$, which is
invariant with respect to the symmetries (\ref{Inphi1})--(\ref{redgroup2}) has 
poles at the orbits of 
the reduction group.
Simplest ``one soliton'' dressing corresponds to the cases when the matrix
$\Phi(\lambda)$ has only simple poles belonging to a single 
orbit. 

Notice that if  $\Phi(\lambda)$ is invariant under the reduction group, so is $\Phi^{-1}(\lambda)$.  Instead of specifying the poles for 
$\Phi(\lambda)$, we first specify the poles for $\Phi^{-1}$, and then 
determine $\Phi$ from the relation (\ref{Inphi1}).
If $\Phi^{-1}(\lambda)$ has a pole at the
point $\mu$, then by the second relation (\ref{Inphi2}) (for $\Phi^{-1}$) it 
must also have poles at
the points $\omega^{-1}\mu$, $\omega^{-2}\mu$, $\dots$, $\omega^{-(N-1)}\mu$.
Due to (\ref{redgroup2}),
there are two non-trivial cases:
\begin{enumerate}
 \item[(1)] The matrix $\Phi^{-1}(\lambda)$ has $N$ poles: 
 \begin{itemize}
  \item[(i)] for arbitrary $N$ poles at $ \omega^{-k} \mu,\ 
k=0, \cdots, N-1,\ \mu\ne 0,\ \mu\neq \pm1,\ \mu=\mu^\star;$
\item[(ii)] when $N$ is even, i.e. $N=2m$, poles at $ \omega^{-k} \mu,\ 
k=0, \cdots, N-1,\ \mu=\nu \exp(\frac{\pi{\rm i}}{N}),  \nu\in \bbbr, 
\nu\notin \{\pm 1, 0\} ;$
 \end{itemize}
\item[(2)] The matrix $\Phi^{-1}(\lambda)$ has $2N$ complex poles at $\omega^{-k} 
\mu$ and
$\omega^{-k} \mu^\star$, where $k=0,\cdots, N-1$ and $$
\mu\in\bbbc,\ |\mu|\ne 1,\ \mu\ne \omega^k  \mu^\star,\ k=1,\ldots,N.$$ 
\end{enumerate} 
Note that when $N=2m+1$ is odd, the case (i) in (1) includes the case (ii) 
since $$\omega^n \mu=\omega^l \nu \exp(\frac{\pi{\rm i}}{N})=-\nu\in \bbbr.$$
The extra conditions on $\mu$ is to ensure that the poles for 
$\Phi$ and $\Phi^{-1}$ are distinct.
These cases correspond to the ``kink'' and ``breather'' solutions respectively. 


The explicit forms of the matrix $\Phi^{-1}(\lambda)$ corresponding the above two 
cases
and invariant with respect to the symmetries (\ref{Inphi2}) and  (\ref{redgroup2}) are 
\begin{eqnarray*}
({\rm 1})&& ({\rm i})\ \ 
\Phi^{-1}(\lambda)=C+\sum_{k=0}^{N-1}\frac{Q^{-k} A 
Q^{k}}{\lambda 
\omega^k- \mu}, \quad A=A^\star, \quad \mu=\mu^\star, \ \mu\ne 0,\ \mu\neq \pm1; \\
&& ({\rm ii})\ \ \Phi^{-1}(\lambda)=C+\sum_{k=0}^{2m-1}\frac{Q^{-k} A 
Q^{k}}{\lambda 
\omega^k- \mu}, \quad N=2m,\ A^\star=\omega^{-1} Q^{-1} A Q, \ \mu=\nu 
\exp(\frac{\pi{\rm i}}{2m}), \ \nu\in \bbbr, \nu\notin \{\pm 1, 0\};\\
({\rm 2})&&\Phi^{-1}(\lambda)=C+\sum_{k=0}^{N-1}\left(\frac{Q^{-k} A 
Q^{k}}{\lambda \omega^k- \mu}+\frac{Q^{-k} A^\star
Q^{k}}{\lambda \omega^k- \mu^\star}\right),\quad |\mu|\ne 1,\ \mu\ne \omega^k  
\mu^\star,\ k=1,\ldots,N, 
\end{eqnarray*}
where $C$ and $A$ are $\lambda$-independent matrices of size $N\times N$. 
Moreover, to satisfy (\ref{Inphi2}) and (\ref{redgroup2}), we have 
$C=QCQ^{-1}$ implying $C$ is diagonal and $C=C^\star$. Hence we assume that
\begin{eqnarray}\label{matc}
 C=\diag(c_1, \cdots, c_N),
\end{eqnarray}
where $c_i, i=1,\cdots, N$ are real functions of $x$ and $t$.

We now derive the conditions on the matrices $A$ and $C$ such that $\Phi^{-1}(\lambda)$ satisfies (\ref{Inphi1}). In this case, we have
$\Phi(\lambda)=\left(\Phi^{-1}(\lambda^{-1})\right)^{\tr}$. It follows that
\begin{eqnarray}
({\rm 1})&& \Phi(\lambda)=C+\sum_{k=0}^{N-1}\frac{Q^{k} A^{\tr} 
Q^{-k}}{\lambda^{-1} 
\omega^k- \mu}, \quad A=A^\star, \ \mu=\mu^\star,
 \mbox{or} \ N=2m,\ \mu=\nu 
\exp(\frac{\pi{\rm i}}{N}),
 A^\star=\omega^{-1} Q^{-1} A Q ; \label{phik}\\
({\rm 2})&&\Phi(\lambda)=C+\sum_{k=0}^{N-1}\left(\frac{Q^{k} A^{\tr} 
Q^{-k}}{\lambda^{-1} \omega^k- \mu}+\frac{Q^{k} A^{\star\tr}
Q^{-k}}{\lambda^{-1} \omega^k- \mu^\star}\right).\label{bphi}
\end{eqnarray}

\begin{Pro}\label{pro0}
Let $I_N$ denote the $N\times N$ identity matrix.
The dressing matrix satisfies (\ref{Inphi1}) if and only
if matrix $A$ 
and the real diagonal matrix $C$ satisfy the relations:
\begin{eqnarray}
&&\lim_{\lambda \to\infty} \Phi(\lambda) =C^{-1}; \qquad
\Phi(\mu) A=0.\label{cond}
\end{eqnarray}
\end{Pro}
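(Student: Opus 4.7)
The plan is to treat the matrix-valued rational function $M(\lambda):=\Phi(\lambda)\Phi^{-1}(\lambda)$ and show that the relation (\ref{Inphi1}), which is nothing but the statement that the ansatz $\Phi^{-1}(\lambda)$ really is the inverse of the ansatz $\Phi(\lambda)$ defined from it via $\Phi(\lambda)=(\Phi^{-1}(\lambda^{-1}))^{\tr}$, amounts to $M(\lambda)\equiv I_N$. Since $M(\lambda)$ is rational, this identity is equivalent to two things: $M$ is regular at every finite pole, and $M(\infty)=I_N$. My goal is to match these two requirements exactly with the two relations in (\ref{cond}).

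First, for the behaviour at infinity, each partial-fraction summand in the ansatz for $\Phi^{-1}(\lambda)$ (both in (1) and in (2)) decays like $1/\lambda$, so $\lim_{\lambda\to\infty}\Phi^{-1}(\lambda)=C$. Consequently $M(\infty)=\Phi(\infty)\cdot C$, and requiring $M(\infty)=I_N$ is equivalent to $\lim_{\lambda\to\infty}\Phi(\lambda)=C^{-1}$, which is the first relation in (\ref{cond}). Note that this is a non-trivial constraint since, from (\ref{phik})--(\ref{bphi}), $\lim_{\lambda\to\infty}\Phi(\lambda)=C-\mu^{-1}\sum_{k}Q^{k}A^{\tr}Q^{-k}$ (plus the $\mu^{\star}$ contribution in the breather case), relating $A$ and $C$.

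Second, for the pole at $\lambda=\mu$ of $\Phi^{-1}(\lambda)$ (the $k=0$ summand), the residue is $A$, and $\Phi(\mu)$ is finite under the standing assumptions on $\mu$. Hence the residue of $M$ at $\mu$ is $\Phi(\mu)A$, and its vanishing gives exactly the second relation in (\ref{cond}). To handle the remaining poles at $\omega^{-k}\mu$ (for $k=1,\ldots,N-1$), I would invoke the reduction-group identity $\Phi(\omega^{-k}\mu)=Q^{-k}\Phi(\mu)Q^{k}$ inherited from (\ref{Inphi2}); combined with the residue $\omega^{-k}Q^{-k}AQ^{k}$ of the $k$-th summand of $\Phi^{-1}$, this produces a residue $\omega^{-k}Q^{-k}\Phi(\mu)AQ^{k}$, which vanishes iff $\Phi(\mu)A=0$. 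The poles of $\Phi(\lambda)$ (on the orbit of $\mu^{-1}$) I would handle by transposing the identity $MF=I_N$ and using $(\Phi^{-1})^{\tr}(\lambda)=\Phi(\lambda^{-1})$, $\Phi^{\tr}(\lambda)=\Phi^{-1}(\lambda^{-1})$ built into the ansätze: this turns the condition at $\omega^{k}\mu^{-1}$ into the transpose of the condition at $\omega^{-k}\mu$, so again $\Phi(\mu)A=0$ suffices. In the breather case (\ref{bphi}) the extra poles at $\omega^{-k}\mu^{\star}$ are covered automatically because (\ref{redgroup2}) gives $\Phi(\mu^{\star})A^{\star}=(\Phi(\mu)A)^{\star}$; in case (ii) the twisted reality $A^{\star}=\omega^{-1}Q^{-1}AQ$ feeds into the same chain of $Q$-conjugations and reduces again to the single equation $\Phi(\mu)A=0$.

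Having established equivalence of the two items in (\ref{cond}) with the pole-regularity and infinity conditions on $M$, the converse is automatic: a rational $N\times N$-valued function that is regular on $\bbbc\cup\{\infty\}$ and equals $I_N$ at infinity is identically $I_N$, so (\ref{Inphi1}) holds. The main obstacle I anticipate is the bookkeeping of the $Q^{\pm k}$ and $\omega^{\pm k}$ factors so that every pole contribution collapses to a single conjugate of $\Phi(\mu)A$; once this is carried out uniformly in the three sub-cases (1)(i), (1)(ii) and (2), the argument is essentially a single-pole computation amplified by reduction-group symmetry.
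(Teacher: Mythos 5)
Your proposal is correct and follows essentially the same route as the paper: both treat $\Phi(\lambda)\Phi^{-1}(\lambda)$ as a rational matrix function, extract the first condition from the value at $\lambda=\infty$ and the second from the residue at $\lambda=\mu$, and dispose of the remaining residues by the reduction-group symmetry. Your explicit unpacking of that last step (the $Q$-conjugation identity $\Phi(\omega^{-k}\mu)=Q^{-k}\Phi(\mu)Q^{k}$ for the orbit of $\mu$, the transpose relation $M^{\tr}(\lambda)=M(\lambda^{-1})$ for the poles of $\Phi$, and complex conjugation for the breather orbit) is exactly what the paper compresses into the phrase ``manifest invariance with respect to the reduction group.''
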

\begin{proof} We verify that the above $\Phi(\lambda)$ is indeed the inverse matrix of $ \Phi^{-1}(\lambda)$
by checking $\Phi(\lambda)\Phi^{-1}(\lambda)=I_N$.
The product is a rational matrix function of $\lambda$. Taking the limit at $\lambda =\infty$ we obtain $\lim_{\lambda \to\infty} \Phi(\lambda) C=I_N$,
which implies the first equation in (\ref{cond}).
Under the assumptions on $\mu$, the poles of both $\Phi$ and $\Phi^{-1}$ are simple and distinct. 
Therefore, $\Phi(\lambda)\Phi^{-1}(\lambda)$ has $2N$ simple poles. Requesting the vanishing of the residue at $\lambda=\mu$ we 
obtain the second equation  in 
(\ref{cond}). The residues at all other points of the reduction group orbit 
will vanish due to the manifest invariants of the expression with respect to 
the reduction group.
\end{proof}

We now investigate the conditions (\ref{chix}) and (\ref{chit}) for
$\Phi(\lambda)$ which follow from the fact that it is a dressing matrix.
Notice that $\Phi, U_0$ and $\Phi^{-1}$  have distinct simple poles. Thus the left hand side of (\ref{chix}) has simple poles only.
We first compare the residues at the pole $\mu$. It follows that
\begin{eqnarray}\label{xker}
 \lim_{\lambda\to\mu} (\lambda-\mu)\Phi (D_x+U_0(\lambda)) \Phi^{-1}=\Phi(\mu) (D_x+U_0(\mu)) A=0 .
\end{eqnarray}
Thus  $(D_x+U_0(\mu))A\in\ker \Phi(\mu)$.  In a similar way it follows from the condition (\ref{chit}) that 
\begin{eqnarray}\label{tker}
\Phi(\mu) (D_t+V_0(\mu)) A=0 , \quad \mbox{that is},\quad (D_t+V_0(\mu))A\in\ker \Phi(\mu).
\end{eqnarray}
We compute the residue at $\lambda=\infty$ of both sides of (\ref{chix}) and we have
\begin{eqnarray*}
 \lim_{\lambda\to\infty} \frac{1}{\lambda}\Phi(\lambda) (D_x+U_0(\lambda)) \Phi^{-1}(\lambda)
 =-\lim_{\lambda\to\infty} \Phi(\lambda)\Delta^{-1}\mathbf{u}_0 C =-\Delta^{-1} \mathbf{u} .
\end{eqnarray*}
Using (\ref{cond}), we have
\begin{eqnarray}\label{infty}
 C^{-1} \Delta^{-1}\mathbf{u}_0 C =\Delta^{-1} \mathbf{u} .
\end{eqnarray}
This formula provides the relation between $\mathbf{u}_0$ and $\mathbf{u}$.
However, it is required to determine the diagonal matrix $C$ in the dressing matrix $\Phi$, which
depends on the choice of the form for $\Phi(\lambda)$. We will determine it 
when we compute the kink and breather solutions.

In the following two sections, we construct the exact solutions starting with the trivial solution $\phi^{(i)}_0=0, i=1, \cdots, N$
for the equation (\ref{2+1}). In this case, $\mathbf{u}_0=I_N$ and $\mathbf{a}_0=0$. Thus we have
$$\cL_0=D_x+\lambda^{-1} \Delta-\lambda \Delta^{-1}; \qquad
\cM_0=D_t+\lambda^{-2} \Delta^2-\lambda^2 \Delta^{-2} .
$$
It is easy to see that in this case the fundamental solution for (\ref{naked})
is
\begin{eqnarray}\label{psi0}
\Psi_0(x,t,\lambda)=\exp((-\lambda^{-1} \Delta+\lambda \Delta^{-1}) 
x-(\lambda^{-2} \Delta^2-\lambda^2 \Delta^{-2}) t) .
\end{eqnarray}
The matrix $\Psi_0(x,t,\lambda)$ obviously 
satisfies the reduction group symmetry conditions  
(\ref{Inphi1})--(\ref{redgroup2}).

On the trivial background (\ref{infty}) becomes $C^{-1} \Delta^{-1} 
C =\Delta^{-1} \mathbf{u}$. It follows that
\begin{eqnarray}\label{sol}
 \exp(\phi^{(j)})= \frac{c_j}{c_{j+1}}.
\end{eqnarray}
We will use this to construct solutions for (\ref{2+1}) on the trivial background 
later.

\subsection{Kink solutions}\label{Kink}
In this section, we derive the explicit formula for kink solutions of 
arbitrary ranks. As we discussed before, a kink solution for equation 
(\ref{2+1}) corresponds to the invariant dressing matrix 
with $N$ simple poles. It is of the form (\ref{phik}). There is a 
difference between the dimension $N$ being even or odd. If $N$ is odd, there is 
only one case when $A=A^\star$, $\mu\in  \mathbb{R}$ and $\mu\notin\{\pm1,0\}$. 
If $N$ is even, there 
is an extra case when $\mu=\nu \exp(\frac{\pi \i}{N})$ with $\nu \in\bbbr$ and 
$A^\star=\omega^{-1} Q^{-1} A Q$. This difference is caused by the real requirement (\ref{redgroup2}).
Hence, we first derive the expressions for $\mu$ and $A$ and then add the conditions for them.

For all cases,  the matrix $C$ defined by (\ref{matc}) is diagonal with
 real functions  $c_i, i=1,\cdots, N$ on the diagonal.
 Moreover, it follows from Proposition \ref{pro0} that
\begin{eqnarray}
&&\lim_{\lambda \to\infty} \Phi(\lambda)=C-\frac{1}{\mu}\sum_{k=0}^{N-1}Q^{k} 
A^{\tr} Q^{-k}
=C-\mu^{-1} N A_{\diag}=C^{-1};\label{AC1}\\
&& \Phi(\mu)A=(C+\sum_{k=0}^{N-1}\frac{Q^{k} A^{\tr} Q^{-k}}{\mu^{-1} \omega^k- 
\mu})A=0. \label{AC2}
\end{eqnarray}
\begin{Pro}\label{pro3}
If the matrix $A$ is nondegenerate in the dressing 
matrix 
(\ref{phik}), the real solutions for (\ref{2+1}) are
$\phi^{(i)} =0$ on the trivial background  
$\phi_0^{(i)}=0$, where $i=1, \cdots, N$.
\end{Pro}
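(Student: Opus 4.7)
The plan is to use the invertibility of $A$ together with the $\bbbd_N$ symmetry to rigidify $\Phi(\lambda)$ into a scalar rational factor times a constant matrix, and then pin that constant down from the boundary values $\Phi(0) = C$ and $\Phi(\infty) = C^{-1}$. First, invertibility of $A$ and the relation $\Phi(\mu)A = 0$ in Proposition~\ref{pro0} give $\Phi(\mu) = 0$, and the invariance (\ref{Inphi2}) then propagates this to $\Phi(\omega^{-k}\mu) = Q^{-k}\Phi(\mu)Q^{k} = 0$ for every $k = 0,\ldots,N-1$, so $\Phi$ vanishes on the entire reduction-group orbit of $\mu$.

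Next, I would put each entry $\Phi_{ij}(\lambda)$ of (\ref{phik}) over the common denominator $\prod_{k=0}^{N-1}(\omega^k - \mu\lambda) = (-\mu)^N(\lambda^N - \mu^{-N})$, producing a rational function whose numerator $P_{ij}(\lambda)$ is a polynomial of degree at most $N$. Because the $N$ points $\{\omega^{-k}\mu\}_{k=0}^{N-1}$ are exactly the $N$-th roots of $\mu^N$ and $P_{ij}$ must vanish at all of them, each $P_{ij}$ is a scalar multiple of $\lambda^N - \mu^N$. Consequently
\[
\Phi(\lambda) \,=\, M\cdot\frac{\lambda^N - \mu^N}{(-\mu)^N(\lambda^N - \mu^{-N})}
\]
for some constant matrix $M$ independent of $\lambda$.

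Finally, reading off $\Phi(0) = C$ directly from (\ref{phik}) (every pole term vanishes at $\lambda = 0$) and $\Phi(\infty) = C^{-1}$ from (\ref{AC1}) gives two expressions for $M$ which combine into $C^2 = \mu^{2N}I_N$. Since $\mu^N$ is real in both admissible cases---trivially in case~(i), and as $\mu^N = -\nu^N$ in case~(ii)---and $C$ is real and diagonal, the diagonal entries satisfy $c_j = \pm|\mu^N|$. The reality constraint $\exp\phi^{(j)} = c_j/c_{j+1} > 0$ coming from (\ref{sol}), applied cyclically in $j$, then forces all $c_j$ to share the same sign, so $c_j = c_{j+1}$ and $\phi^{(j)} = 0$.

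The main technical care is in the middle step: one has to check that the $N$ zeros $\omega^{-k}\mu$ are disjoint from the $N$ poles $\omega^k\mu^{-1}$---which follows from $\mu^2 \neq \omega^k$ under the standing restrictions on $\mu$---and that the reality condition (\ref{redgroup2}) does not manufacture additional zeros, since in case~(ii) it acts as an involution on the orbit $\{\omega^{-k}\mu\}$ rather than doubling it. Note that this argument uses only the algebraic constraints of Proposition~\ref{pro0} and the reduction-group symmetries, not the differential identities (\ref{chix}), (\ref{chit}); invertibility of $A$ alone is enough to collapse the dressing.
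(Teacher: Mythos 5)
Your proof is correct, and it reaches the paper's key identity $C^2=\mu^{2N}I_N$ by a genuinely different mechanism. The paper also starts from $\Phi(\mu)A=0$ with $A$ invertible, hence $\Phi(\mu)=0$, but then evaluates the single matrix equation $C+\sum_{k}Q^kA^{\tr}Q^{-k}/(\mu^{-1}\omega^k-\mu)=0$ explicitly using the summation identity (\ref{old}): the off-diagonal sums are nonzero multiples of the entries of $A^{\tr}$, forcing $A$ to be diagonal and proportional to $C$, and substituting back into (\ref{AC1}) yields $C^2=\mu^{2N}I_N$. You instead propagate the vanishing to the whole orbit $\{\omega^{-k}\mu\}$ via (\ref{Inphi2}) and run a Liouville-type degree count: each entry of $\Phi$ is a ratio of a degree-$\le N$ polynomial by $(-\mu)^N(\lambda^N-\mu^{-N})$, so vanishing at the $N$ distinct roots of $\lambda^N-\mu^N$ rigidifies $\Phi$ to a constant matrix times a fixed scalar rational function, and the two boundary values $\Phi(0)=C$, $\Phi(\infty)=C^{-1}$ give the same identity. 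Your route buys independence from the explicit summation lemma and never needs to establish that $A$ is diagonal; the paper's route is shorter once that lemma is in hand and additionally pins down $A$ itself (namely $A=\frac{\mu^{2N}-1}{N\mu^{2N-1}}C$). You are also right to flag the two hypotheses your argument silently uses --- that $\mu$ is not itself a pole of $\Phi$ (equivalently $\mu^{2N}\neq 1$, guaranteed by the standing restrictions on $\mu$ in both cases) and that the $N$ zeros are distinct --- and your endgame ($\mu^N$ real, $c_j=\pm|\mu|^N$, positivity of $c_j/c_{j+1}$ from (\ref{sol})) coincides with the paper's, with the minor bonus that you correctly write $c_j=\pm|\mu^N|$ where the paper has the slip $c_i=\pm\mu$.
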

\begin{proof} If $\det A\ne 0$, from (\ref{AC2}) we get
$$
C+\sum_{k=0}^{N-1}\frac{Q^{k} A^{\tr} Q^{-k}}{\mu^{-1} \omega^k- \mu}=0,
$$
which implies that matrix $A$ is diagonal since $C$ is 
diagonal, and thus 
$$
C =-\sum_{k=0}^{N-1}\frac{Q^{k} A^{\tr} Q^{-k}}{\mu^{-1} \omega^k- \mu}=-\mu 
\sum_{k=0}^{N-1}  \frac{1}{\omega^{k}-\mu^2} A=\frac{N \mu^{2N-1}}{\mu^{2N}-1} 
A\, .
$$
Here we used Lemma 2 proved in the appendix of paper \cite{MPW} and stating that for $x^N\ne 1$ and $\omega=\exp(\frac{2\pi i}{N})$
 \begin{equation}
\label{old}
  \sum_{j=0}^{N-1} \frac{\omega^{lj}}{x-\omega^j}=\frac{N
x^{(l-1)\!\!\! \mod\! N}}{x^N-1}.
 \end{equation}

Substituting this into (\ref{AC1}), we have 
$$
C^2-\frac{N}{\mu}AC=\frac{1}{\mu^{2N}}C^2=I_N, \quad \mbox{that is,} \quad 
C^2=\mu^{2N}I_N.
$$
So we have $c_i=\pm \mu$ when $\mu\in \mathbb{R}$, or $c_i=\pm \nu $ when 
$\mu=\nu\exp(\frac{\pi \i}{N})$ with $\nu\in\bbbr$.  It follows from 
(\ref{sol}) that
$$\exp(\phi^{(i)})= \frac{c_i}{c_{i+1}}= 1 
$$
since $\phi^{(i)}\in \bbbr$.
Thus we obtain the trivial solutions for $\phi^{(i)}$ as given in the 
statement.
\end{proof}
In order to construct solutions which depend on $x$ and $t$, we consider the case where matrix $A$ is of rank $r\leq N-1$,
and hence can be represented in the form 
$$
A=\bn \bm^{\tr},
$$
where $ \bn $ and $\bm$ are both $N\times r$ matrices of rank $r$. We define the rank of the kink solutions as the rank of matrix $A$ in the dressing matrix.
In this case, equation (\ref{AC2}) becomes
\begin{eqnarray}
(C+\sum_{k=0}^{N-1}\frac{Q^{k} \bm \bn^{\tr} Q^{-k}}{\mu^{-1} \omega^k- 
\mu})\bn=0 \label{mn} 
\end{eqnarray}
since $\bm$ is $N\times r $ matrix of rank $r$. We can use it to solve for $\bm$ in terms of $\bn$. Further we can also determine matrix $C$ in terms of $A$
using (\ref{AC1}).
\begin{Rem}\label{rem1}
The dressing matrix $\Phi(\lambda)$ (\ref{phik}) is 
parametrised by a matrix $\bn$ lying on a Grassmannian. Indeed, assume that we 
get $\bm=F(\bn)$ from (\ref{mn}). If we make a change $\hat \bn=\bn W,$ where 
$W$ is an invertible $r\times r$ matrix, the corresponding solution $\hat 
\bm=F(\bn) (W^{-1})^{\tr}= \bm (W^{-1})^{\tr}$. Therefore the matrix
 $$\hat A=\hat \bn \hat \bm^{\tr}=\bn W W^{-1} \bm^{\tr}=\bn \bm^{\tr}=A.$$
\end{Rem}
It follows from (\ref{mn}) that $\bn \in\ker \Phi(\mu)$. From (\ref{xker}) we 
get
$$
0=\Phi(\mu)  (D_x+U_0(\mu))\bn \bm^{\tr}=\Phi(\mu)  (D_x+U_0(\mu))(\bn) 
\bm^{\tr}+\Phi(\mu)(\bn \bm_x^{\tr}).
$$
This implies that $(D_x+U_0(\mu))(\bn) \in \ker \Phi(\mu)$. Thus there exists a 
scalar function $\gamma(x,t)$ such that
\begin{eqnarray}\label{nxker}
 (D_x+U_0(\mu))(\bn)=\gamma(x,t) \bn.  
\end{eqnarray}
Similarly we can show that there exists a scalar function  $\delta(x,t)$ such 
that 
\begin{eqnarray}\label{ntker}
(D_t+V_0(\mu))(\bn)=\delta(x,t)\bn .
\end{eqnarray}
Compatibility of the operators $D_x+U_0$ and $D_t+V_0$ implies that 
$\gamma_t=\delta_x$.  So let $\gamma=\eta_x$ and $\delta=\eta_t$, where $\eta$ 
is a potential function, whereupon we can deduce that
\[\bn = \exp(\eta)\Psi_0(x, t, \mu)\bn_0,\]
where $\bn_0$ is a constant $N\times r$ matrix and $\Psi_0(x, t, \mu)$ is the 
fundamental solution of the linear differential equations defined by  
$\cL_0(\mu)\Psi_0=0$ and $\cM_0(\mu) \Psi_0=0$. According to Remark \ref{rem1}, 
the 
dressing matrix $\Phi$ is invariant under a rescaling of the matrix 
$\bn$, we can simply take
\begin{eqnarray}\label{bn}
\bn = \Psi_0(x, t, \mu)\bn_0 
\end{eqnarray}
In what follows, we explicitly construct kink solutions of arbitrary ranks.
\subsubsection{Rank $1$ kink solutions}
Here we consider the matrix ${A}=\mathbf{n}\mathbf{m}^{\tr}$, where $\mathbf{n}$ 
and $\mathbf{m}$ are vectors. As we discussed before, we
first solve for $\mathbf{m}$ using the equation (\ref{mn}), that is,
\begin{eqnarray}\label{mn1}
(C+\sum_{k=0}^{N-1}\frac{Q^{k} \bm \bn^{\tr} Q^{-k}}{\mu^{-1} \omega^k- 
\mu})\bn=0  .
\end{eqnarray}
We then determine the diagonal matrix $C$ and write down the rank $1$ solutions 
as follows:
\begin{Lem}\label{lemk}
Let matrix $A$ be a bi-vector and ${A}=\mathbf{n}\mathbf{m}^{\tr}$. If the dressing matrix given by (\ref{phik}) satisfies
(\ref{Inphi1}), then the entries for diagonal matrix $C$ are given by
\begin{eqnarray}\label{mac}
 c_i^2=\mu^2\frac{\tau_{i-1}}{\tau_i},
\quad \tau_i=\frac{1}{\mu^{2N}-1}\sum_{l=1}^N 
n_l^2   \mu^{2 \{(i-l) \mod N\}}
\end{eqnarray}
where $n_i$ are the components of the vector $\bn$.
\end{Lem}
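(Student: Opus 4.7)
The plan is to specialise the two matrix equations of Proposition~\ref{pro0}, namely (\ref{AC1}) and (\ref{AC2}), to the rank-one ansatz $A=\bn\bm^{\tr}$ and reduce them componentwise to a scalar system for the unknowns $c_i$ and $m_i$, with the $n_i$ regarded as given parameters. The condition (\ref{AC2}) should determine $\bm$ linearly in terms of $\bn$ and $C$, whereupon (\ref{AC1}) fixes $C$ in the closed form asserted.

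First I would unpack (\ref{AC2}). Since $\bm^{\tr}\neq 0$, the relation $\Phi(\mu)\bn\bm^{\tr}=0$ is equivalent to $\Phi(\mu)\bn=0$. A direct calculation gives $(Q^{k}A^{\tr}Q^{-k})_{ij}=\omega^{k(i-j)} m_i n_j$, so the $i$-th component of $\Phi(\mu)\bn=0$ is
\[
c_i n_i \;+\; m_i\sum_{j=1}^{N} n_j^{2}\sum_{k=0}^{N-1}\frac{\omega^{k(i-j)}}{\mu^{-1}\omega^{k}-\mu} \;=\; 0.
\]
The inner $k$-sum is evaluated by the identity (\ref{old}) with $x=\mu^{2}$ and $l=i-j$, after factoring $\mu^{-1}$ from the denominator. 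Collecting terms and using the definition of $\tau_{i-1}$ in the statement, this collapses to the single scalar relation $c_i n_i = N\mu\, m_i\,\tau_{i-1}$, and hence $m_i = c_i n_i/(N\mu\,\tau_{i-1})$.

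Next I would turn to (\ref{AC1}). Because $(Q^{k}A^{\tr}Q^{-k})_{ij}=\omega^{k(i-j)} m_i n_j$, the orbit sum $\sum_{k}Q^{k}A^{\tr}Q^{-k}$ collapses to $N\cdot\diag(A)$, so the $i$-th diagonal entry of (\ref{AC1}) reads $c_i^{2}-\mu^{-1}N n_i m_i c_i = 1$. Substituting the expression for $m_i$ from the previous step yields $c_i^{2}\bigl(\mu^{2}\tau_{i-1}-n_i^{2}\bigr) = \mu^{2}\tau_{i-1}$. The proof is then closed by the telescoping identity $\tau_i = \mu^{2}\tau_{i-1}-n_i^{2}$, which I would verify term by term: for $l\neq i$ one has $\{(i-l)\bmod N\}=\{(i-1-l)\bmod N\}+1$, so the $l$-th term of $\mu^{2}\tau_{i-1}$ matches that of $\tau_i$ exactly, while for $l=i$ the exponent wraps around from $N$ to $0$, producing a discrepancy whose contribution (after the normalising factor $(\mu^{2N}-1)^{-1}$) is precisely $n_i^{2}$.

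The main obstacle is the bookkeeping of the $\bmod\, N$ exponents, which intervenes twice: once when applying the residue identity (\ref{old}) to evaluate the reduction-group orbit sums at $\lambda=\mu$, and a second time in verifying the telescoping relation $\mu^{2}\tau_{i-1}-\tau_i = n_i^{2}$. The rest is a linear-algebraic rearrangement of the rank-one ansatz, and no additional structural input beyond Proposition~\ref{pro0} and the identity (\ref{old}) is required.
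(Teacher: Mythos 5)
Your proposal is correct and follows essentially the same route as the paper: reduce (\ref{AC2}) to $\Phi(\mu)\bn=0$ to solve for $m_i$ via the orbit-sum identity (\ref{old}), then feed this into the diagonal of (\ref{AC1}) and close with the telescoping relation $\tau_i=\mu^2\tau_{i-1}-n_i^2$. The only cosmetic differences are that you work componentwise where the paper packages the orbit sum into a diagonal matrix $W$, and that you spell out the telescoping identity which the paper uses silently (it appears explicitly only in the rank-$r$ version, Lemma \ref{lemkr}).
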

\begin{proof}
Under the assumption, we have that
$\mathbf{n}^{\tr}Q^{-k}\mathbf{n}$ is a
scalar function. So the matrix 
\[W = \sum_{k=0}^{N-1}\frac{Q^{k} \bn^{\tr} Q^{-k} \bn}{\mu^{-1} \omega^k- \mu}
=\sum_{k=0}^{N-1}\frac{\bn^{\tr} Q^{-k} \bn Q^{k} }{\mu^{-1} \omega^k- \mu}\]
is diagonal with the entries on the diagonal being
$$
W_{ii}=\sum_{k=0}^{N-1}\sum_{l=1}^N n_l^2 \omega^{-l k} \frac{ \omega^{i k} }{\mu^{-1} \omega^k- \mu}
=\mu\sum_{l=1}^N n_l^2 \sum_{k=0}^{N-1} \frac{ \omega^{(i-l) k} }{ \omega^k- \mu^2}
=-\mu\sum_{l=1}^N n_l^2  \frac{N \mu^{2 \{(i-l-1) \mod N\}}}{\mu^{2N}-1},
$$
where we used the  (\ref{old}). So $W$ is invertible since $\mu\neq 0$ and $|\mu|\neq 1$.
Substituting it into (\ref{mn1}), we get the vector $\bm$ with components
\begin{eqnarray}\label{m}
m_i= \frac{\mu^{2N}-1}{\mu N}\frac{c_i n_i}{\sum_{l=1}^N n_l^2   \mu^{2 
\{(i-l-1) \mod N\}}} .
\end{eqnarray}
The matrix $C$ can be determined using the equation (\ref{AC1}), which is equivalent to
\[c_i - c_i^{-1} = \mu^{-1} N m_i n_i=\frac{\mu^{2N}-1}{\mu^2}\frac{c_i n_i^2}{\sum_{l=1}^N n_l^2   \mu^{2 \{(i-l-1) \mod N\}}}.\]
This leads to
\begin{eqnarray*}
 c_i^2=\frac{\mu^2 \sum_{l=1}^N n_l^2   \mu^{2 \{(i-l-1) \mod N\}}}{\mu^2 
\sum_{l=1}^N n_l^2   \mu^{2 \{(i-l-1) \mod N\}}-(\mu^{2N}-1) 
n_i^2}=\mu^2\frac{\tau_{i-1}}{\tau_i},
\end{eqnarray*}
where $\tau_i$ is defined by (\ref{mac}).
\end{proof}
We now use (\ref{sol}) to derive the real solution for $\phi^{(i)}$. 
For the case when $\mu=\mu^\star=\nu$ and $A=A^{\star}$, we only need to choose 
$\mathbf{n}_0$ to be a real valued constant vector.  
Using (\ref{bn}) and (\ref{psi0}), we can determine the real vector $\bn$.
This leads to $c_i^2>0$. 
According to (\ref{sol}), the solutions are
\[\phi^{(i)}=\ln \left(\frac{c_i}{c_{i+1}}\right)=\frac{1}{2}\ln \left(
\frac{\tau_{i-1}\tau_{i+1}}{\tau_i^2}\right)\]
Thus we have the following result:
\begin{Pro}\label{pro4}
Let $\mathbf{n}_0$ be a constant real vector and $\nu\in\bbbr,\ 
\nu\ne0,\ \nu\ne\pm 1$. 
 A rank $1$ kink solution of the  system (\ref{2+1}) on a trivial background
$\phi^{(i)}= 0, i=1,\cdots, N$, is given by
\begin{eqnarray}\label{1kink}
 \phi^{(i)}=\frac{1}{2}\ln \left(\frac{\tau_{i-1}\tau_{i+1}}{\tau_i^2}\right), 
\quad \tau_i=\frac{1}{\nu^{2N}-1}\sum_{l=1}^N 
n_l^2   \nu^{2 \{(i-l) \mod N\}}
\end{eqnarray}
where $n_i$ are the components of the vector $$\bn=\exp((\nu 
\Delta^{-1}-\nu^{-1} \Delta) x-(\nu^{-2} \Delta^2-\nu^2 \Delta^{-2}) t)\bn_0. 
$$ 
\end{Pro}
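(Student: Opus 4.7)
The plan is to assemble Proposition 4 directly from the ingredients already in place: Lemma 1 provides the diagonal entries of $C$ in terms of a $\tau$-like sum, relation (\ref{sol}) expresses $\phi^{(i)}$ through consecutive ratios $c_i/c_{i+1}$, and (\ref{bn}) together with the explicit fundamental solution (\ref{psi0}) fixes the $(x,t)$-dependence of $\bn$. Restricting to a real spectral orbit, $\mu=\nu\in\bbbr\setminus\{0,\pm 1\}$ with $\bn_0\in\bbbr^N$, is what produces a real kink solution, and the general dressing framework established earlier guarantees that any $\phi^{(i)}$ obtained in this way solves (\ref{2+1}).

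First I would substitute $\mu=\nu$ into Lemma 1, so that $c_i^2=\nu^2\tau_{i-1}/\tau_i$ with the $\tau_i$ exactly as in the proposition. Applying (\ref{sol}) then gives
\begin{eqnarray*}
\phi^{(i)}=\ln\frac{c_i}{c_{i+1}}=\frac{1}{2}\ln\frac{c_i^2}{c_{i+1}^2}=\frac{1}{2}\ln\frac{\nu^2\tau_{i-1}/\tau_i}{\nu^2\tau_i/\tau_{i+1}}=\frac{1}{2}\ln\frac{\tau_{i-1}\tau_{i+1}}{\tau_i^2},
\end{eqnarray*}
and the common factor $\nu^2$ drops out, so the spectral parameter only enters through the $\tau_i$ themselves. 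Next I would spell out $\bn=\Psi_0(x,t,\nu)\bn_0$ using the explicit form (\ref{psi0}) evaluated at $\lambda=\nu$, which immediately yields the exponential formula for $\bn$ stated in the proposition.

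The only point requiring care is reality: one must check that $\tau_i$ is real with a definite sign so that the logarithm is well defined and the square-root branch for $c_i$ can be chosen consistently. Since $\nu\in\bbbr$ and $\bn_0\in\bbbr^N$, the matrices $\nu\Delta^{-1}-\nu^{-1}\Delta$ and $\nu^{-2}\Delta^2-\nu^2\Delta^{-2}$ in the exponent of $\Psi_0(\nu)$ are real, so every component $n_l$ of $\bn$ is real. Each summand $n_l^2\nu^{2\{(i-l)\mod N\}}$ is then non-negative, and after dividing by $\nu^{2N}-1$ the numbers $\tau_i$ share a common sign independent of $i$. Consequently $\tau_{i-1}\tau_{i+1}/\tau_i^2>0$, the square root of $c_i^2$ is a real-valued function of $x,t$, and the resulting $\phi^{(i)}$ is real. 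This reality verification is the only substantive step beyond Lemma 1 and (\ref{sol}); everything else is a direct packaging of results already proved.
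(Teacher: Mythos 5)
Your proposal is correct and follows essentially the same route as the paper: substitute $\mu=\nu$ into Lemma \ref{lemk} to get $c_i^2=\nu^2\tau_{i-1}/\tau_i$, use (\ref{sol}) and the cancellation of $\nu^2$ to obtain $\phi^{(i)}=\frac{1}{2}\ln(\tau_{i-1}\tau_{i+1}/\tau_i^2)$, and observe that reality of $\nu$ and $\bn_0$ forces $\bn$ real and hence $c_i^2>0$. Your explicit remark that all $\tau_i$ carry the common sign of $\nu^{2N}-1$ is a slightly fuller justification of the positivity step that the paper states only as ``this leads to $c_i^2>0$''.
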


For the case when $\mu=\nu \exp(\frac{\pi \i}{N})$ with $\nu\in \bbbr$ and 
$A^\star=\omega^{-1} Q^{-1} A Q$, to get the real solutions we use the 
following statement.
\begin{Pro}\label{pro42}
Let $\mathbf{n}_0$ be a constant vector satisfying $\mathbf{n}_0=Q 
\mathbf{n}_0^\star$. For $\mu=\nu \exp(\frac{\pi \i}{N})$, where $\nu\in\bbbr$  
and $\nu\ne0$,
$\nu \neq \pm1$, a rank $1$ kink solution of the  system (\ref{2+1}) on a 
trivial 
background
$\phi^{(i)}= 0, i=1,\cdots, N$, is given by
\begin{eqnarray}\label{1kink2}
 \phi^{(i)}=\frac{1}{2}\ln \left(\frac{\tau_{i-1}\tau_{i+1}}{\tau_i^2}\right), 
\quad \tau_i=\frac{1}{\mu^{2N}-1}\sum_{l=1}^N 
n_l^2   \mu^{2 \{(i-l) \mod N\}}
\end{eqnarray}
where $n_i$ are the components of the vector $$\bn=\exp((\mu 
\Delta^{-1}-\mu^{-1} \Delta) x-(\mu^{-2} \Delta^2-\mu^2 \Delta^{-2}) t)\bn_0 
. $$ 
\end{Pro}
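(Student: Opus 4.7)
The plan is to show that Proposition \ref{pro42} follows from Lemma \ref{lemk} and \eqref{sol} once reality of the resulting $\phi^{(i)}$ is established. Lemma \ref{lemk} already delivers $c_i^2 = \mu^2 \tau_{i-1}/\tau_i$ with $\tau_i$ as in \eqref{mac} for \emph{any} rank--one dressing of the form \eqref{phik}, and the argument leading to \eqref{bn} supplies $\bn = \Psi_0(x,t,\mu)\bn_0$. So the only genuinely new task is: given complex $\bn_0$ with $\bn_0 = Q\bn_0^\star$ and the complex value $\mu = \nu\exp(\pi\i/N)$, show that $\tau_{i-1}\tau_{i+1}/\tau_i^2$ is a positive real number.

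First I would verify that the reality condition on $\bn_0$ propagates, i.e.\ $\bn(x,t) = Q\bn(x,t)^\star$ for all $(x,t)$. Using the conjugation rules $Q\Delta Q^{-1} = \omega^{-1}\Delta$ and $Q\Delta^{-1}Q^{-1} = \omega\Delta^{-1}$ (and their squares), together with the key identity $\mu = \omega\mu^\star$ built into the choice $\mu = \nu\exp(\pi\i/N)$, one checks directly from \eqref{psi0} that
\begin{equation*}
Q\,\Psi_0(x,t,\mu^\star)\,Q^{-1}=\Psi_0(x,t,\mu).
\end{equation*}
Complex conjugating $\bn = \Psi_0(x,t,\mu)\bn_0$ and using $\Psi_0(x,t,\mu)^\star = \Psi_0(x,t,\mu^\star)$ (since $\Delta$ is real) then yields $Q\bn^\star = Q\Psi_0(x,t,\mu^\star)Q^{-1}\bn_0 = \Psi_0(x,t,\mu)\bn_0 = \bn$, as required.

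Componentwise $\bn = Q\bn^\star$ reads $n_l^\star = \omega^{-l}n_l$, so $s_l := \omega^{-l}n_l^2 = n_l n_l^\star = |n_l|^2\ge 0$. Substituting $n_l^2 = \omega^l s_l$, $\mu^{2k} = \omega^k\nu^{2k}$ and $\mu^{2N} = \nu^{2N}$ into \eqref{mac}, and noting that for $k=(i-l)\mod N$ we have $l+k\equiv i\pmod N$ in every term, the common factor $\omega^i$ pulls out of the sum, giving $\tau_i = \omega^i \sigma_i$ with
\begin{equation*}
\sigma_i = \frac{1}{\nu^{2N}-1}\sum_{l=1}^N |n_l|^2\,\nu^{2\{(i-l)\mod N\}}\in\bbbr.
\end{equation*}
The $\omega^i$ factors cancel in the ratio, so $\tau_{i-1}\tau_{i+1}/\tau_i^2 = \sigma_{i-1}\sigma_{i+1}/\sigma_i^2$, and since every $\sigma_i$ shares the common sign of $\nu^{2N}-1$, the ratio is positive; hence $c_i^2 = \nu^2\sigma_{i-1}/\sigma_i>0$ (so $c_i$ is real) and $\phi^{(i)} = \tfrac12\ln(\tau_{i-1}\tau_{i+1}/\tau_i^2)\in\bbbr$.

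The step I expect to require the most care is the conjugation identity $Q\Psi_0(x,t,\mu^\star)Q^{-1} = \Psi_0(x,t,\mu)$: one must chase $\omega$-factors through both the $\lambda^{\pm1}$ and $\lambda^{\pm2}$ pieces of the exponent, and it is precisely the choice $\mu = \nu\exp(\pi\i/N)$ with $N=2m$ that synchronises these powers through $\mu = \omega\mu^\star$. Everything else is routine bookkeeping built on Lemma \ref{lemk}.
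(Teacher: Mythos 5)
Your proof is correct, and its first half---propagating the reality condition $\bn_0=Q\bn_0^\star$ to $\bn=Q\bn^\star$ via the intertwining $\Psi_0(x,t,\mu)Q=Q\Psi_0(x,t,\mu^\star)$, which rests on $\mu=\omega\mu^\star$ and $\Delta Q=\omega Q\Delta$---is exactly the paper's argument. Where you diverge is the final reality step. The paper computes $\tau_i^\star=\omega^{-2i}\tau_i$ and, after also checking $m_i^\star=\omega^{i-1}m_i$ (hence $A=\omega QA^\star Q^{-1}$, confirming the ansatz for the dressing matrix in this case), concludes only that $(c_i^2)^\star=c_i^2$, i.e.\ that $c_i^2$ is real. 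Your factorisation $\tau_i=\omega^i\sigma_i$ with $\sigma_i\in\bbbr$ all of the sign of $1/(\nu^{2N}-1)$ is a sharper version of the same computation: it yields $c_i^2=\nu^2\sigma_{i-1}/\sigma_i>0$, which is what one actually needs to choose real $c_i$ and hence a real $\phi^{(i)}$ (reality of $c_i^2$ alone would not suffice), so on this point your argument is tighter than the paper's. The only thing you omit relative to the paper is the verification of the symmetry $A^\star=\omega^{-1}Q^{-1}AQ$ imposed on the dressing ansatz in case (1)(ii); this is a consistency check on $\Phi$ rather than a step needed for the stated formula once positivity of the $\tau$-ratios is established directly, so its absence is not a gap in the proof of the Proposition as stated, though it is worth recording if one wants the full reality condition (\ref{redgroup2}) on the dressing matrix itself.
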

\begin{proof} To prove the statement, we only need to show that $c_i$ given by 
(\ref{mac}) are real. 
First we notice that 
$$
\mu=\omega \mu^\star, \qquad \Delta Q=\omega Q \Delta, \qquad \Delta^{-1} Q=\omega^{-1} Q \Delta^{-1} .
$$
Using these identities, we are able to show that
\begin{eqnarray*}
&&\left((\mu\Delta^{-1}-\mu^{-1} \Delta) x-(\mu^{-2} \Delta^2-\mu^2 
\Delta^{-2}) t\right) Q=Q\left( (\nu \omega^{-\frac{1}{2}}\Delta^{-1}-\nu^{-1}  
\omega^{\frac{1}{2}}\Delta)x -(\nu^{-2} \omega\Delta^2-\nu^2 
\omega^{-1} \Delta^{-2})t\right)\\
&&\qquad=Q \left((\mu^\star\Delta^{-1}-{\mu^\star}^{-1} \Delta) 
x-{(\mu^\star}^{-2} \Delta^2-{\mu^\star}^2 \Delta^{-2}) t\right).
\end{eqnarray*}
This leads to $\Psi_0(x,t,\mu)Q=Q \Psi_0(x,t,\mu^\star)$. Using (\ref{bn}) we 
have
$$\bn=\Psi_0(x,t,\mu)\bn_0=\Psi_0(x,t,\mu) Q\bn_0^\star
=Q \Phi(x,t,\mu^\star) \bn_0^\star=Q \bn^\star ,
$$
that is, $n_l=\omega^l n_l^\star$.  Substituting these into (\ref{m}) we get
\begin{eqnarray}\label{mm}
m_i^\star= \frac{\nu^{2N}-1}{\mu \omega^{-1} N}\frac{c_i 
\omega^{-i} n_i }{\omega^{-2i+2}\sum_{l=1}^N {n_l}^2   {\mu}^{2\{(i-l-1) 
\mod N\}}} 
=\omega^{i-1} m_i,
\end{eqnarray}
which implies $\bm=\omega Q^{-1} \bm^\star$. Thus $A=\bn \bm^{\tr}=\omega Q 
A^{\star} Q^{-1}$. Finally we show that $c_i$ are real. Indeed,
\begin{eqnarray*}
\tau_i^{\star}=\frac{1}{{\mu^\star}^{2N}-1}\sum_{l=1}^N 
{n_l^\star}^2   {\mu^\star}^{2 \{(i-l) \mod N\}} =\omega^{-2i} \tau_i
\end{eqnarray*}
Therefore, we have ${c_i^\star}^2={\mu^\star}^2 
\frac{\tau^{\star}_{i-1}}{\tau_i^{\star}}=\mu^2 
\frac{\tau_{i-1}}{\tau_i}=c_i^2$. Using (\ref{sol}), we derive the real 
solution for $\phi^{(i)}$ as in the statement. 
\end{proof}
Note that this Proposition is valid for arbitrary dimension $N$. However, it 
only leads to new solutions different from the ones obtained in Proposition 
\ref{pro4} when $N$ is even.

\subsubsection{Rank $r>1$ kink solutions}
In this case, the rank $r$ matrix $A=\bn \bm^{\tr}$, where $\bn$ and $\bm$ are 
$N\times r$ matrices of rank $r$. As discussed before, we first solve $\bm$ in terms of 
$\bn$ using (\ref{mn}). It follows from (\ref{AC1}) that
\begin{equation}\label{ckr}
C=C^{-1}+\mu^{-1} \sum_{k=0}^{N-1} Q^k \bm \bn^{\tr} Q^{-k}.
\end{equation}
Substituting  it into (\ref{mn}), we get
\begin{eqnarray}\label{mnrk}
 C^{-1}\bn +\frac{1}{\mu^2} \sum_{k=0}^{N-1}\frac{\omega^k Q^k\bm \bn^{\tr} 
Q^{-k} \bn}{\omega^k \mu^{-1}-\mu}=0 .
\end{eqnarray}
Let $\tilde{\bm}=C \bm$. Then (\ref{ckr}) and (\ref{mnrk}) become
\begin{eqnarray}\label{tilde}
 C^2=I_N+\mu^{-1}N (\tilde{\bm} \bn^{\tr})_{\diag};\qquad
 \frac{1}{\mu^2} \sum_{k=0}^{N-1}\frac{\omega^k Q^k\tilde{\bm} \bn^{\tr} 
Q^{-k} \bn}{\mu-\omega^k \mu^{-1}}=\bn .
\end{eqnarray}
We denote 
$j$-th rows of $\bn$ and $\tilde{\bm}$ as $\bn_{j}$ and $\tilde{\bm}_{j}$  
respectively. It follows from the second identity in (\ref{tilde}) that
\begin{eqnarray}\label{mj}
 &&\bn_j=\tilde{\bm}_j \frac{1}{\mu^2} \sum_{k=0}^{N-1}\frac{\omega^{(j+1)k}
\bn^{\tr} Q^{-k} \bn}{\mu-\omega^k \mu^{-1}}=\tilde{\bm}_j \frac{1}{\mu^2} 
\bn^{\tr} \sum_{k=0}^{N-1}\frac{\omega^{(j+1)k}
 Q^{-k} }{\mu-\omega^k \mu^{-1}}\bn=\frac{N}{\mu (\mu^{2N}-1)}\tilde{\bm}_j 
\bn^{\tr} S(j) \bn,
\end{eqnarray}
where $S(j)$ is an $N \times N$ diagonal matrix with $i$-th diagonal entry 
equal to $\mu^{2\{(j-i)\!\!\mod N \}}$. We can determine $\bm$ and 
further the matrix $C$ in the dressing matrix as follows:
\begin{Lem}\label{lemkr}
Let rank $r$ matrix $A=\bn \bm^{\tr}$, where $\bn$ and $\bm$ are 
$N\times r$ matrices of rank $r$. If the dressing matrix given by (\ref{phik}) satisfies
(\ref{Inphi1}), then the entries for diagonal matrix $C$ are given by
\begin{eqnarray}\label{macr}
 c_j^2=\mu^{2r}\frac{\tau_{j-1}}{\tau_j},
\quad \quad \tau_j=\det R(j), 
\quad R(j)=\frac{1}{\mu^{2N}-1}\bn^{\tr} S(j) 
\bn,
\end{eqnarray}
where $S(j)$ is an $N \times N$ diagonal matrix with $i$-th diagonal entry 
equal to $\mu^{2\{(j-i)\!\!\mod N \}}$.
\end{Lem}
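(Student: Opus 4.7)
The plan is to push through the linear algebra that the excerpt has already set up in equations (\ref{tilde}) and (\ref{mj}). First I would rewrite (\ref{mj}) as $\bn_j = (N/\mu)\,\tilde{\bm}_j R(j)$, which follows once the factor $\mu^{2N}-1$ is absorbed into the definition of $R(j)$. Assuming $R(j)$ is nonsingular (a point I return to below), inversion gives $\tilde{\bm}_j = (\mu/N)\,\bn_j R(j)^{-1}$. Substituting this into the $j$-th diagonal entry of the first identity of (\ref{tilde}) yields immediately
\[
c_j^2 \;=\; 1 + \bn_j R(j)^{-1} \bn_j^{\tr},
\]
so the lemma is reduced to identifying this scalar with $\mu^{2r}\tau_{j-1}/\tau_j$, i.e.\ to understanding how $\tau_j=\det R(j)$ changes under $j\mapsto j-1$.

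Next I would exploit the cyclic structure of $S(j)$. A direct inspection of the diagonal entries $\mu^{2\{(j-i)\bmod N\}}$ shows that decrementing $j$ multiplies $S(j)_{ii}$ by $\mu^{-2}$ for every $i\ne j$, while at the distinguished index $i=j$ the residue jumps from $0$ to $N-1$. Packaging this single anomaly via the elementary matrix $E_{jj}$ (with a $1$ at position $(j,j)$ and zeros elsewhere) gives
\[
S(j-1) \;=\; \mu^{-2} S(j) + \mu^{-2}(\mu^{2N}-1)\,E_{jj},
\]
which, using $\bn^{\tr} E_{jj}\bn = \bn_j^{\tr}\bn_j$, translates into the clean rank-one recursion
\[
\mu^{2}\, R(j-1) \;=\; R(j) + \bn_j^{\tr}\bn_j.
\]

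The final step is the matrix determinant lemma $\det(R + uu^{\tr}) = \det R\,(1 + u^{\tr} R^{-1} u)$ applied with $u=\bn_j^{\tr}$, which converts the recursion into
\[
\mu^{2r}\,\tau_{j-1} \;=\; \det\!\bigl(R(j) + \bn_j^{\tr}\bn_j\bigr) \;=\; \tau_j\bigl(1 + \bn_j R(j)^{-1}\bn_j^{\tr}\bigr) \;=\; \tau_j\,c_j^2,
\]
which is exactly (\ref{macr}). The main obstacle I anticipate is the combinatorial bookkeeping in the shift $S(j)\mapsto S(j-1)$: the compensating factor $\mu^{2N}-1$ at the single index $i=j$ is precisely what upgrades a naive uniform rescaling into a genuine rank-one perturbation, and keeping the normalising factors consistent with the definition of $R(j)$ is where a careless sign or power of $\mu$ could derail the argument. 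A secondary technical point is invertibility of $R(j)$; for $\mu=\nu\in\bbbr$ the matrix $S(j)$ is positive, so $R(j)$ is a Gram matrix which is nonsingular exactly when $\bn$ has full column rank $r$, and the case $\mu=\nu\exp(\pi\i/N)$ for even $N$ then follows by analytic continuation in $\mu$.
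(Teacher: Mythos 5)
Your proposal is correct and follows essentially the same route as the paper's proof: solve for $\tilde{\bm}_j=(\mu/N)\,\bn_j R(j)^{-1}$, obtain $c_j^2=1+\bn_j R(j)^{-1}\bn_j^{\tr}$, establish the rank-one recursion $\mu^2 R(j-1)=R(j)+\bn_j^{\tr}\bn_j$ (which the paper verifies entrywise rather than via $S(j-1)=\mu^{-2}S(j)+\mu^{-2}(\mu^{2N}-1)E_{jj}$, but this is the same computation), and apply Sylvester's determinant theorem. Your explicit remark on the invertibility of $R(j)$ is a small addition the paper leaves implicit, but it does not change the argument.
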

\begin{proof}
Using the notations given in the statement, it follows from (\ref{mj}) that
$$
\tilde \bm_j=\frac{\mu}{N} \bn_j R(j)^{-1}.
$$
We substitute it into the first equation in (\ref{tilde}) and determine that the 
$j$-th diagonal entry of the diagonal matrix $C$ satisfies
\begin{eqnarray}\label{cjr}
c_j^2=1+ \sum_{\alpha,\beta=1}^r \bn_{j\alpha} 
\{{R(j)^{-1}}\}_{\alpha,\beta} \bn_{j \beta} .
\end{eqnarray}
The explicit formula for the entries at $(\alpha, \beta)$ of $R(j)$ is equal to
\begin{eqnarray*}
 R(j)_{\alpha,\beta}=\frac{1}{\mu^{2N}-1}\sum_{k=1}^N n_{k \alpha} n_{k\beta} \mu^{2\{(j-k)\!\!\mod 
N \}} .
\end{eqnarray*}
It is easy to show the identity between the entries between matrices $R(j)$ and 
$R(j-1)$:
\begin{eqnarray*}
  R(j)_{\alpha,\beta}= \mu^2 R(j-1)_{\alpha,\beta}- n_{j \alpha} 
n_{j \beta} .
\end{eqnarray*}
This implies 
$$
R(j)=\mu^2 R(j-1)- \bn_j^{\tr}  \bn_j .
$$
Using Sylvester's determinant theorem, it leads to
$$
\mu^{2 r} \det R(j-1)=\det R(j) (1+\bn_j R(j)^{-1} \bn_j^{\tr} ).
$$
Comparing it with (\ref{cjr}), we obtain that
$$
c_j^2=\frac{\mu^{2r} \det R(j-1)}{\det R(j)}=\frac{\mu^{2r} \tau_{j-1}}{\tau_j},
$$
which is (\ref{macr}) in the statement.
\end{proof}
Using (\ref{sol}) we get the solutions in the statement, where $\bn$ is 
determined by (\ref{bn}) and (\ref{psi0}). For the case when $\mu=\mu^\star=\nu$ 
and $A=A^{\star}$, we only need to choose 
$\mathbf{n}_0$ to be a real valued constant matrix of size $N\times r$ to 
guarantee that the solutions are real. We state the result as follows: 
\begin{Pro}\label{pro4r}
Let $\mathbf{n}_0$ be a rank $r$ constant real matrix of size $N\times r$ and
$\mu=\mu^\star=\nu\in\bbbr,\ 
\nu\ne0,\ \nu\ne\pm 1$. 
 A rank $r$ kink solution of the  system (\ref{2+1}) on a trivial background
$\phi^{(i)}= 0, i=1,\cdots, N$, is given by
\begin{eqnarray}\label{rkink}
 \phi^{(j)}=\frac{1}{2}\ln \left(
\frac{\tau_{j-1}\tau_{j+1}}{\tau_j^2}\right), \quad \tau_j=\det R(j), 
\quad R(j)=\frac{1}{\nu^{2N}-1}\bn^{\tr} S(j) 
\bn,
\end{eqnarray}
where $S(j)$ is an $N \times N$ diagonal matrix with $i$-th diagonal entry 
equal to $\nu^{2\{(j-i)\!\!\mod N \}}$
and $$\bn=\exp((\nu \Delta^{-1}-\nu^{-1} \Delta) x-(\nu^{-2} \Delta^2-\nu^2 
\Delta^{-2}) t)\bn_0 .$$ 
\end{Pro}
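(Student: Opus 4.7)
The plan is to deduce Proposition \ref{pro4r} as a direct corollary of Lemma \ref{lemkr} together with the relation (\ref{sol}) between the diagonal matrix $C$ and the dressed potentials $\phi^{(j)}$, with the only real work being the verification that the resulting expressions are real.

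First I would specialise Lemma \ref{lemkr} to the kink orbit with $\mu = \nu \in \mathbb{R}$ and $A = A^\star$, so that $c_j^2 = \nu^{2r}\tau_{j-1}/\tau_j$ with $\tau_j = \det R(j)$ as in (\ref{macr}). Combining this with (\ref{sol}) yields
\begin{equation*}
\exp(2\phi^{(j)}) \;=\; \frac{c_j^2}{c_{j+1}^2} \;=\; \frac{\tau_{j-1}/\tau_j}{\tau_j/\tau_{j+1}} \;=\; \frac{\tau_{j-1}\tau_{j+1}}{\tau_j^2},
\end{equation*}
which is exactly the claimed formula provided the right-hand side is positive. Next I would identify $\bn$ with $\Psi_0(x,t,\nu)\bn_0$ via (\ref{bn}); since (\ref{psi0}) gives $\Psi_0(x,t,\nu) = \exp((\nu\Delta^{-1}-\nu^{-1}\Delta)x-(\nu^{-2}\Delta^2-\nu^2\Delta^{-2})t)$ for real $\nu$, the matrix $\bn$ is manifestly real and, because $\Psi_0(x,t,\nu)$ is invertible and $\bn_0$ has rank $r$, retains full column rank $r$ for all $(x,t)$.

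The main point to check is the reality of $\phi^{(j)}$, i.e.\ the positivity of $\tau_{j-1}\tau_{j+1}/\tau_j^2$. For this I would exploit the observation that the diagonal matrix $S(j)$ has strictly positive entries $\nu^{2\{(j-i)\,\mathrm{mod}\,N\}}$, hence is positive definite; since $\bn$ is real of rank $r$, the $r\times r$ Gram-type matrix $\bn^{\tr}S(j)\bn$ is symmetric positive definite with strictly positive determinant. Therefore $\tau_j = \det R(j) = (\nu^{2N}-1)^{-r}\det(\bn^{\tr}S(j)\bn)$ has constant sign $\mathrm{sgn}(\nu^{2N}-1)^r$ independent of $j$, which immediately gives $c_j^2>0$ and $\tau_{j-1}\tau_{j+1}/\tau_j^2 > 0$, so the logarithm defines a real function.

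The only mild obstacle I anticipate is bookkeeping: one has to carry the convention $\mu = \nu$ cleanly through Lemma \ref{lemkr} and then argue sign-independence of $\tau_j$ uniformly in $j$. Once the positive-definiteness of $\bn^{\tr} S(j)\bn$ is recorded, everything else is essentially substitution, and there is no residual differential identity to verify — compatibility is already built into the dressing matrix through (\ref{chix})–(\ref{chit}) and the kernel equations (\ref{nxker})–(\ref{ntker}), which were used in deriving (\ref{bn}).
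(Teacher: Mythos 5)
Your proposal is correct and follows essentially the same route as the paper: Proposition \ref{pro4r} is obtained by specialising Lemma \ref{lemkr} to $\mu=\nu\in\bbbr$ and combining $c_j^2=\nu^{2r}\tau_{j-1}/\tau_j$ with the relation (\ref{sol}). Your explicit positivity argument (that $\bn^{\tr}S(j)\bn$ is a real positive-definite Gram matrix, so all $\tau_j$ share the sign of $(\nu^{2N}-1)^{-r}$ and the logarithm is real) is a welcome detail that the paper only asserts implicitly by saying it suffices to take $\bn_0$ real.
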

Notice that taking $r=1$ in Proposition \ref{pro4r} we get the results in 
Proposition \ref{pro4}.

For the case when $\mu=\nu \exp(\frac{\pi \i}{N})$ with $\nu>0$ and 
$A^\star=\omega^{-1} Q^{-1} A Q$, we have the similar result as Proposition 
\ref{pro42} to get kink solutions of rank $r$.
\begin{Pro}\label{pro4r2}
Let $\mathbf{n}_0$ be a rank $r$ constant real matrix of size $N\times r$  
satisfying $\mathbf{n}_0=Q 
\mathbf{n}_0^\star$. For $\mu=\nu \exp(\frac{\pi \i}{N})$, where $\nu>0$  and 
$\nu \neq 1$, a rank $r$ kink solution of the  system (\ref{2+1}) on a trivial 
background
$\phi^{(i)}= 0, i=1,\cdots, N$, is given by
\begin{eqnarray}\label{rkink2}
 \phi^{(j)}=\frac{1}{2}\ln \left(
\frac{\tau_{j-1}\tau_{j+1}}{\tau_j^2}\right), \quad \tau_j=\det R(j), 
\quad R(j)=\frac{1}{\mu^{2N}-1}\bn^{\tr} S(j) 
\bn,
\end{eqnarray}
where $S(j)$ is an $N \times N$ diagonal matrix with $i$-th diagonal entry 
equal to $\mu^{2\{(j-i)\!\!\mod N \}}$
and $$\bn=\exp((\mu \Delta^{-1}-\mu^{-1} \Delta) x-(\mu^{-2} \Delta^2-\mu^2 
\Delta^{-2}) t)\bn_0 .$$  
\end{Pro}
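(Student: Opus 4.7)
The plan is to treat this as a direct generalisation of Proposition \ref{pro42} to rank $r$, building on the computation of $c_j^2$ already established in Lemma \ref{lemkr}. By that lemma, $c_j^2 = \mu^{2r} \tau_{j-1}/\tau_j$ regardless of the reality condition imposed on $\mu$ and $A$. What remains to show is that, under the hypothesis $\bn_0 = Q\bn_0^\star$ and $\mu = \nu\exp(\pi \i/N)$, the resulting $c_j^2$ are real (and of constant sign along $j$), so that $\phi^{(j)} = \frac{1}{2}\ln(c_j^2/c_{j+1}^2)$ is real and coincides with the stated $\tau$-function expression.

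First I would reproduce, without change, the intertwining computation from the proof of Proposition \ref{pro42}: using $\mu = \omega\mu^\star$, $\Delta Q = \omega Q\Delta$ and $\Delta^{-1}Q = \omega^{-1} Q\Delta^{-1}$, one obtains $\Psi_0(x,t,\mu)Q = Q\Psi_0(x,t,\mu^\star)$. Combined with $\bn_0 = Q\bn_0^\star$ and formula (\ref{bn}), this yields $\bn = Q\bn^\star$ at the matrix level, which componentwise reads $n_{k\alpha} = \omega^k n_{k\alpha}^\star$ for every column $\alpha = 1,\dots,r$.

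The key step is to transfer this symmetry to the determinant $\tau_j = \det R(j)$. Starting from the explicit formula
\[
R(j)_{\alpha,\beta} = \frac{1}{\mu^{2N}-1}\sum_{k=1}^N n_{k\alpha} n_{k\beta}\, \mu^{2\{(j-k)\!\!\!\mod\! N\}},
\]
I would complex conjugate, substitute $n_{k\alpha}^\star = \omega^{-k} n_{k\alpha}$, and use $\mu^\star = \omega^{-1}\mu$ together with $(\mu^\star)^{2N} = \mu^{2N}$. The exponent of $\omega$ that multiplies the $k$-th summand is $-2k - 2\{(j-k)\!\!\mod\! N\}$, which equals $-2j$ modulo $2N$ irrespective of whether $j \geq k$ or $j < k$. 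This collapses to the identity $R(j)^\star = \omega^{-2j} R(j)$, and passing to the $r\times r$ determinant gives $\tau_j^\star = \omega^{-2jr}\tau_j$.

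The conclusion is then immediate: since $(c_j^2)^\star = (\mu^\star)^{2r}\tau_{j-1}^\star/\tau_j^\star = \omega^{-2r}\mu^{2r}\cdot \omega^{-2(j-1)r}\tau_{j-1} / (\omega^{-2jr}\tau_j) = \mu^{2r}\tau_{j-1}/\tau_j = c_j^2$, each $c_j^2$ is real. Reality of $\phi^{(j)}$ follows from (\ref{sol}), and cancelling the prefactors $\mu^{2r}$ in the ratio $c_j^2/c_{j+1}^2 = \tau_{j-1}\tau_{j+1}/\tau_j^2$ produces the advertised formula. The only delicate point is the modular arithmetic in verifying $\omega^{-2k-2\{(j-k)\mod N\}} = \omega^{-2j}$; everything else is a clean echo of the rank $1$ argument and of the Sylvester-determinant manipulation already carried out in Lemma \ref{lemkr}.
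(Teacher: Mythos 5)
Your proposal is correct and follows essentially the same route as the paper: invoke the intertwining relation $\Psi_0(x,t,\mu)Q=Q\Psi_0(x,t,\mu^\star)$ from Proposition \ref{pro42} to get $\bn=Q\bn^\star$, deduce $\tau_j^\star=\omega^{-2rj}\tau_j$, and conclude that $c_j^2=\mu^{2r}\tau_{j-1}/\tau_j$ from Lemma \ref{lemkr} is real. The only cosmetic difference is that you verify $R(j)^\star=\omega^{-2j}R(j)$ entry by entry via the exponent identity $-2k-2\{(j-k)\!\!\mod N\}\equiv -2j$, whereas the paper obtains the same relation in one line at the matrix level from $\bn^{\star\tr}S(j)^\star\bn^\star=\omega^{-2j}\bn^{\tr}S(j)\bn$.
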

\begin{proof}
Similar to the proof of Proposition \ref{pro42}, under the assumption we 
have
$
\bn=Q \bn^\star .
$
This leads to 
\begin{eqnarray*}
\tau_j^\star=\det R(j)^{\star}=\det 
\left( \frac{1}{{\mu^\star}^{2N}-1}\bn^{\star\tr} S(j)^\star
\bn^\star\right)=\det\left(\frac{\omega^{-2j}}{{\mu}^{2N}-1}
\bn^{\tr} S(j)\bn\right)
=\omega^{-2rj} \tau_j
\end{eqnarray*}
Therefore, we have ${c_j^\star}^2={\mu^\star}^{2 r}
\frac{\tau^{\star}_{j-1}}{\tau_j^{\star}}=\mu^{2 r} 
\frac{\tau_{j-1}}{\tau_j}=c_j^2$. Using (\ref{sol}), we derive the real 
solution for $\phi^{(i)}$ as in the statement.
\end{proof}
Similar to Proposition \ref{pro42}, although this Proposition is valid for 
arbitrary dimension $N$, it 
only leads to new solutions different from the ones obtained in Proposition 
\ref{pro4r} when $N$ is even.

\subsection{Breather solutions}
A breather solution corresponds to the simple poles at points of a 
generic orbit of the reduction group. The corresponding dressing matrix is of 
the form
where $A$ is a $\lambda$-independent matrix of size $N\times N$ and the matrix 
$C$ defined by (\ref{matc}) is diagonal with
 real functions  $c_i, i=1,\cdots, N$ on the diagonal. Moreover, it follows from 
Proposition \ref{pro0} that
\begin{eqnarray}
&&\lim_{\lambda \to\infty} \Phi(\lambda)=C-\frac{1}{\mu}\sum_{k=0}^{N-1}Q^{k} 
A^{\tr} Q^{-k}-\frac{1}{\mu^\star}\sum_{k=0}^{N-1}Q^{k} 
A^{\star\tr} Q^{-k}\nonumber\\
&&\qquad =C-\frac{1}{\mu} N A_{\diag}-\frac{1}{\mu^\star} N 
A^{\star}_{\diag}=C^{-1};\label{ACb1}\\
&& \Phi(\mu)A=(C+\sum_{k=0}^{N-1}\frac{Q^{k} A^{\tr} Q^{-k}}{\mu^{-1} \omega^k- 
\mu}+\sum_{k=0}^{N-1}\frac{Q^{k} A^{\star\tr}
Q^{-k}}{\mu^{-1} \omega^k- \mu^\star})A=0. \label{ACb2}
\end{eqnarray}
If matrix $A$ is invertible then, in a similar manner to in Proposition \ref{pro3} for the kink 
case, we find the real solutions for $\phi^{(i)}=0$ on the trivial background. 
Hence we assume that the rank of $A$ is $ r\leq N-1$. 

In the same way as we did for the case of kinks in Section \ref{Kink}, we 
present $
A=\bn \bm^{\tr},
$
where $ \bn $ and $\bm$ are both $N\times r$ matrices of rank $r$. It is 
obvious that the dressing matrix $\Phi(\lambda)$ (\ref{bphi}) is again
parametrised by a matrix $\bn$ lying on a Grassmannian and we 
also have
\begin{eqnarray}\label{bnb}
 \bn=\Psi_0(x,t,\mu) \bn_0,
\end{eqnarray}
where $\bn_0$ is an $N\times r$ matrix of rank $r$.
\subsubsection{Rank $1$ breather solutions}
In a similar manner to in the case of the rank $1$ kink, we consider the matrix 
${A}=\mathbf{n}\mathbf{m}^{\tr}$, where $\mathbf{n}$ and $\mathbf{m}$ are 
vectors. Then (\ref{ACb2}) becomes
\begin{eqnarray}\label{mnb1}
(C+\sum_{k=0}^{N-1}\frac{Q^{k} \bm \bn^{\tr} Q^{-k}}{\mu^{-1} \omega^k- 
\mu}+\sum_{k=0}^{N-1}\frac{Q^{k} \bm^{\star}\bn^{\star\tr}
Q^{-k}}{\mu^{-1} \omega^k- \mu^\star})\bn=0  .
\end{eqnarray}
We first use it to determine $\mathbf{m}$ in terms of $\bn$. Then we use 
(\ref{sol}) to compute the solutions $\phi^{(i)}$ for equation (\ref{2+1}).
\begin{Pro}\label{pro5}
Let $\mathbf{n}_0$ be a constant vector and $
\mu\in\bbbc,\ |\mu|\ne 1,\ \mu\ne \omega^k  \mu^\star,\ k=1,\ldots,N$. 
 A rank $1$ breather solution of the  system (\ref{2+1}) on a trivial background
$\phi^{(i)}= 0, i=1,\cdots, N$, is given by
\begin{eqnarray}
&& \phi^{(i)}=\frac{1}{2}\ln \left(
\frac{\tau_{i-1}\tau_{i+1}}{\tau_i^2}\right),\qquad 
\tau_i=\rho(i)^2-|\sigma(i)|^2, 
\label{1breather}
\end{eqnarray}
where
\begin{eqnarray}
&&\sigma(i)=\frac{1}{\mu^{2N}-1}\sum_{l=1}^N n_l^2   \mu^{2 \{(i-l) \mod N\}}; \qquad
\rho(i)=\frac{1}{|\mu|^{2N}-1} \sum_{l=1}^N |n_l|^2   |\mu|^{2 \{(i-l) \mod N\}}\label{srho}
\end{eqnarray}
and the $n_i$ are the components of vector $$\bn=\exp((\mu 
\Delta^{-1}-\mu^{-1} \Delta) x-(\mu^{-2} \Delta^2-\mu^2 \Delta^{-2}) 
t)\bn_0 .$$ 
\end{Pro}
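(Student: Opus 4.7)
The plan is to mirror the proof of Lemma \ref{lemk} for rank $1$ kinks, with the added complication of disentangling contributions from the poles at $\mu$ and at $\mu^\star$. First I will take the $i$-th component of (\ref{mnb1}). Because $\bn^{\tr} Q^{-k}\bn=\sum_l n_l^2\omega^{-lk}$ and $\bn^{\star\tr}Q^{-k}\bn=\sum_l |n_l|^2\omega^{-lk}$ are scalars, and $Q^k$ is diagonal, each of the two matrix sums collapses to a scalar multiple of $m_i$ (respectively $m_i^\star$). Rewriting $\mu^{-1}\omega^k-\mu^\star=(\omega^k-|\mu|^2)/\mu$, the identity (\ref{old}) applied with $x=\mu^2$ in the first sum and $x=|\mu|^2$ in the second yields the single scalar equation
\[
c_i n_i=\mu N\bigl(m_i\,\sigma(i-1)+m_i^\star\,\rho(i-1)\bigr),
\]
with $\sigma,\rho$ as in (\ref{srho}); conjugating and using $c_i,\rho(i-1)\in\bbbr$ gives a second independent equation.

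The resulting $2\times 2$ linear system for $(m_i,m_i^\star)$ has determinant $-|\mu|^2N^2\tau_{i-1}$, where $\tau_{i-1}=\rho(i-1)^2-|\sigma(i-1)|^2$, and is invertible whenever $\tau_{i-1}\ne 0$. Cramer's rule then expresses $m_i$ explicitly in terms of $n_i$, $n_i^\star$, $\sigma(i-1)$, $\rho(i-1)$ and $c_i$. I will substitute this into the $i$-th diagonal entry of (\ref{ACb1}), namely $c_i-c_i^{-1}=(N/\mu)\,n_i m_i+(N/\mu^\star)\,n_i^\star m_i^\star$, and collect terms to obtain a scalar equation $c_i^2\bigl(\tau_{i-1}-X_i\bigr)=\tau_{i-1}$ with
\[
X_i=\frac{2|n_i|^2\,\rho(i-1)}{|\mu|^2}-\frac{n_i^{\,2}\,\sigma(i-1)^\star}{\mu^2}-\frac{(n_i^\star)^{2}\,\sigma(i-1)}{(\mu^\star)^2}.
\]

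The crux of the argument, and the only non-routine step, is establishing the key identity
\[
\tau_i=|\mu|^4\bigl(\tau_{i-1}-X_i\bigr).
\]
For this I will derive the one-step recurrences $\sigma(i)=\mu^2\sigma(i-1)-n_i^{\,2}$ and $\rho(i)=|\mu|^2\rho(i-1)-|n_i|^2$ directly from (\ref{srho}), by splitting off the $l\equiv i\pmod N$ term when shifting the exponent (this uses $\mu^{2N}-1$ and $|\mu|^{2N}-1$ in the denominators in an essential way). Substituting these recurrences into $\tau_i=\rho(i)^2-|\sigma(i)|^2$ and expanding, the quartic $|n_i|^4$ contributions coming from $\rho(i)^2$ and $|\sigma(i)|^2$ cancel, leaving exactly $|\mu|^4\tau_{i-1}$ plus cross terms that reproduce $-|\mu|^4 X_i$ after multiplication. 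This bookkeeping is the main obstacle; everything else is essentially mechanical.

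Combining these steps gives $c_i^2=|\mu|^4\,\tau_{i-1}/\tau_i$, which is real under the implicit non-degeneracy assumption that consecutive $\tau_i$ share a sign. The formula (\ref{sol}) then yields $\phi^{(i)}=\ln(c_i/c_{i+1})=\tfrac12\ln(c_i^2/c_{i+1}^2)=\tfrac12\ln\bigl(\tau_{i-1}\tau_{i+1}/\tau_i^{\,2}\bigr)$, the constant factor $|\mu|^4$ cancelling between numerator and denominator. Finally, the stated exponential form of $\bn$ is just $\Psi_0(x,t,\mu)\bn_0$ via (\ref{bnb}) with $\Psi_0$ given by (\ref{psi0}), which completes the derivation.
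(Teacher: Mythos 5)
Your proposal is correct and follows essentially the same route as the paper: the $i$-th component of (\ref{mnb1}) together with its complex conjugate gives a $2\times 2$ linear system for $(m_i,m_i^\star)$ whose coefficients are $-\mu N\sigma(i-1)$ and $-\mu N\rho(i-1)$ (the paper's $D_{ii},E_{ii}$), the determinant $-|\mu|^2N^2\tau_{i-1}$ is exactly the paper's $|D_{ii}|^2-|E_{ii}|^2$, and the final reduction to $c_i^2=|\mu|^4\tau_{i-1}/\tau_i$ rests on the same one-step recurrences $\sigma(i)=\mu^2\sigma(i-1)-n_i^2$, $\rho(i)=|\mu|^2\rho(i-1)-|n_i|^2$ that the paper invokes. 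The only difference is organisational (Cramer's rule and the intermediate identity $\tau_i=|\mu|^4(\tau_{i-1}-X_i)$ versus solving for $m_i$ first and substituting), so the two arguments coincide.
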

\begin{proof}
Under the assumption, we have that
$\mathbf{n}^{\tr}Q^{-k}\mathbf{n}$ and $\bn^{\star\tr} Q^{-k}\mathbf{n}$ are
scalar functions. So we define 
\begin{eqnarray*}
&&D=\sum_{k=0}^{N-1}\frac{\bn^{\tr} Q^{-k} \bn Q^{k} }{\mu^{-1} \omega^k- 
\mu};\qquad
E=\sum_{k=0}^{N-1}\frac{\bn^{\star\tr} Q^{-k} \bn Q^{k} }{\mu^{-1} \omega^k- 
\mu^{\star}}.
\end{eqnarray*}
They are diagonal with the entries on diagonal being
\begin{eqnarray*}
&&D_{ii}=\sum_{k=0}^{N-1}\sum_{l=1}^N n_l^2 \omega^{-l k} \frac{\omega^{i k} 
}{\mu^{-1} \omega^k- \mu}
=\mu\sum_{l=1}^N n_l^2 \sum_{k=0}^{N-1} \frac{ \omega^{(i-l) k} }{ \omega^k- 
\mu^2}
=-\mu\sum_{l=1}^N n_l^2  \frac{N \mu^{2 \{(i-l-1) \mod N\}}}{\mu^{2N}-1};\\
&&E_{ii}=\sum_{k=0}^{N-1}\sum_{l=1}^N |n_l|^2 \omega^{-l k} \frac{\omega^{i k} 
}{\mu^{-1} \omega^k- \mu^{\star}}
=\mu\sum_{l=1}^N |n_l|^2 \sum_{k=0}^{N-1} \frac{ \omega^{(i-l) k} }{ \omega^k- 
|\mu|^2}
=-\mu\sum_{l=1}^N |n_l|^2  \frac{N |\mu|^{2 \{(i-l-1) \mod N\}}}{|\mu|^{2N}-1},
\end{eqnarray*}
where we used (\ref{old}). Using the notations defined by 
(\ref{srho}), we rewrite them as
\begin{eqnarray}\label{DE}
 D_{ii}=-\mu N \sigma(i-1); \qquad 
E_{ii}=-\mu N \rho(i-1).
\end{eqnarray}

Writing out the entries of (\ref{mnb1}), we have
$$
c_i n_i+D_{ii} m_i+E_{ii} m_i^{\star}=0.
$$
We solve it for $m_i$ and it follows that
$$
m_i=c_i \frac{D_{ii}^{\star} n_i-E_{ii} n_i^{\star}}{|E_{ii}|^2-|D_{ii}|^2}.
$$
The matrix $C$ can be determined using the equation (\ref{ACb1}), which is 
equivalent to
\[c_i - c_i^{-1} = N \left(\frac{1}{\mu}  m_i n_i + \frac{1}{\mu^{\star}} 
m_i^{\star} n_i^{\star} \right)
=\frac{N c_i}{|E_{ii}|^2-|D_{ii}|^2} \left(\frac{1}{\mu}  D_{ii}^{\star} 
n_i^2  + \frac{1}{\mu^{\star}} D_{ii} n_i^{\star 2}
-(\frac{1}{\mu} E_{ii}+\frac{1}{\mu^{\star}} E_{ii}^{\star} )|n_i|^2  
\right).\]
This leads to
\begin{eqnarray*}
 c_i^2=\frac{|E_{ii}|^2-|D_{ii}|^2}{|E_{ii}|^2-|D_{ii}|^2-\frac{N}{\mu}  
D_{ii}^{\star} 
n_i^2  - \frac{N}{\mu^{\star}} D_{ii} n_i^{\star 2}
+N (\frac{1}{\mu} E_{ii}+\frac{1}{\mu^{\star}} E_{ii}^{\star} )|n_i|^2},
\end{eqnarray*}
Substituting (\ref{DE}) into it, we get
\begin{eqnarray*}
c_i^2=|\mu|^4\frac{ \rho(i-1)^2-|\sigma(i-1)|^2}{ \rho(i)^2-|\sigma(i)|^2}
\end{eqnarray*}
where $\sigma(i)$ and $\rho(i)$ are defined in (\ref{srho}). Here we used the 
identities between $\sigma(i-1)$ and $\sigma(i)$, and $\rho(i-1)$ and $\rho(i)$ 
as follows:
\begin{eqnarray*}
 \mu^2 \sigma(i-1)-n_i^2=\sigma(i); \qquad  |\mu|^2 
\rho(i-1)-|n_i|^2=\rho(i) .
\end{eqnarray*}
It follows from (\ref{sol}) that
\[\phi^{(i)}=\ln \left(\frac{c_i}{c_{i+1}}\right)=\frac{1}{2}\ln \left(\frac{
\tau_{i-1}\tau_{i+1}}{\tau_i^2}\right),\]
where $\tau_i$ is defined by (\ref{1breather}). The vector $\bn$ is determined
using (\ref{bnb}) and (\ref{psi0}). 
\end{proof}
\subsubsection{Rank $r>1$ breather solutions}
In this case, the rank $r$ matrix $A=\bn \bm^{\tr}$, where $\bn$ and $\bm$ are 
$N\times r$ matrices.  It follows from (\ref{ACb1}) that
\begin{equation}\label{cbr}
C=C^{-1}+\frac{1}{\mu}\sum_{k=0}^{N-1}Q^{k} 
\bm \bn^{\tr} Q^{-k}+\frac{1}{\mu^\star}\sum_{k=0}^{N-1}Q^{k} 
\bm^{\star} \bn^{\star\tr} Q^{-k}.
\end{equation}
Substituting  it into (\ref{ACb2}), we get
\begin{eqnarray}\label{mnrb}
 C^{-1}\bn +\frac{1}{\mu^2} \sum_{k=0}^{N-1}\frac{\omega^k Q^k\bm \bn^{\tr} 
Q^{-k} \bn}{\omega^k \mu^{-1}-\mu}+\frac{1}{\mu^{\star 2}} 
\sum_{k=0}^{N-1}\frac{\omega^k Q^k\bm^{\star} \bn^{\star \tr} 
Q^{-k} \bn}{\omega^k \mu^{\star -1}-\mu}=0 .
\end{eqnarray}
Let $\tilde{\bm}=C \bm$. Then (\ref{cbr}) and (\ref{mnrb}) become
\begin{eqnarray}
&& C^2=I_N+\frac{1}{\mu} N (\tilde{\bm} 
\bn^{\tr})_{\diag}+\frac{1}{\mu^{\star}} N (\tilde{\bm}^{\star} 
\bn^{\star \tr})_{\diag};\label{tilde1}\\
&& \frac{1}{\mu^2} \sum_{k=0}^{N-1}\frac{\omega^k Q^k\tilde{\bm} \bn^{\tr} 
Q^{-k} \bn}{\mu-\omega^k \mu^{-1}}+\frac{1}{\mu^{\star 2}} 
\sum_{k=0}^{N-1}\frac{\omega^k Q^k\tilde{\bm}^{\star} \bn^{\star \tr} 
Q^{-k} \bn}{\mu-\omega^k \mu^{\star -1}}
=\bn .\label{tilde2}
\end{eqnarray}
We denote the
$j$-th rows of $\bn$ and $\tilde{\bm}$ by $\bn_{j}$ and $\tilde{\bm}_{j}$  
respectively. It follows from (\ref{tilde2}) that
\begin{eqnarray}
 &&\bn_j=\tilde{\bm}_j \frac{1}{\mu^2} \sum_{k=0}^{N-1}\frac{\omega^{(j+1)k}
\bn^{\tr} Q^{-k} \bn}{\mu-\omega^k \mu^{-1}}+\tilde{\bm}^{\star}_j 
\frac{1}{\mu^{\star 2}} 
\sum_{k=0}^{N-1}\frac{\omega^{(j+1)k} \bn^{\star \tr} 
Q^{-k} \bn}{\mu-\omega^k \mu^{\star -1}}\nonumber\\
&&\quad =\tilde{\bm}_j \frac{1}{\mu^2} 
\bn^{\tr} \sum_{k=0}^{N-1}\frac{\omega^{(j+1)k} Q^{-k} }{\mu-\omega^k 
\mu^{-1}}\bn+\tilde{\bm}^{\star}_j 
\frac{1}{\mu^{\star 2}} \bn^{\star \tr}
\sum_{k=0}^{N-1}\frac{\omega^{(j+1)k}  
Q^{-k} }{\mu-\omega^k \mu^{\star -1}} \bn. \label{mjb}
\end{eqnarray}
Let us introduce the following notations for $r\times r$ 
matrices with entry at $(\alpha,\beta)$ being
\begin{eqnarray}
 && R(j)_{\alpha, \beta}=\frac{\mu}{N}\left(\frac{1}{\mu^2}\bn^{\tr} 
\sum_{k=0}^{N-1}\frac{\omega^{(j+1)k} Q^{-k} 
}{\mu-\omega^k\mu^{-1}} \bn\right)_{\alpha,\beta}= \frac{1}{\mu^{2N}-1} 
\sum_{l=1}^N n_{l \alpha} n_{l\beta} \mu^{2\{(j-l)\!\!\mod N\}};\label{rjb} \\
&& P(j)_{\alpha, \beta}=\frac{\mu^\star}{N}\left(\frac{1}{\mu^{\star 2}} 
\bn^{\star \tr}
\sum_{k=0}^{N-1}\frac{\omega^{(j+1)k} Q^{-k} }{\mu-\omega^k \mu^{\star -1}} 
\bn \right)_{\alpha, \beta}= \frac{1}{|\mu|^{2N}-1} 
\sum_{l=1}^N n_{l \alpha}^{\star} n_{l\beta} |\mu|^{2\{(j-l)\!\!\mod N\}} 
.\label{pjb}
\end{eqnarray}
Notice that $R(j)=R(j)^{\tr}$ and $P(j)^{\dagger}=P(j)^{\star\tr}=P(j)$, where the
notation $\dagger$ denotes the conjugate transpose of a matrix.
It is easy to show the identity between the entries between $R(j)$ and 
$R(j-1)$, and between $P(j)$ and $P(j-1)$:
\begin{eqnarray*}
  R(j)_{\alpha,\beta}= \mu^2 R(j-1)_{\alpha,\beta}- n_{j \alpha} 
n_{j \beta}; \quad P(j)_{\alpha,\beta}= |\mu|^2 
P(j-1)_{\alpha,\beta}- n_{j \alpha}^\star 
n_{j \beta} .
\end{eqnarray*}
These imply that 
\begin{eqnarray}\label{re01}
R(j)=\mu^2 R(j-1)- \bn_j^{\tr}  \bn_j; \quad P(j)=|\mu|^2 P(j-1)- 
\bn_j^{\dagger}  \bn_j  .
\end{eqnarray}
It follows from (\ref{mjb}) that
$$
\bn_j=\frac{N}{\mu}\tilde\bm_j R(j)+\frac{N}{\mu^\star} \tilde \bm_j^{\star} 
P(j)=\hat \bm_j R(j)+ \hat \bm_j^{\star} P(j),
$$
where $\hat \bm=\frac{N}{\mu} \tilde \bm$. From it, we obtain the solution for 
$\hat \bm$:
and this leads to
$$
\hat\bm_j=\left(\bn_j P(j)^{-1}- \bn_j^{\star} R(j)^{ \star-1}\right) 
\left(R(j) P(j)^{-1}-P(j)^{\star} R(j)^{\star -1}\right)^{-1}.
$$
We substitute it into (\ref{tilde1}) and determine that the 
$j$-th diagonal entry in the diagonal matrix $C$ satisfies
\begin{eqnarray}
&&c_j^2=1+\sum_{\alpha=1}^r \left(\hat \bm_{j\alpha}  \bn_{j 
\alpha} 
+\hat \bm_{j\alpha}^\star  \bn_{j \alpha}^\star \right)=1+\hat \bm 
\bn^{\tr}+\hat\bm^{\star} \bn^{\dagger}\nonumber\\
&&\quad=1+\bn_j \left(R(j)-P(j)^{\star} R(j)^{\star -1} P(j)\right)^{-1} 
\bn^{\tr} +\bn_j^{\star} \left(P(j)^{\star}-R(j) p(j)^{-1} 
R(j)^{\star}\right)^{-1} \bn^{\tr}\nonumber\\
&&\qquad\quad + \bn_j^{\star} \left(R(j)^{\star}-P(j) 
R(j)^{-1} P(j)^{\star}\right)^{-1} 
\bn^{\dagger} +\bn_j \left(P(j)-R(j)^{\star} p(j)^{\star -1} 
R(j)\right)^{-1} \bn^{\dagger}.\label{cjb}
\end{eqnarray}
\begin{Lem}\label{detcj} Matrices $R(j)$ and $P(j)$ defined by (\ref{rjb}) 
and (\ref{pjb}) respectively and scalar $c_j$ given by (\ref{cjb})
satisfy the identity
\begin{eqnarray}\label{detin}
&&\tau_{j-1}= |\mu|^{-4r} \tau_j  c_j^2, \qquad  \tau_j=\det \left( R(j) 
R(j)^{\star}-R(j) P(j) R(j)^{-1} P(j)^{\star} \right) .
\end{eqnarray}
\end{Lem}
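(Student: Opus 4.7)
The plan is to view $\tau_j$ as the determinant of a single $2r\times 2r$ block matrix, and then exhibit the passage from $M(j-1)$ to $M(j)$ as a rank-$1$ perturbation. Set
\[
 M(j)=\begin{pmatrix} R(j) & P(j)^\star \\ P(j) & R(j)^\star \end{pmatrix}.
\]
The Schur-complement identity with pivot $R(j)$ gives $\det M(j)=\det R(j)\cdot\det\bigl(R(j)^\star-P(j)R(j)^{-1}P(j)^\star\bigr)=\det\bigl(R(j)R(j)^\star-R(j)P(j)R(j)^{-1}P(j)^\star\bigr)$, i.e.\ $\tau_j=\det M(j)$. So (\ref{detin}) reduces to $\det M(j-1)=|\mu|^{-4r}\det M(j)\,c_j^2$.

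Next I would package the two scalar recursions (\ref{re01}) into a single block recursion. Let
\[
 D_\mu=\diag(\mu I_r,\mu^\star I_r),\qquad v_j=\begin{pmatrix}\bn_j^{\tr}\\ \bn_j^{\dagger}\end{pmatrix},\qquad w_j=\begin{pmatrix}\bn_j & \bn_j^\star\end{pmatrix}.
\]
A direct block computation, using (\ref{re01}) together with the conjugate identities $R(j)^\star=\mu^{\star 2}R(j-1)^\star-\bn_j^{\dagger}\bn_j^\star$ and $P(j)^\star=|\mu|^2 P(j-1)^\star-\bn_j^{\tr}\bn_j^\star$, yields
\[
 D_\mu\, M(j-1)\, D_\mu = M(j)+v_j w_j .
\]
Taking determinants, using $\det D_\mu=|\mu|^{2r}$, and applying the matrix determinant lemma gives
\[
 |\mu|^{4r}\det M(j-1)=\det\bigl(M(j)+v_j w_j\bigr)=\det M(j)\,\bigl(1+w_j M(j)^{-1} v_j\bigr),
\]
so the lemma collapses to the scalar identity $c_j^2 = 1+w_j M(j)^{-1} v_j$.

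For this final identification the plan is to invert $M(j)$ blockwise. The Schur-complement formula with pivot $R(j)^\star$ expresses the $(1,1)$ and $(1,2)$ blocks of $M(j)^{-1}$ as $(R(j)-P(j)^\star R(j)^{\star -1}P(j))^{-1}$ and $-(R(j)-P(j)^\star R(j)^{\star -1}P(j))^{-1}P(j)^\star R(j)^{\star -1}$, and the pivot $R(j)$ expresses the $(2,1)$ and $(2,2)$ blocks as $-R(j)^{\star -1}P(j)(R(j)^\star-P(j)R(j)^{-1}P(j)^\star)^{-1}$ (equivalently $-(R(j)^\star-P(j)R(j)^{-1}P(j)^\star)^{-1}P(j)R(j)^{-1}$, using symmetry of $R(j)$) and $(R(j)^\star-P(j)R(j)^{-1}P(j)^\star)^{-1}$. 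Expanding $w_j M(j)^{-1} v_j$ into four bilinear forms, the two diagonal contributions match the first and third summands of (\ref{cjb}) verbatim. The off-diagonal contributions require the one-line algebraic rearrangement
\[
 -R(j)^{\star -1}P(j)\bigl(R(j)-P(j)^\star R(j)^{\star -1}P(j)\bigr)^{-1}=(P(j)^\star - R(j)P(j)^{-1}R(j)^\star)^{-1},
\]
verified by inverting both sides and multiplying out, together with its conjugate counterpart giving $(P(j)-R(j)^\star P(j)^{\star -1}R(j))^{-1}$; these match the second and fourth summands of (\ref{cjb}).

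The main obstacle is this last bookkeeping. The expression (\ref{cjb}) is written in a deliberately symmetric form involving four different Schur complements, so one must pick the right pivot for each block of $M(j)^{-1}$ and exploit both the symmetry $R(j)^{\tr}=R(j)$ and the Hermiticity $P(j)^\dagger=P(j)$ to recognise the cross-form identifications. Once $c_j^2=1+w_j M(j)^{-1} v_j$ is established, the matrix determinant lemma applied to the rank-$1$ perturbation $D_\mu M(j-1)D_\mu= M(j)+v_j w_j$ closes the proof in one line.
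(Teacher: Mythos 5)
Your proposal is correct, and it takes a genuinely different route from the paper. The paper factors $\tau_j=\det R(j)\,\det W(j)$ with $W(j)=R(j)^{\star}-P(j)R(j)^{-1}P(j)^{\star}$, applies Sylvester's determinant theorem separately to the recursions for $R$ and for $W$ (the latter requiring the Sherman--Morrison formula to express $R(j-1)^{-1}$ and a somewhat laborious expansion of $W(j-1)$ into $W(j)$ plus a rank-one correction), and finally needs the resolvent identity $R(j)^{-1}+R(j)^{-1}P(j)^{\star}W(j)^{-1}P(j)R(j)^{-1}=\left(R(j)-P(j)^{\star}R(j)^{\star-1}P(j)\right)^{-1}$ to recognise the resulting scalar as $c_j^2$. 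You instead realise $\tau_j$ as the determinant of the single $2r\times 2r$ Hermitian-looking block matrix $M(j)$, observe that the step $j\to j-1$ is conjugation by $D_\mu$ composed with a rank-one perturbation, and invoke the matrix determinant lemma exactly once; the factor $|\mu|^{4r}=(\det D_\mu)^2$ then appears for free, and the identification of $1+w_jM(j)^{-1}v_j$ with $c_j^2$ reduces to the standard block-inversion formulas plus the two one-line inversions you state (both of which I have checked: inverting $-R(j)^{\star-1}P(j)\bigl(R(j)-P(j)^{\star}R(j)^{\star-1}P(j)\bigr)^{-1}$ does give $P(j)^{\star}-R(j)P(j)^{-1}R(j)^{\star}$, and similarly for the conjugate). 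Your version is structurally cleaner and explains why (\ref{cjb}) naturally carries four Schur-complement summands --- they are precisely the four blocks of $M(j)^{-1}$; the paper's version stays entirely within $r\times r$ matrices and never introduces the auxiliary block object. One small remark: the equality of the two forms of the $(2,1)$ block of $M(j)^{-1}$ holds for any invertible block matrix and does not actually require the symmetry of $R(j)$ or the Hermiticity of $P(j)$, so that parenthetical in your argument is harmless but unnecessary.
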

\begin{proof} 
We first apply Sylvester's determinant theorem to (\ref{re01}). It leads to
\begin{eqnarray}\label{r01}
\mu^{2 r} \det R(j-1)=\det R(j) (1+\bn_j R(j)^{-1} \bn_j^{\tr} ).
\end{eqnarray}
Using the Sherman-Morrison formula, we find that
$$
R(j-1)^{-1}=\mu^2 R(j)^{-1} \left(1-\frac{1}{1+\alpha} \bn_j^{\tr} \bn_j 
R(j)^{-1}\right), \qquad \alpha=\bn_j R(j)^{-1} \bn_j^{\tr}.
$$
Using $\alpha$, we rewrite (\ref{r01}) as
\begin{eqnarray}\label{r01a}
 \det R(j-1)=\frac{1}{\mu^{2 r}}\det R(j) (1+\alpha ).
\end{eqnarray}
Using (\ref{re01}) we now compute 
\begin{eqnarray*}
&&\quad R(j-1)^{\star}- P(j-1) R(j-1)^{-1} P(j-1)^{\star}\\
&&=\frac{1}{\mu^{\star 2}}
\left( R(j)^{\star}+\bn_j^{\dagger} \bn_j^{\star}-(P(j)+\bn_j^{\dagger} \bn_j)
R(j)^{-1} \left(1-\frac{1}{1+\alpha} \bn_j^{\tr} \bn_j R(j)^{-1}\right) 
(P(j)^{\star}+\bn_j^{\tr} \bn_j^{\star})\right)\\
&&=\frac{1}{\mu^{\star 2}} \left(R(j)^{\star}- P(j) R(j)^{-1} P(j)^{\star}+\frac{1}{1+\alpha} 
(\bn_j^{\dagger}-P(j) R(j)^{-1} \bn_j^{\tr}) (\bn_j^{\star}-\bn_j R(j)^{-1} P(j)^{\star})\right).
\end{eqnarray*}
Let $W(j)=R(j)^{\star}- P(j) R(j)^{-1} P(j)^{\star}$. Using Sylvester's determinant theorem, we obtain
\begin{eqnarray*}
&&\quad\det W(j-1)
=\frac{1}{\mu^{\star 2r}}\det W(j)
\left(1+\frac{1}{1+\alpha}(\bn_j^{\star}-\bn_j R(j)^{-1} P(j)^{\star}) W(j)^{-1} 
(\bn_j^{\dagger}-P(j) R(j)^{-1} \bn_j^{\tr})\right).
\end{eqnarray*}
Combining it with (\ref{r01a}) and using the notation in (\ref{detin}), we have
\begin{eqnarray*}
&&\qquad \tau_{j-1}=\det R(j-1) \det W(j-1)\\
&&=\frac{1}{|\mu|^{4r}} \det R(j) \det W(j)
 \left(1+\alpha+(\bn_j^{\star}-\bn_j R(j)^{-1} P(j)^{\star}) W(j)^{-1} 
(\bn_j^{\dagger}-P(j) R(j)^{-1} \bn_j^{\tr})\right).
\end{eqnarray*}
We compare the scalar expression inside the bracket with $c_j^2$ given by (\ref{cjb}) and use the identity
$$
R(j)^{-1}+R(j)^{-1} P(j)^{\star} W(j)^{-1}P(j) R(j)^{-1}=\left(R(j)-P(j)^{\star} R(j)^{\star -1} P(j)\right)^{-1} 
$$
and we get the formula (\ref{detin}) in the statement.
\end{proof}
We are now able to write down the rank $r$ breather 
solutions as follows:
\begin{Pro}\label{pro5b}
Let $\mathbf{n}_0$ be a rank $r$ constant matrix of size $N\times r$ and $
\mu\in\bbbc,\ |\mu|\ne 1,\ \mu\ne \omega^k  \mu^\star,\ k=1,\ldots,N$. 
A rank $r$ breather solution of the  system (\ref{2+1}) on a trivial background
($\phi^{(i)}= 0, i=1,\cdots, N$) is given by
\begin{eqnarray}\label{rbreather}
 \phi^{(j)}=\frac{1}{2}\ln \left(
\frac{\tau_{j-1}\tau_{j+1}}{\tau_j^2}\right), \qquad \tau_j=\det \left( R(j) 
R(j)^{\star}-R(j) P(j) R(j)^{-1} P(j)^{\star} \right) 
\end{eqnarray}
where $r\times r$ matrices $R(j)$ and $P(j)$ are defined by (\ref{rjb}) and (\ref{pjb}) respectively
and $$\bn=\exp((\mu \Delta^{-1}-\mu^{-1} \Delta) x-(\mu^{-2} \Delta^2-\mu^2 
\Delta^{-2}) t)\bn_0 .$$ 
\end{Pro}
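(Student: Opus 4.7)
The plan is to reduce the statement to Lemma \ref{detcj} combined with the background formula \eqref{sol}. Once Lemma \ref{detcj} is in hand the assertion becomes a one-line algebraic manipulation, so the real work is reading the lemma correctly and verifying that the resulting $\phi^{(j)}$ is real.

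First I would invoke Lemma \ref{detcj}, which gives the identity
\begin{equation*}
c_j^2=|\mu|^{4r}\,\frac{\tau_{j-1}}{\tau_j},
\end{equation*}
where $\tau_j=\det\!\bigl(R(j)R(j)^{\star}-R(j)P(j)R(j)^{-1}P(j)^{\star}\bigr)$. Combined with \eqref{sol}, $\exp(\phi^{(j)})=c_j/c_{j+1}$, this yields
\begin{equation*}
\exp(2\phi^{(j)})=\frac{c_j^2}{c_{j+1}^2}=\frac{\tau_{j-1}\tau_{j+1}}{\tau_j^2},
\end{equation*}
and taking logarithms produces the stated formula. The representation of $\bn$ in terms of $\Psi_0(x,t,\mu)\bn_0$ is exactly \eqref{bnb}, with $\Psi_0$ taken from \eqref{psi0}; this accounts for the explicit exponential expression for $\bn$ in the statement.

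The step that needs a little care is the passage to the real logarithm, i.e. confirming that $\tau_{j-1}\tau_{j+1}/\tau_j^2$ is real and positive so that $\phi^{(j)}\in\bbbr$. The proof of Lemma \ref{detcj} already constrains $c_j^2$ to be a non-negative real scalar (because $c_j$ is, by construction, a real diagonal entry of $C$ once $\Phi$ is built from an admissible generic orbit); equivalently, one can argue directly using $P(j)^{\dagger}=P(j)$ and $R(j)^{\tr}=R(j)$ that the factor $R(j)R(j)^{\star}-R(j)P(j)R(j)^{-1}P(j)^{\star}$ has real determinant, so $\tau_j\in\bbbr$ and $\tau_{j-1}/\tau_j>0$ by the reality of $c_j^2$. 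This sign check, propagated from $j$ to $j+1$, is the only non-mechanical part of the argument.

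The main obstacle, then, is not the proposition itself but the preparatory Lemma \ref{detcj}, whose proof is the genuine computation: rewriting $R(j-1)$ and $P(j-1)$ via \eqref{re01}, applying the Sherman--Morrison formula to invert $R(j-1)$, and then using Sylvester's determinant theorem twice to relate $\det R(j-1)\det W(j-1)$ to $\det R(j)\det W(j)$ multiplied by the scalar that, after simplification via the identity
\begin{equation*}
R(j)^{-1}+R(j)^{-1}P(j)^{\star}W(j)^{-1}P(j)R(j)^{-1}=\bigl(R(j)-P(j)^{\star}R(j)^{\star-1}P(j)\bigr)^{-1},
\end{equation*}
matches $c_j^2$ as expressed in \eqref{cjb}. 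Given Lemma \ref{detcj}, the proposition follows essentially by inspection.
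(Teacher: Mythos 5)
Your proposal is correct and takes essentially the same route as the paper: the paper's proof likewise consists of citing Lemma \ref{detcj} for $c_j^2=|\mu|^{4r}\tau_{j-1}/\tau_j$ and then applying \eqref{sol} with $\bn$ given by \eqref{bnb} and \eqref{psi0}. Your additional remarks on the reality and positivity of $\tau_{j-1}\tau_{j+1}/\tau_j^2$ go slightly beyond what the paper writes down, but they do not change the argument.
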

\begin{proof} It follows from Lemma \ref{detcj} that
$$
c_j^2=|\mu|^{4r} \frac{\tau_{j-1}}{\tau_j} .
$$
Using (\ref{sol}) we get the solutions in the statement, where $\bn$ is 
determined by (\ref{bnb}) and (\ref{psi0}).
\end{proof}

Notice that taking $r=1$ in Proposition \ref{pro5b} we get the results in 
Proposition \ref{pro5}.

\subsection{The $\tau$-function and continuous limits}\label{sec33}
In the last two sections, we showed that both kink and breather solutions are 
expressed in the form
\begin{eqnarray}\label{phitau}
 \phi^{(i)}=\ln \frac{c_i}{c_{i+1}}=\frac{1}{2} \ln 
\frac{\tau_{i-1}\tau_{i+1}}{\tau_i^2}
\end{eqnarray}
according to Propositions \ref{pro4}--\ref{pro5b}. In this section, we derive 
the equations for $c_i$ and $\tau_i$. To simplify the notations, we 
drop the index $i$ and introduce the shift operator $\cS$ mapping the index $i$ 
to $i+1$, that is, $c_i=c$, $c_{i+1}=\cS c$, $c_{i-1}=\cS^{-1} c$ and the same 
for $\tau_i$.  The shift operator satisfies $\cS^N \tau =\tau$ and $\cS^N c=c$.

Let $c=e^{u}$ and $\tau=e^{2 v}$. It follows from (\ref{phitau}) that
\begin{eqnarray}\label{uvi}
\phi= (1-\cS) u=  \cS^{-1} (\cS-1)^2 v.
\end{eqnarray}
This leads to
\begin{eqnarray}\label{thuvi}
\theta=-(\cS-1)^{-1} (\cS+1) \phi_x=(\cS+1) u_x.
\end{eqnarray}
Substituting (\ref{uvi}) and (\ref{thuvi}) into (\ref{2+1}), we get
$$
(1-\cS) u_t=(\cS+1) u_{xx}+\left((\cS+1)u_x \right)
\left((1-\cS)u_x \right)+(\cS^2-1) e^{2 (\cS^{-1}-1)u} .
$$
Thus we have
\begin{eqnarray}\label{equu}
u_t=(\cS+1)(1-\cS)^{-1} u_{xx}+u_x^2 -(\cS+1)\left( e^{2 
(\cS^{-1}-1)u} -1\right),
\end{eqnarray}
where we choose the integration constant to be $1$ such that $u=0$ is a
solution of (\ref{equu}). Since $u=\ln c$, we have  the equation for function 
$c$ as follows:
\begin{eqnarray}\label{equc}
\frac{c_t}{c}=(\cS+1)(1-\cS)^{-1} 
\left(\frac{c_{xx}}{c}-\frac{c_x^2}{c^2}\right)+\frac{c_x^2}{c^2} 
-(\cS+1)\left( e^{2(\cS^{-1}-1)\ln c} -1\right).
\end{eqnarray}

We now derive the equation for $v$. From (\ref{uvi}) we have 
$u=(\cS^{-1}-1) v$. Note that
$$
u_x^2=(1+\cS^{-1}) \left(v_x^2-v_x \cS v_x\right)+(1-\cS^{-1}) v_x \cS v_x
$$
Substituting these into (\ref{equu}), we obtain
\begin{eqnarray}\label{equv}
 v_t=(\cS+1)(1-\cS)^{-1} \left(v_{xx}+v_x^2-v_x \cS v_x 
-e^{2(\cS-2+\cS^{-1})v}+1\right) -v_x \cS v_x .
\end{eqnarray}
Now the $\tau$-function is related to $v$ by $v=\frac{1}{2} \ln \tau$. 
It follows from (\ref{equv}) that
\begin{eqnarray}\label{tauf}
 \frac{\tau_t}{\tau}=(\cS+1)(1-\cS)^{-1} 
\left(\frac{\tau_{xx}}{\tau}-\frac{1}{2} \left(
\frac{\tau_x^2}{\tau^2}+\frac{\tau_x \cS \tau_x}{\tau
\cS \tau}\right)-2 \frac{(\cS \tau) (\cS^{-1}\tau)}{\tau^2}+2
\right)-\frac{1}{2} \frac{\tau_x \cS \tau_x}{\tau \cS \tau} .
\end{eqnarray}
Thus we have proved the following statement:
\begin{Pro}\label{pro6}
If function $c$ satisfies (\ref{equc}), then $\phi=\ln \frac{c}{\cS c}$ 
satisfies the 2-dimensional Volterra equation (\ref{2+1}).  If function $\tau$ 
satisfies (\ref{tauf}), then $\phi=\ln \frac{(\cS \tau) 
(\cS^{-1}\tau)}{\tau^2}$ 
satisfies the 2-dimensional Volterra equation (\ref{2+1}).
\end{Pro}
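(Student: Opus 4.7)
The plan is to reverse the calculation just carried out: the derivation going from the two-dimensional Volterra system (\ref{2+1}) down to (\ref{equc}) and (\ref{tauf}) consists of invertible algebraic substitutions, so the content of the proposition is to verify that each step can be unwound. For the first assertion I would set $u=\ln c$, so that $\phi=\ln(c/\cS c)=(1-\cS)u$, and introduce the auxiliary quantity $\theta=(\cS+1)u_x$. The second equation of (\ref{2+1}), rewritten in operator form as $(\cS-1)\theta+(\cS+1)\phi_x=0$, then reduces to the identity $(\cS-1)(\cS+1)u_x+(\cS+1)(1-\cS)u_x=0$, which holds for any $u$. Hence the constraint relating $\phi$ and $\theta$ is free.

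The core step is the evolution equation. Starting from (\ref{equc}) and substituting $c=e^u$ turns it into (\ref{equu}). I would then apply $(1-\cS)$ to both sides of (\ref{equu}) to obtain
\[
(1-\cS)u_t=(\cS+1)u_{xx}+\bigl((\cS+1)u_x\bigr)\bigl((1-\cS)u_x\bigr)+(\cS^2-1)e^{2(\cS^{-1}-1)u}.
\]
Using $\phi=(1-\cS)u$, $\theta=(\cS+1)u_x$, and the identities $(\cS^{-1}-1)u=\cS^{-1}\phi=\phi^{(i-1)}$ and $\cS^2(\cS^{-1}-1)u=(\cS-\cS^2)u=\cS\phi=\phi^{(i+1)}$, the right-hand side becomes exactly $\theta_x+\theta\phi_x-e^{2\phi^{(i-1)}}+e^{2\phi^{(i+1)}}$, recovering the first equation of (\ref{2+1}). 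The zero-sum normalizations $\sum_i\phi^{(i)}=0$ and $\sum_i\theta^{(i)}=0$ will follow automatically — the former by telescoping $(1-\cS)u$ on a period, and the latter up to a trivial gauge in $c$ corresponding to the conserved quantity $\sum_i\ln c_i$.

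For the second assertion I would set $v=\tfrac12\ln\tau$ so that $u=(\cS^{-1}-1)v$, which gives $\phi=(1-\cS)u=\cS^{-1}(\cS-1)^2 v=\tfrac12\ln\bigl((\cS\tau)(\cS^{-1}\tau)/\tau^2\bigr)$, matching the form stated in the proposition. Substituting $\tau=e^{2v}$ in (\ref{tauf}) reproduces (\ref{equv}); then, using $u_x^2=(1+\cS^{-1})(v_x^2-v_x\cS v_x)+(1-\cS^{-1})v_x\cS v_x$, one checks that (\ref{equv}) implies (\ref{equu}) for the associated $u$. The first part of the proposition then applies to yield a solution of (\ref{2+1}).

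The main obstacle will be handling the operator $(1-\cS)^{-1}$ appearing in (\ref{equc}) and (\ref{equv}). Since $\cS^N=\mathrm{id}$, the operator $1-\cS$ has a one-dimensional kernel (the constants in $i$) on periodic sequences, so $(1-\cS)^{-1}$ is defined only modulo this kernel. I would circumvent this by applying $(1-\cS)$ to both sides of the scalar equation, which clears the inverse and produces an unambiguous differential-difference identity; the ambiguity corresponds precisely to the shift $c\mapsto \alpha(x,t)\,c$ that does not change $\phi=\ln(c/\cS c)$, and likewise for the $\tau$-function. Once this ambiguity is interpreted as a gauge freedom, the two implications in the proposition follow directly from the reversible substitutions above.
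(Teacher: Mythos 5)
Your proposal is correct and follows essentially the same route as the paper: the paper's proof of Proposition \ref{pro6} is precisely the chain of substitutions $u=\ln c$, $\phi=(1-\cS)u$, $\theta=(\cS+1)u_x$, $v=\tfrac12\ln\tau$ carried out just before the statement, implicitly treated as reversible, and you simply run that same chain in the opposite direction. Your extra care about the kernel of $1-\cS$ and the gauge freedom $c\mapsto\alpha(x,t)c$ addresses a point the paper glosses over, but it does not change the substance of the argument.
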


We know that the continuous limit of system (\ref{2+1}) goes to the KP 
equation. Indeed, for the continuous limit as $N\to \infty$ and $h=N^{-1}$, we 
set
\begin{eqnarray}\label{lmr}
 T=h^3 t, \quad X=i h+4ht,\quad  Y=h^2 x, 
\end{eqnarray}
which imply that
\begin{eqnarray}\label{lmrd}
 \frac{\partial }{\partial t}=h^3 \frac{\partial }{\partial T}+4 h 
\frac{\partial }{\partial X}, \quad \frac{\partial }{\partial x}
=h^2\frac{\partial }{\partial Y} .
\end{eqnarray}
Let $\phi^{(i)}(x,t)=h^2 w(X, Y, T)$. In the new variables system (\ref{2+1}) takes the form 
\[ w_T=\frac{2}{3}w_{XXX}+8ww_X-2D_X^{-1}w_{YY}+O(h^2),\] 
and it  goes to the KP 
equation in the limit $h\to 0$. We can compute the continuous limits of equations (\ref{equu}) and (\ref{equv}) by 
setting
\begin{eqnarray}\label{cfun}
u(x,t)= h \hu(X,Y,T),\quad v(x,t)=\hv(X,Y,T) 
\end{eqnarray}
respectively. Notice that 
$$
\cS u=\hu(X+h,Y,T)=\sum_{n=0}^{\infty}\frac{h^n}{n!} 
\frac{\partial^n \hu}{\partial X^n}=e^{h \frac{\partial}{\partial X}} \hu .
$$
Hence we replace the shift operator $\cS$ by $e^{h \frac{\partial}{\partial 
X}}$. This leads to
\begin{eqnarray}\label{cons}
{h \frac{\partial}{\partial X}} 
(1+\cS)(1-\cS)^{-1}=-2-\frac{h^2}{6}\frac{\partial^2}{\partial X^2}+O(h^4) . 
\end{eqnarray}
Substituting (\ref{lmrd}), (\ref{cfun}) and (\ref{cons})  into (\ref{equu}), we 
obtain
$$
h^5 \hu_{XT}+4h^3 \hu_{XX}=\left(-2-\frac{h^2}{6}\frac{\partial^2}{\partial 
X^2}\right)h^5 \hu_{YY}+4h^3 \hu_{XX}+\frac{2}{3}h^5 \hu_{XXXX}-8 h^5 \hu_X 
\hu_{XX}+O(h^6)
$$
implying
$$
\hu_{XT}+2 \hu_{YY}-\frac{2}{3} \hu_{XXXX}+8  \hu_X \hu_{XX}=O(h) .
$$
In the same way, we substitute (\ref{lmrd}), (\ref{cfun}) and (\ref{cons})  
into (\ref{equv}) and obtain its continuous limit
$$
\hv_{XT}+2 \hv_{YY}-\frac{2}{3} \hv_{XXXX}-4 \hv_{XX}^2=O(h) .
$$
In the variable $\tit=e^{\hv}$ and in the limit $h\to 0$, it becomes
$$
3(\tit \tit_{XT}-\tit_X \tit_T)+6 (\tit \tit_{YY}-\tit_Y^2)-2 \tit \tit_{4X}+8 
\tit_X \tit_{3X}-6 \tit_{XX}^2=0,
$$
which is the standard bilinear form for the KP equation. This gives 
us the link between the  Hirota $\tau$-function for the bilinear form and  the 
functions $\tau_i$ defined in Propositions 
\ref{pro4}-\ref{pro5b} in the continuous limit $\tau_i=\tit^2$.

\section{Classification of rank $1$  solutions}\label{sec4}

In this section we describe and analyse kink and breather solutions given in 
Propositions \ref{pro4}, \ref{pro42} and Proposition 
\ref{pro5}. Solutions are completely characterised by 
the choice of the poles of the dressing matrix $\Phi(\lambda)$ and a constant 
vector $\bn_0$. In the case of kink solutions the invariant dressing matrix has 
$N$ poles while in 
the case of breather solutions the dressing matrix has $2N$ poles. 

It is convenient to use the basis
\begin{equation}\label{bek}
 \be_k=(\omega^k,\omega^{2k},\ldots,\omega^{(N-1)k},1)^{\rm tr},\qquad 
k=1,\ldots ,N
\end{equation}
 of eigenvectors $\Delta\be_k=\omega^k\be_k$ of the matrix 
$\Delta$
for representation of the vector 
\begin{eqnarray}\label{nalpha}
\bn_0=\sum_{k=1}^N \alpha_k \be_k.
\end{eqnarray}
In this basis we have
\[
 \Psi_0 (x, t, \mu)\be_k=\exp 
\left((\mu\omega^{-k}-\mu^{-1}\omega^{k})x+(\mu^2\omega^{-2k}-\mu^{-2}\omega^{2k
} )t\right)\be_k
\]
and thus 
\[
 \bn= \Psi_0 (x, t, \mu)\bn_0=\sum_{k=1}^N \alpha_k \exp 
\left((\mu\omega^{-k}-\mu^{-1}\omega^{k})x+(\mu^2\omega^{-2k}-\mu^{-2}\omega^{2k
} )t\right)\be_k .
\]
Obviously $\alpha_k=N^{-1}\be^{\tr}_{-k}\, \bn_0$ in (\ref{nalpha}). The 
vector $\bn_0$ 
in this basis is 
given by a matrix $\alpha=(\alpha_1,\ldots ,\alpha_N)$.

\subsection{Classification of rank $1$ kink solutions}
In this section, we classify the kink solutions of rank $1$ given by 
Propositions \ref{pro4} and \ref{pro42}.
We begin with the description of possible kink solutions in the cases $N=3,4$ 
and then give an overview of the general case. We draw attention to the fact 
that the properties of solutions for even and odd values of $N$ are slightly 
different. In particular, in the case of even $N$ there is an obvious solution 
\begin{equation}\label{solf}
 \phi^{(j)}=(-1)^j f(x),\qquad \theta_j=0,\quad j=1,\ldots,2N
\end{equation}
of the system (\ref{2+1}), where $f(x)$ is an arbitrary differentiable function 
of $x$. Moreover, Proposition \ref{pro42} gives new solutions only when $N$ is 
even.

In the case of kink solutions obtained in Proposition \ref{pro4} when 
$\mu=\mu^*=\nu\in \bbbr$ and $\nu\notin\{\pm1,0\}$, the vector $\bn_0$ is real 
and thus we require that
\[
\alpha_N=\alpha_N^\star,\qquad  \alpha_{N-k}=\alpha_k^\star,\quad k=1,\ldots 
,N-1.
\]
In the case of kink solutions obtained in Proposition \ref{pro42} when 
$\mu=\nu\exp(\frac{\pi\i}{N}), \nu\in \bbbr$ and $\nu\notin\{\pm1,0\}$, the 
vector $\bn_0=Q \bn^\star$. Notice that $Q \bn^\star=\alpha_1^\star \be_N+\alpha_2^\star \be_{N-1}+\cdots +\alpha_N^\star\be_1$
thus we require that
\[  \alpha_{k}=\alpha_{N-k+1}^\star,\quad k=1,\ldots ,N.
\]
In particular, when $N=2m$, it reduces to
\begin{eqnarray}
\alpha_{k}=\alpha_{2m-k+1}^\star,\quad k=1,\ldots ,m. \label{even}
\end{eqnarray}

\subsubsection{Classification of rank $1$ kink solutions in the case where $N=3$}
In this section we set $N=3$ for equation (\ref{2+1}), that is, equation 
(\ref{eqn3}). In this case it is sufficient to study rank $1$ solutions due to 
the fact that ${\rm Gr}(1,3)\simeq {\rm Gr}(2,3)$.

We will classify possible solutions in terms of the constant matrix 
$ \alpha=( \alpha_1,\alpha_2,\alpha_3)$, which represents the real vector 
$\bn_0$.
In variables
\begin{eqnarray*}
 \xi=(\nu -\nu^{-1}) x-(\nu^{-2} -\nu^2) t; \quad \eta=\frac{\sqrt{3}}{2} 
((\nu +\nu^{-1}) x-(\nu^{-2} +\nu^2) t) 
\end{eqnarray*}
 we have
$$\Psi_0 (x, 
t,\nu)\bv_1=e^{-\frac{\xi}{2}-\i \eta} \bv_1 ;
\quad \Psi_0 (x, t, \nu)\bv_2=e^{-\frac{\xi}{2}+\i \eta} \bv_2; \quad \Psi_0(x, 
t, \nu)\bv_3=e^{\xi} \bv_3.
$$
To get real solutions, it is required that $\bn$ and $\nu$ be real. Hence 
there are 
three cases: 
\begin{enumerate}
 \item $\alpha_3\neq 0\in \bbbr, \alpha_1=\alpha_2=0$. So we have
$$
\bn=\alpha_3 e^{\xi} \bv_3=\alpha_3 e^{\xi} (1, 1, 1)^{\tr} .
$$
This leads to $\tau_i=\alpha_3^2 e^{2\xi} (1+\nu^2+\nu^4)/(\nu^6-1)$ in 
Proposition \ref{pro4}.
It follows from (\ref{1kink}) that the solution is trivial, i.e.
$\phi^{(i)}=0$.
When we consider the 
classification of solutions later on,  we won't count this case any more.
\item $\alpha_3=0, \alpha_2= \alpha_1^\star \neq 0$. Without the loss of 
generality, we 
take $\alpha_1=e^{\i \beta },$ where $\beta\in \bbbr$ is constant. Then we  
take 
$\alpha_2=e^{-\i\beta}$ such that $\bn$ is real. Indeed,
\begin{eqnarray*}
 \bn=2 e^{-\frac{\xi}{2}} (\cos(\eta-\beta-\frac{2\pi}{3}) , 
\cos(\eta-\beta+\frac{2\pi}{3}),  \cos(\eta-\beta))^{\tr}.
\end{eqnarray*}
Using (\ref{1kink}), we obtain the solution 
\begin{eqnarray*}
&& \phi^{(j)}=\frac{1}{2}\ln\left(\frac{ 
\tau_{j-1}\tau_{j+1}}{\tau_j^2}\right), \\
&& \tau_j=\cos^2(\eta-\beta-\frac{2\pi}{3})  \nu^{2 \{(j-1)\!\! \mod 3\}} + 
\cos^2(\eta-\beta+\frac{2\pi}{3})  \nu^{2 \{(j-2)\!\! \mod 3\}} + 
\cos^2(\eta-\beta)  \nu^{2 \{j\!\! \mod 3\}}
\end{eqnarray*}
In this case, solutions are periodic functions of the variable $\eta$. 
\item $ \alpha_1 \alpha_2 \alpha_3\neq 0$. Let $\alpha_1= e^{\i 
\beta+\gamma}$,  where $\beta, \gamma$ are constants. We take 
$\alpha_3=e^\delta \in 
\bbbr$ and $\alpha_2= e^{-\i \beta+\gamma}$ in order $\bn$ to be real.
\begin{eqnarray*}
 \bn=e^{-\frac{1}{2} \xi+\gamma}\left(\begin{array}{c} 2  \cos( 
\eta-\beta-\frac{2\pi}{3})+ e^{\frac{3\xi}{2}+\delta-\beta}
\\
2 \cos( \eta-\beta+\frac{2 \pi}{3})+ e^{\frac{3\xi}{2}+\delta-\beta}
\\
2 \cos ( \eta-\beta)+ e^{\frac{3\xi}{2}+\delta-\beta}
\end{array}\right). 
\end{eqnarray*}
Using (\ref{1kink}), we are able to write down the solutions for $\phi^{(i)}$. 
Here we omit the tedious formula and only show the density plot. 
Notice that when $\xi\rightarrow +\infty$, solutions $\phi^{(i)}\rightarrow 0$;
when $\xi\rightarrow -\infty$, the contribution of $\alpha_1$ and $\alpha_2$ is 
dominant, which leads to periodic solutions. A line on the $(x, t)$-plane
given by $ e^{\frac{3\xi}{2}+\delta-\beta}=2$ corresponds to 
the wave front propagation. It has a slope equal to $-\nu/(1+\nu^2)$.

We now choose $\nu=0.4$ and
$\alpha_1=\alpha_2=\alpha_3=1$.  In Figure \ref{kink3} on the left we show a
density 
plot of $\phi^{(1)}$ in
the $(x,t)$-plane and on the right a snapshot of the solution $\phi^{(1)}$ at 
$t=0$.  Notice that
the solution is a periodic oscillating wave, oscillating in half of space (the
$x$-axis) only and moving to the left as time progresses.  Furthermore,
the frontier of the wave does not have a stationary profile and oscillates
in a rather complicated way.
\begin{figure}[ht]
\centering
\includegraphics[scale=0.5]{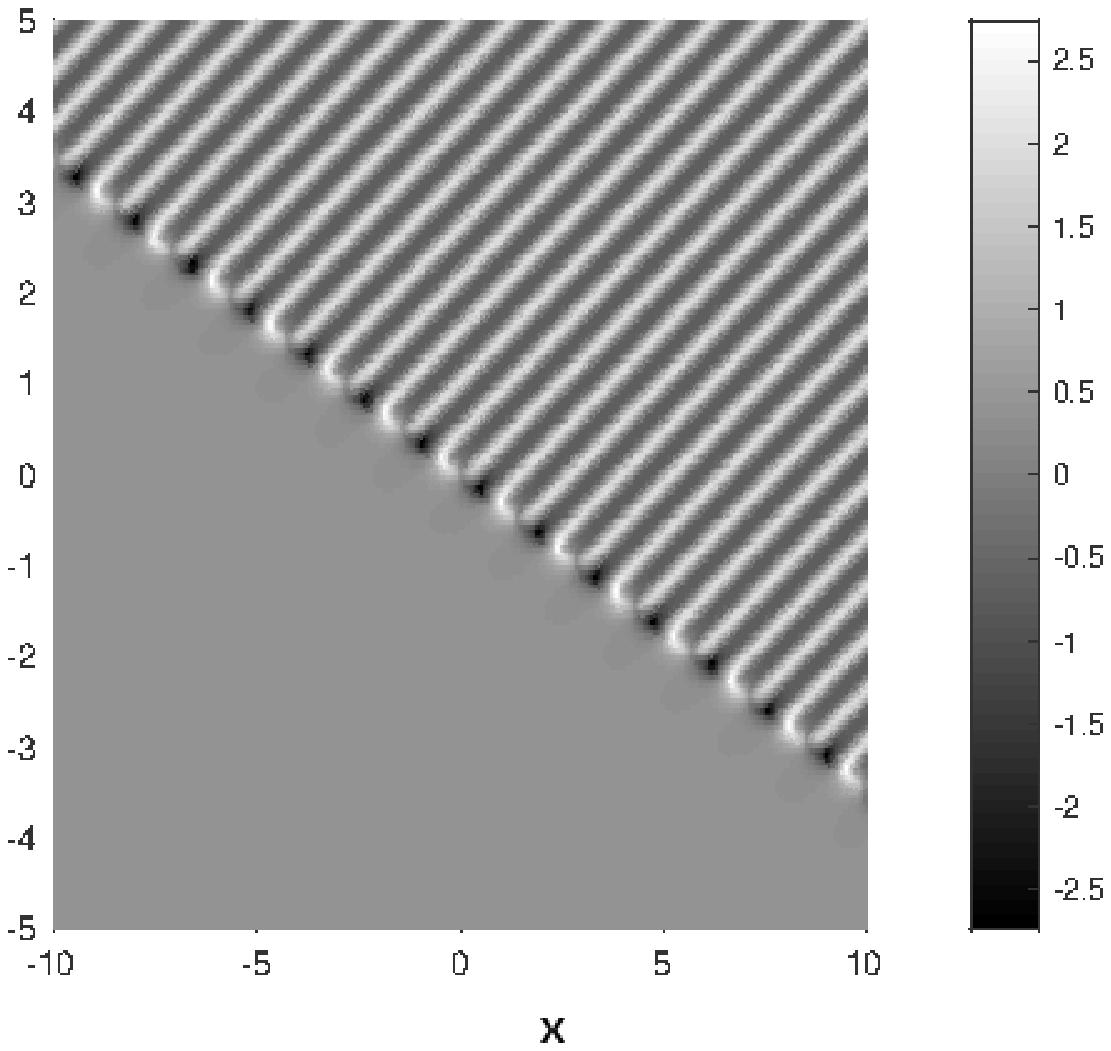}
\includegraphics[scale=0.467]{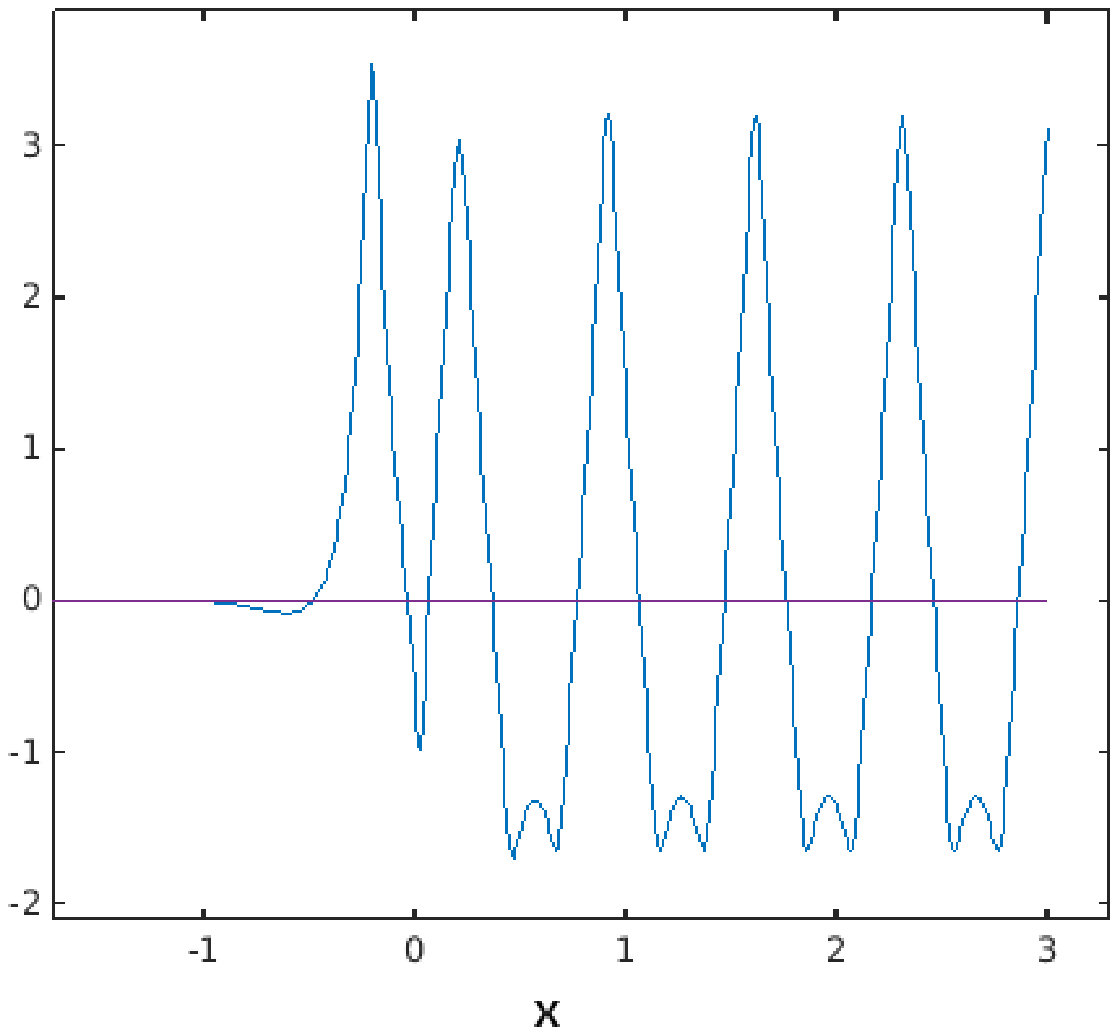}
\caption{Density  plot of $\phi^{(1)}(x,t)$  and a snapshot of $\phi^{(1)}$ at 
t=0 ($\alpha=(1,1,1)$, $\nu=0.4$) .
}
\label{kink3}
\end{figure}
\end{enumerate}
Therefore, in the case $N=3$ we have only two types of kink 
solutions. To the best of our knowledge the wave front solutions (see Figure
\ref{kink3}) represent a new class of exact solutions for integrable models.

\subsubsection{Classification of rank $1$ kink solutions in the case where 
$N=4$}\label{N4k}
For $N=4$  equation (\ref{2+1}) can be rewritten in the form 
\begin{eqnarray}\label{eq4}
\begin{array}{l} 2 \phi^{(i)}_{t}=\phi^{(i+1)}_{xx}-\phi^{(i+3)}_{xx} + 
\phi^{(i)}_{x}(\phi^{(i+1)}_{x}-\phi^{(i+3)}_{x})
+2 e^{2 \phi^{(i+1)}} -2 e^{2 \phi^{(i-1)}},
\end{array} 
\end{eqnarray}
where $\phi^{(i+4)}=\phi^{(i)}$ and $\sum_{i=1}^4\phi^{(i)}=0 $.
We classify its all possible rank $1$ kink solutions.

We first consider the case when $\mu=\mu^\star=\nu$ and the constant 
real vector 
$$
\bn_0=\alpha_1 \bv_1 +\alpha_2 \bv_2 +\alpha_3 \bv_3 +\alpha_4 \bv_4,
$$
where $\alpha=(\alpha_1,\alpha_2,\alpha_3,\alpha_4),\ 
\alpha_3=\alpha_1^\star\in \bbbc,\ \alpha_2,\alpha_4\in\bbbr$.
In variables
$\xi=(\nu -\nu^{-1}) x$, $\zeta=(\nu+\nu^{-1})x$ and 
$\eta=(\nu^{-2} -\nu^2) t$  we have
$$
\Psi_0 (x, t, \nu)\bv_1=e^{-\zeta\i+\eta}\bv_1; \quad \Psi_0 (x, t, 
\nu)\bv_2=e^{-\xi-\eta}\bv_2;
\quad \Psi_0 (x, t, \nu)\bv_3=e^{\zeta\i+\eta}\bv_3;\quad \Psi_0 (x, t, 
\nu)\bv_4=e^{\xi-\eta}\bv_4.
$$

There are four cases (excluding the trivial solutions) :
\begin{enumerate}
 \item $\alpha_2$ and $\alpha_4$ are both non-zero real numbers, and 
$\alpha_1=\alpha_3=0$. We can take $\alpha_4=1$,
then
 \begin{eqnarray*}
&& \bn= (e^{\xi-\eta}-\alpha_2 e^{-\xi-\eta}, e^{\xi-\eta}+\alpha_2 
e^{-\xi-\eta}, e^{\xi-\eta}-\alpha_2 e^{-\xi-\eta}, e^{\xi-\eta}+\alpha_2 
e^{-\xi-\eta} )^{\tr}\\
&&\quad=e^{\xi-\eta}(1-\alpha_2 e^{-2\xi}, 1+\alpha_2 e^{-2\xi}, 1-\alpha_2 
e^{-2\xi}, 1+\alpha_2 e^{-2\xi} )^{\tr}.
\end{eqnarray*}
This leads to 
\begin{eqnarray*}
&& \tau_j=e^{2\xi-2\eta}\left((1-\alpha_2 e^{-2\xi})^2  (\nu^{2 \{(j-1)\!\! 
\mod 4\}} +\nu^{2 \{(j-3)\!\! \mod 4\}})\right. \\
&&\qquad \left.+(1+\alpha_2 e^{-2\xi})^2  (\nu^{2 \{(j-2)\!\! \mod 4\}} +\nu^{2 
\{j\!\! \mod 4\}})\right)/(\nu^8-1)\\
&&\qquad=e^{2\xi-2\eta} \left(\frac{1+\alpha_2^2 e^{-4\xi}}{\nu^2-1}- (-1)^j   
\frac{2 \alpha_2 e^{-2 \xi}}{\nu^2+1}\right).
\end{eqnarray*}
Using (\ref{1kink}), we obtain the solution 
\begin{eqnarray*}
&& \phi^{(j)}=\frac{1}{2}\ln\left(\frac{ 
\tau_{j-1}\tau_{j+1}}{\tau_j^2}\right)=\ln\left|\frac{(1+\alpha_2^2 
e^{-4\xi})(\nu^2+1)
+2 \alpha_2 e^{-2 \xi} (-1)^{j}   (\nu^2-1)}{(1+\alpha_2^2 
e^{-4\xi})(\nu^2+1)-2 
\alpha_2 e^{-2 \xi} (-1)^j   (\nu^2-1)}
\right|,
\end{eqnarray*}
which is independent of time $t$ (see left plot in Figure \ref{kink4el}). 
\begin{figure}[ht]
\centering
\includegraphics[scale=0.5]{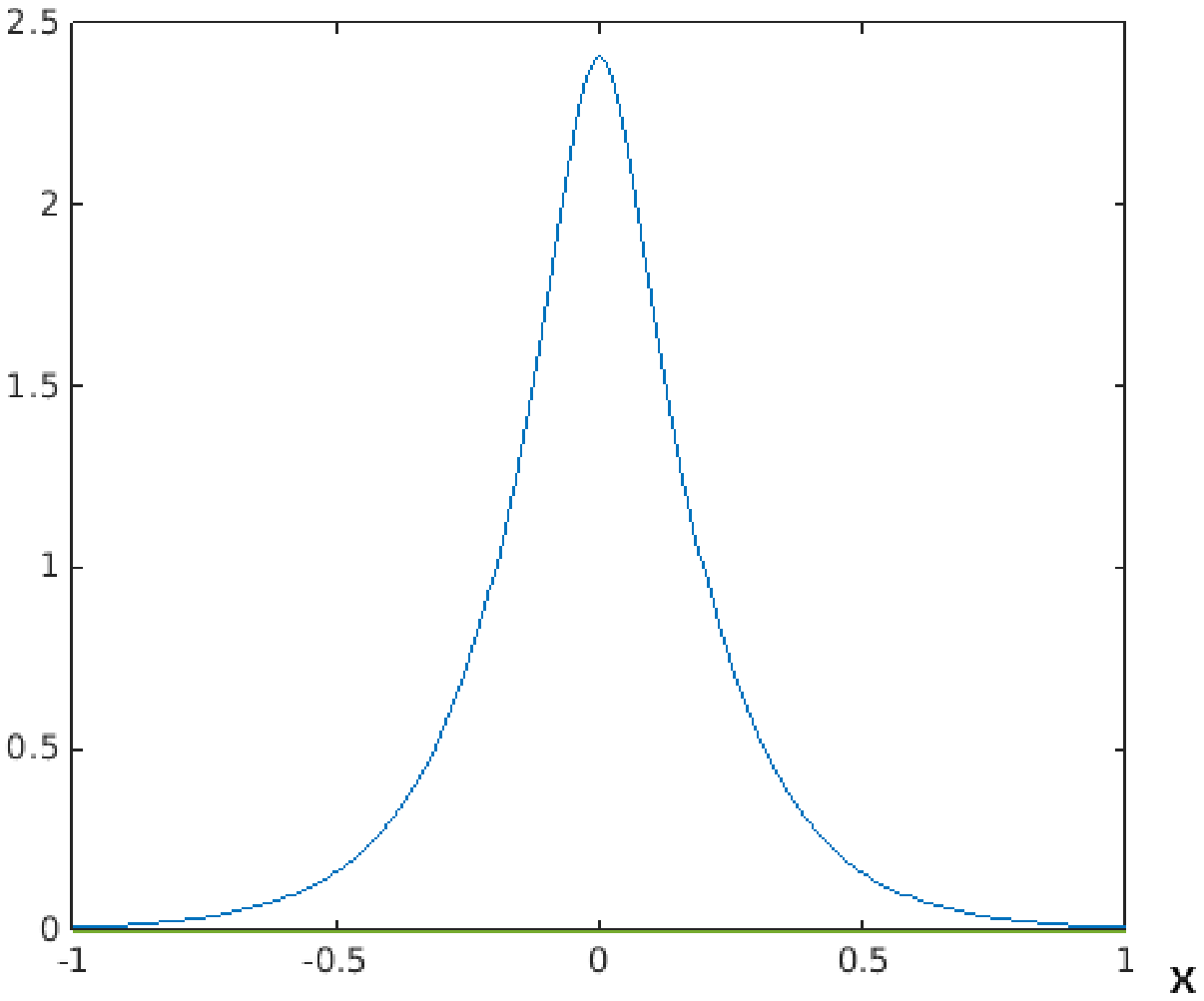}
\includegraphics[scale=0.5]{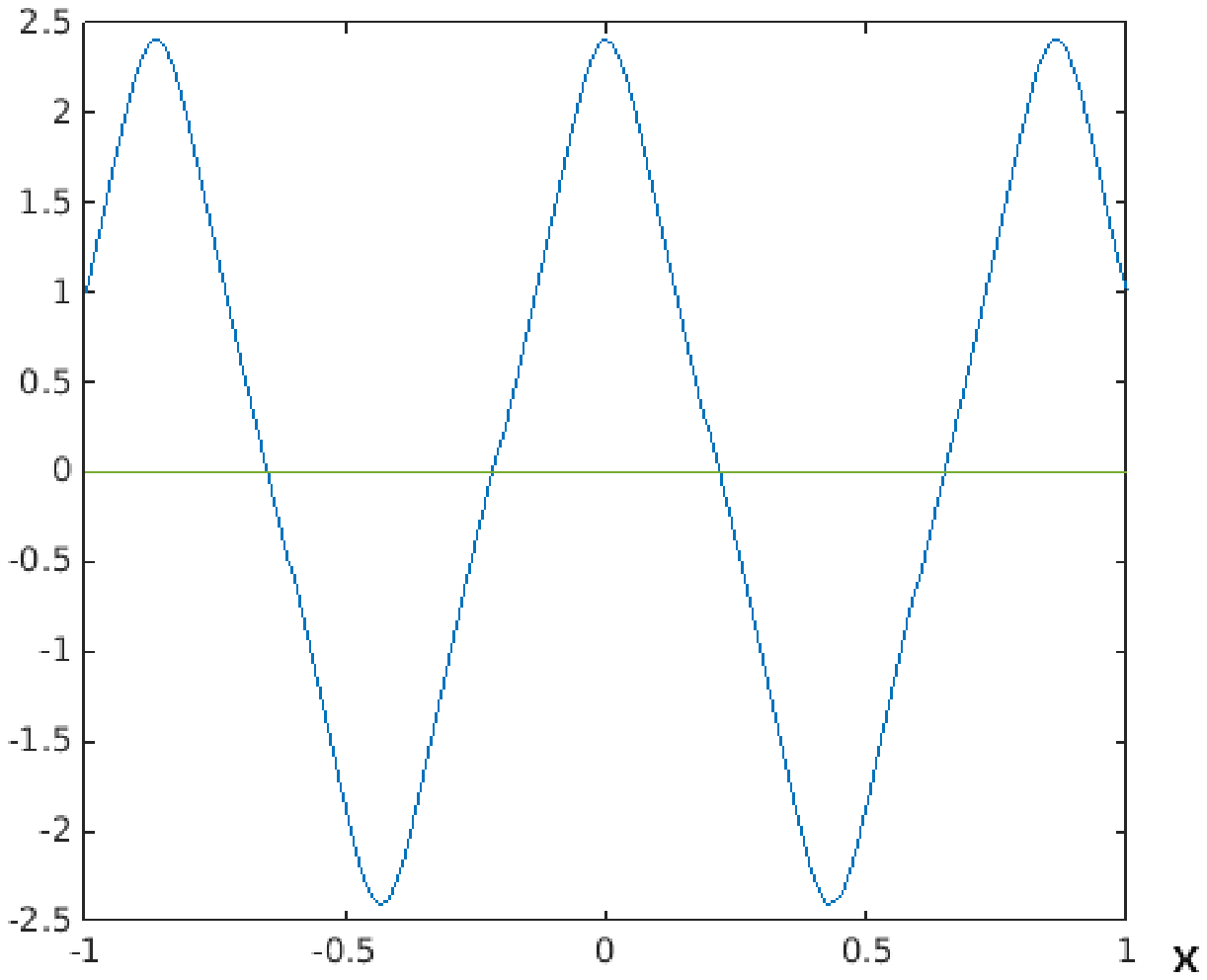}
\caption{Graph of $\phi^{(1)}(x,t)$  for kink solutions with $\nu=0.3$:
 on the left $\alpha=(0,1,0,1)$ and on the right  
$\alpha=(1,0,1,0)$
}\label{kink4el}\end{figure}
This solution 
is of type (\ref{solf}).
\item $\alpha_2=\alpha_4=0, \alpha_1 \alpha_3\neq 0$. Without the loss of 
generality, we 
take $\alpha_1=e^{\i \beta }=\alpha_3^\star,$ where $\beta\in \bbbr$ is a real
constant. Then
\begin{eqnarray*}
 \bn=2 e^{\eta} ( \sin(\zeta-\beta), -\cos(\zeta-\beta), -\sin(\zeta-\beta), 
\cos(\zeta-\beta)
)^{\tr}.
\end{eqnarray*}
Using (\ref{1kink}), we obtain the solution 
\begin{eqnarray*}
&& \phi^{(j)}=\frac{1}{2}\ln\left(\frac{ 
\tau_{j-1}\tau_{j+1}}{\tau_j^2}\right), \\
&& \tau_j=\sin^2(\zeta-\beta)  (\nu^{2 \{(j-1)\!\! \mod 4\}} -\nu^{2 
\{(j-3)\!\! \mod 4\}})
+\cos^2(\zeta-\beta)  (\nu^{2 \{j\!\! \mod 4\}} -\nu^{2 \{(j-2)\!\! \mod 4\}})
\end{eqnarray*}
Here we get periodic solutions (see right plot in Figure \ref{kink4el}). 
This is also a
solution of type (\ref{solf}).
\item Only one of $\alpha_2$ and $\alpha_4$ is nonzero, and $\alpha_1=e^{\i 
\beta }, 
\alpha_3=e^{-\i\beta}$,  where $\beta\in \bbbr$ is constant. Using Proposition 
\ref{pro4}, we are able to write down the solutions for 
$\phi^{(j)}$. Here we ignore the tedious formula and only show their plots, see 
 first two density
plots in Figure \ref{kinkcl}. 
\item $\alpha_1 \alpha_2 \alpha_3\alpha_4\neq 0$. Let $\alpha_1= e^{\i 
\beta},\alpha_3= e^{-\i \beta}$ ,  where $\beta\in \bbbr$ is constant. We take 
$\alpha_2, \alpha_4\in 
\bbbr$. The right plot in Figure \ref{kinkcl} is its density plot.
\begin{figure}[ht]
\centering
\includegraphics[scale=0.35]{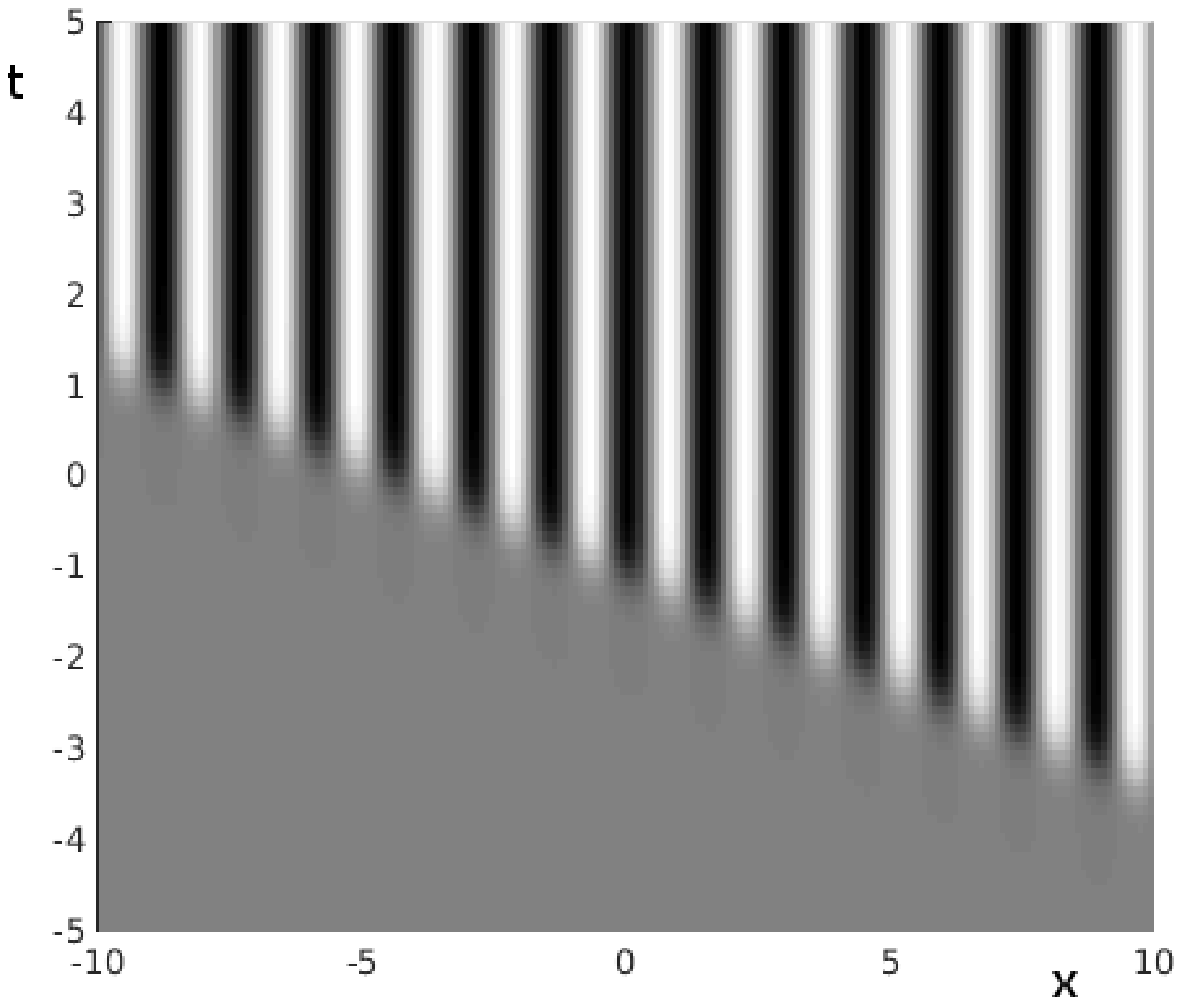}
\includegraphics[scale=0.35]{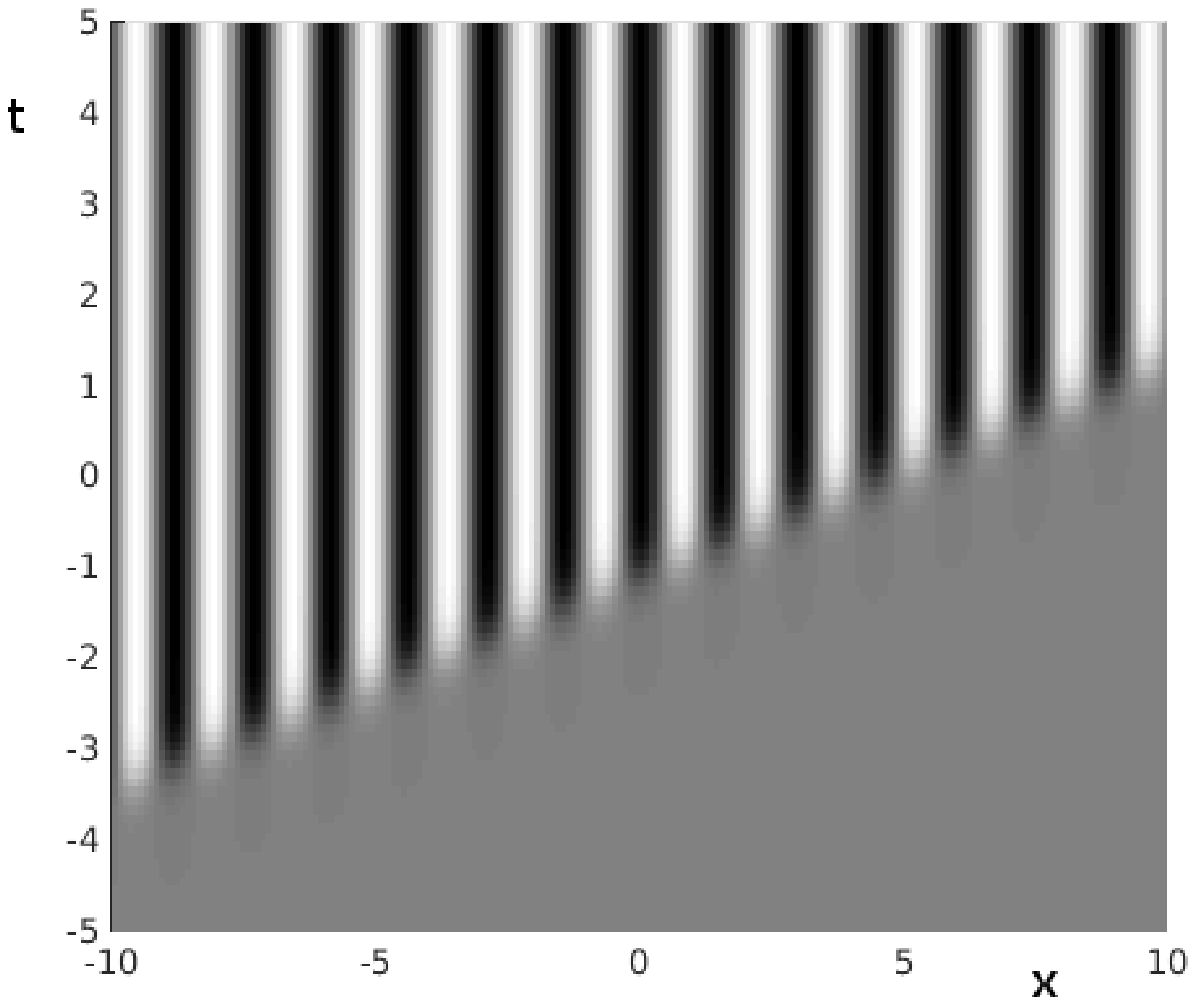} 
\includegraphics[scale=0.35]{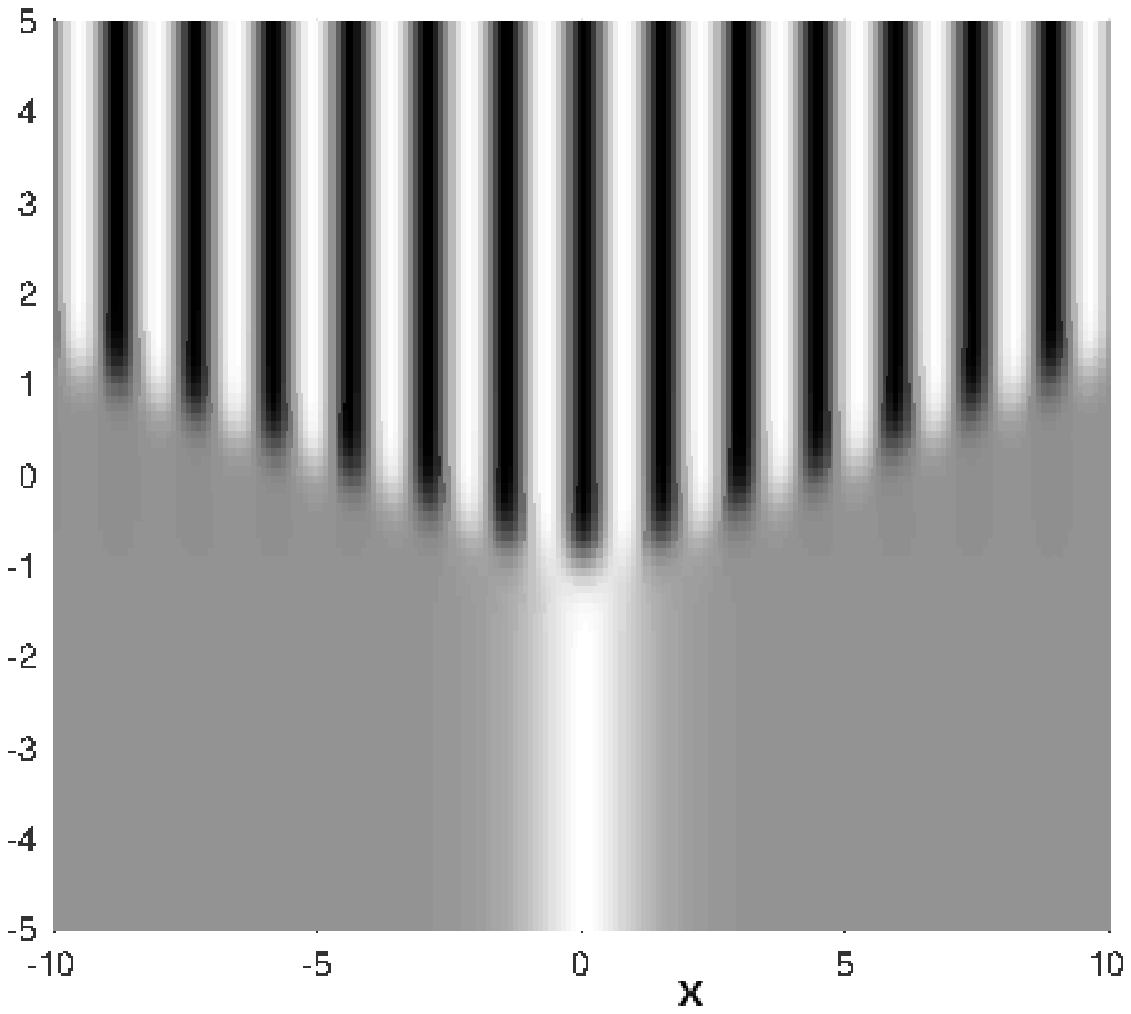}
\caption{Density  plot of $\phi^{(1)}(x,t)$  for kink solutions with $\nu=0.7$ 
and  $\alpha$ equals to $(1,1,1,0)$ for the left plot, $(1,0,1,1)$ for the 
middle plot and $(1,1,1,1)$ for the right plot.
}
\label{kinkcl}
\end{figure}
\end{enumerate}

For an even dimension, we also need to consider the case stated in Proposition 
\ref{pro42}. Let $\mu=\nu \exp(\frac{\pi \i}{4})$ and the constant vector
$$
\bn_0=\alpha_1 \bv_1 +\alpha_2 \bv_2 +\alpha_3 \bv_3 +\alpha_4 \bv_4,
$$
where $\alpha_i\in \bbbc$. Notice that
$$
\bn_0^\star=\alpha_1^\star \bv_3 +\alpha_2^\star \bv_2 +\alpha_3^\star \bv_1 
+\alpha_4^\star \bv_4, \quad Q \bv_i=\bv_{i+1}.
$$
The requirement that $\bn_0=Q \bn_0^\star=\alpha_1^\star \bv_4 +\alpha_2^\star 
\bv_3 +\alpha_3^\star \bv_2 
+\alpha_4^\star \bv_1$ implies that $\alpha_1=\alpha_4^\star$ and 
$\alpha_2=\alpha_3^\star$.

In variables
$\xi=\frac{\sqrt 2}{2} (\nu -\nu^{-1}) x$, $\zeta=\frac{\sqrt 2}{2} (\nu + \nu^{-1}) x$ 
and 
$\eta=(\nu^{-2}+\nu^2) t$  we have
\begin{eqnarray*}
&&\Psi_0 (x, t, \mu)\bv_1=e^{\xi-(\zeta+\eta)\i}\bv_1; \quad \Psi_0 (x, t, 
\mu)\bv_2=e^{-\xi-(\zeta -\eta)\i}\bv_2;\\
&& \Psi_0 (x, t, \mu)\bv_3=e^{-\xi+(\zeta-\eta)\i}\bv_3;\quad \Psi_0 (x, t, 
\mu)\bv_4=e^{\xi+(\zeta+\eta)\i}\bv_4.
\end{eqnarray*}

There are three cases:
\begin{enumerate}
 \item $\alpha_1=\alpha_4=0$. Without the loss of generality, we take 
$\alpha_2=e^{\i \beta}=\alpha_3^\star$, where $\beta\in \bbbr$. Then
\begin{eqnarray*}
 \bn=e^{-\xi}\left(\begin{array}{c}  (1+\i) \left(\sin(\zeta-\eta-\beta)-
\cos(\zeta-\eta-\beta)\right) \\ -2 \i \sin(\zeta-\eta-\beta)\\
(-1+\i) \left(\sin(\zeta-\eta-\beta)+\cos(\zeta-\eta-\beta)\right)\\
2 \cos(\zeta-\eta-\beta)
 \end{array}\right)
\end{eqnarray*}
Let $\theta=2(\zeta-\eta-\beta) $. Using (\ref{1kink2}) we get 
\begin{eqnarray*}
&&\tau_1=2 \i e^{-2\xi}  \left( (1-\cos\theta)\nu^6+(1+\sin  \theta) 
\nu^4+(1+\cos\theta)\nu^2+1-\sin \theta\right); \\
&&\tau_2=-2 e^{-2\xi}  \left( (1+\sin\theta)\nu^6+(1+\cos  \theta) 
\nu^4+(1-\sin\theta)\nu^2+1-\cos \theta\right);\\
&&\tau_3=-2 \i e^{-2\xi}  \left( (1-\cos\theta)\nu^2+1+\sin  \theta 
+(1+\cos\theta)\nu^6+(1-\sin \theta)\nu^4\right);\\
&&\tau_4=2 e^{-2\xi}  \left( (1+\sin\theta)\nu^2+1+\cos  
\theta+(1-\sin\theta)\nu^6+(1-\cos \theta)\nu^4\right).
\end{eqnarray*}
Hence we obtain periodic solutions. The left plot in Figure \ref{kinkcb} is its 
density plot.

\item $\alpha_2=\alpha_3=0$. Without the loss of generality, we take 
$\alpha_1=e^{\i \beta}=\alpha_4^\star$, where $\beta\in \bbbr$. 
In this case, we also get periodic solution similar to the above case.
The middle plot in Figure \ref{kinkcb} is its density plot.

\item $\alpha_1 \alpha_2 \alpha_2 \alpha_4\neq 0$. Let $\alpha_1=e^{\i 
\beta}=\alpha_4^\star$ and $\alpha_2=\rho e^{\i 
\gamma}=\alpha_3^\star, \rho\neq 0$,  where $\beta, \gamma, \rho\in \bbbr$. Here we ignore the tedious formula 
and only show their density plots, see the right plot of Figure \ref{kinkcb}.
\begin{figure}[ht]
\centering
\includegraphics[scale=0.35]{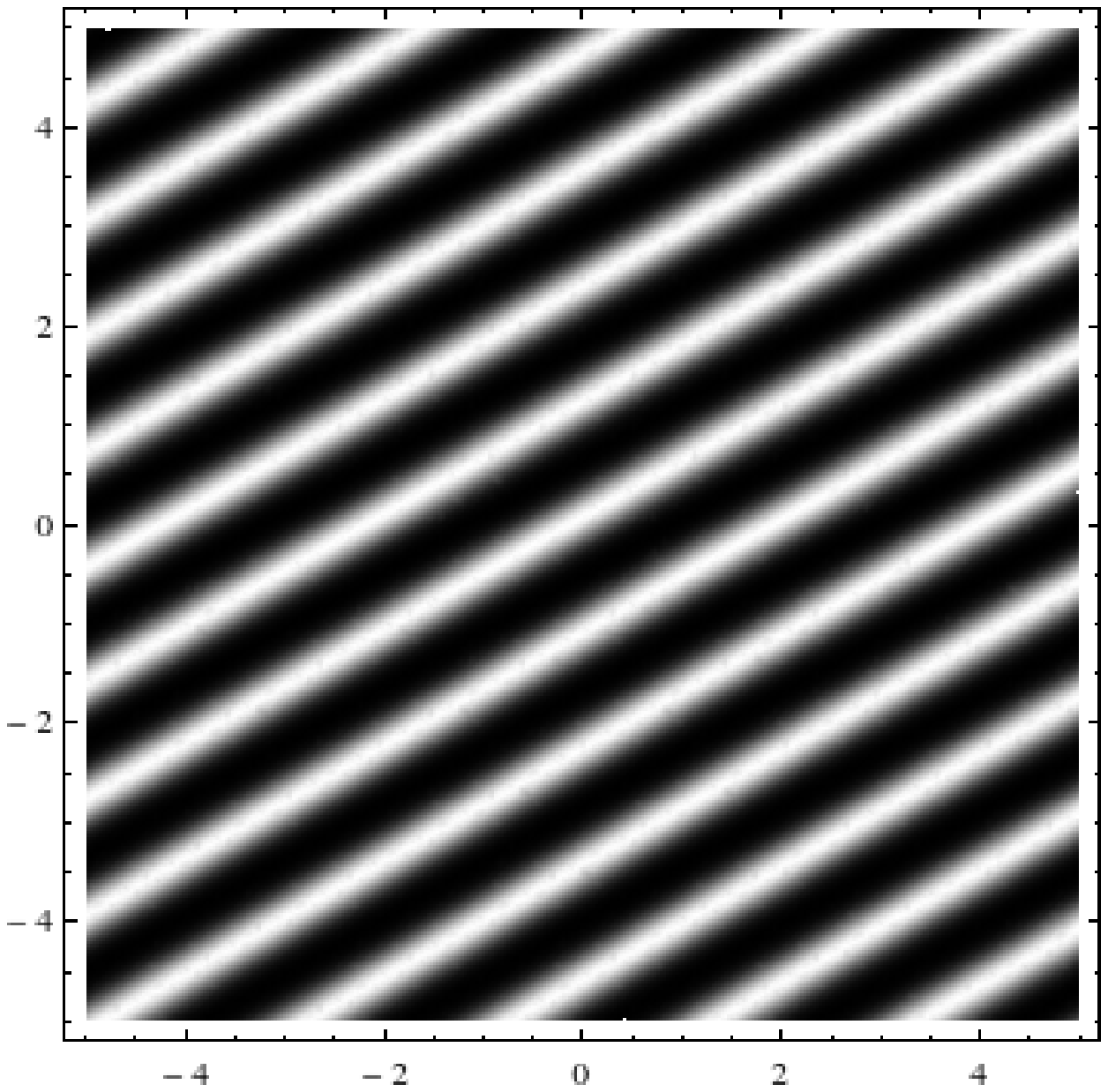}\ \ \
\includegraphics[scale=0.35]{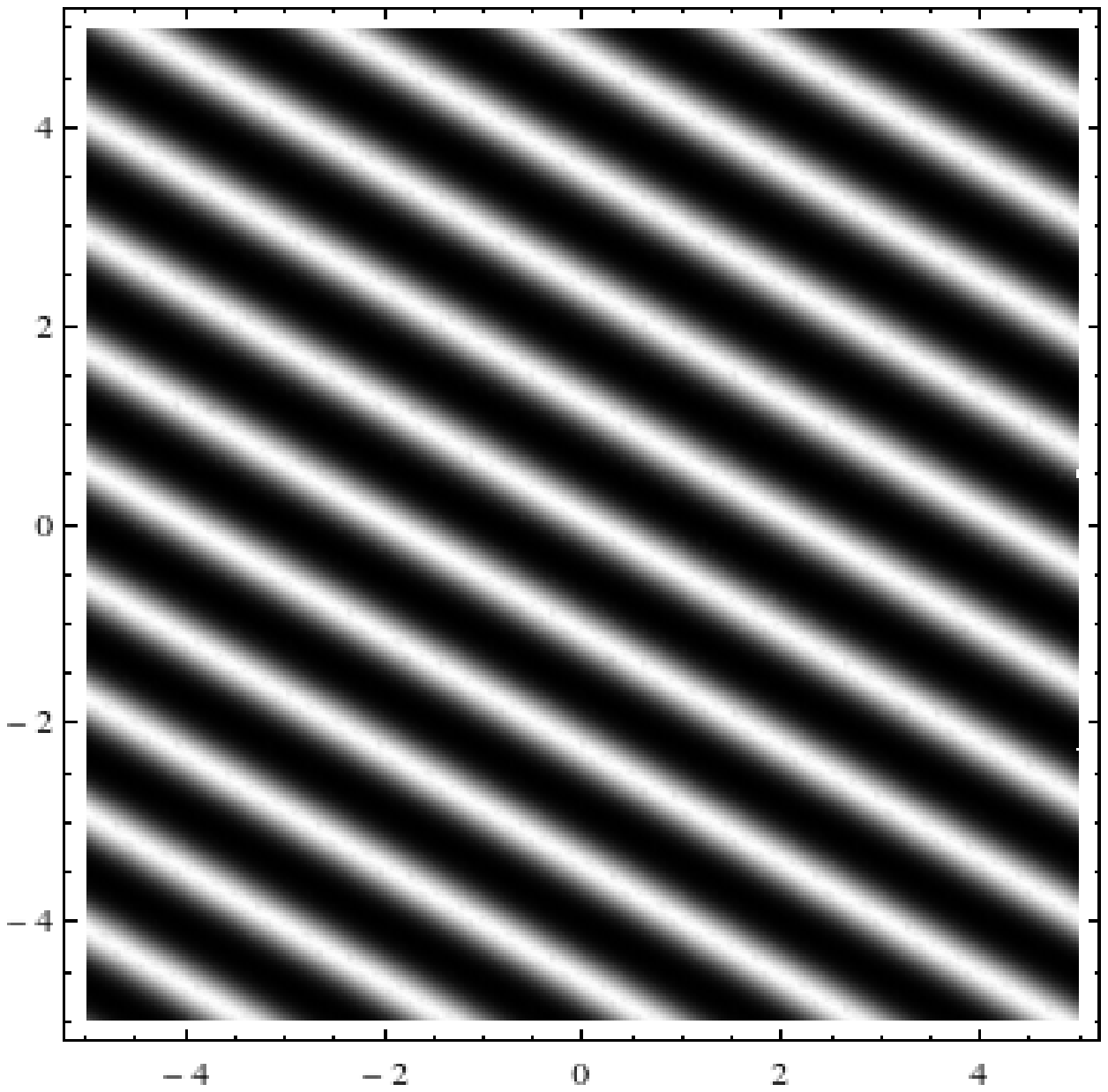} \ \ \
\includegraphics[scale=0.35]{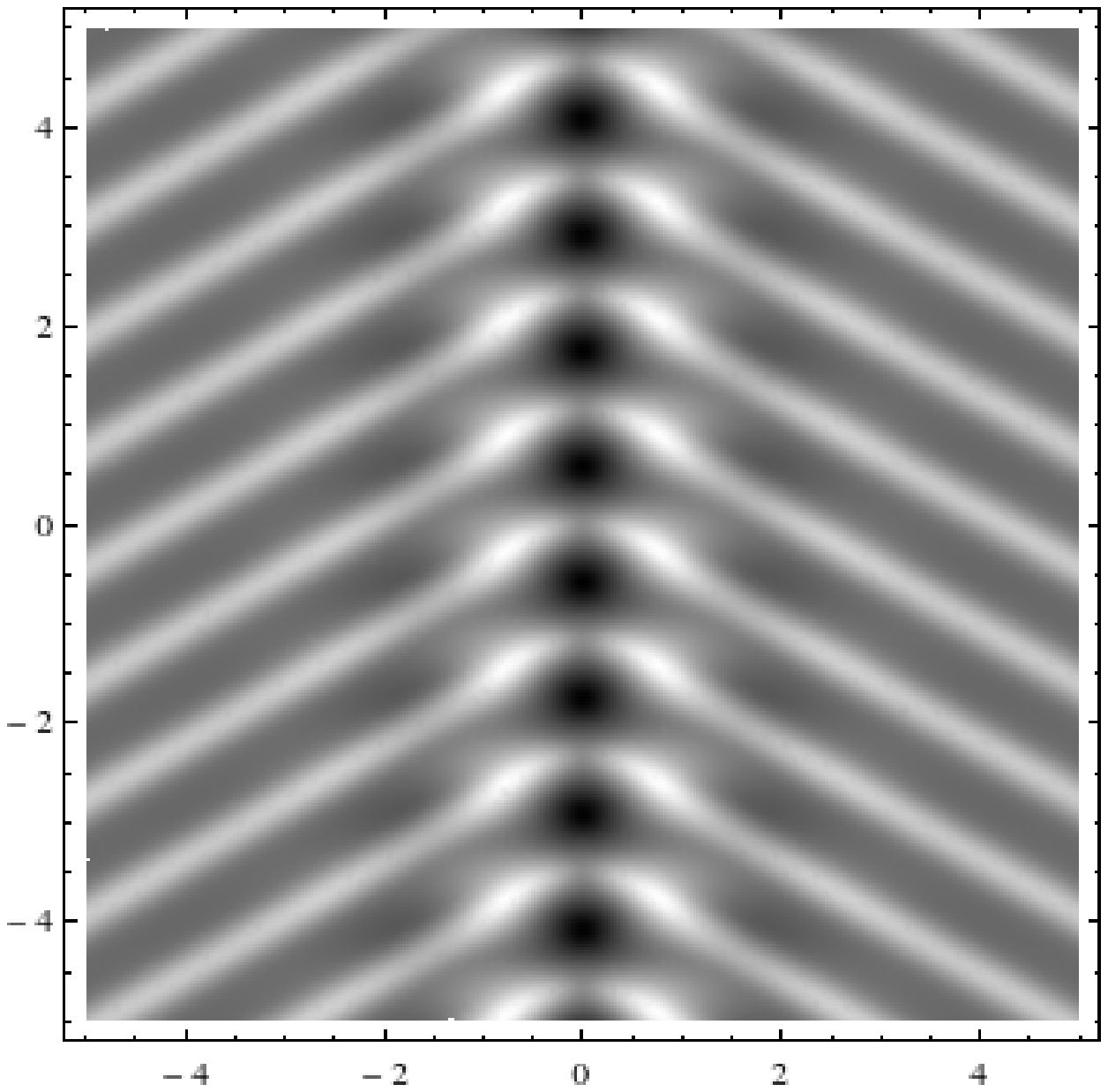}
\caption{Density  plot of $\phi^{(1)}(x,t)$  for kink solutions with 
$\mu=\frac{3}{2}e^{\frac{\pi\i}{4}}$ 
and  $\alpha$ equals to $(0,1,1,0)$ for the left plot, $(1,0,0,1)$ for the 
middle plot and $(1,1,1,1)$ for the right plot.
}
\label{kinkcb}
\end{figure}
\end{enumerate}
Therefore, in the case $N=4$ we have eight different rank $1$ kink solutions.

\subsubsection{Classification of rank $1$ kink solutions for arbitrary 
dimensions}
In this section, we classify all possible rank $1$ kink solutions for 
arbitrary dimension $N$. We have already explored how the solutions for lower dimensions $N=3, 4$.
There is a difference between the dimension $N$ being even or odd.

For arbitrary $N$,  according to Proposition \ref{pro4}, our kink 
solution depends only on $\bn_0\in \bbbr^N$ and $\nu\in \bbbr, \nu \notin 
\{\pm1,0\}$.
We can decompose $\mathbb{R}^N$ as a direct sum of invariant 
subspaces of $\Delta$ as follows:
\begin{eqnarray*}
&&N=2m-1,\quad \mathbb{R}^N=
E^1_0\bigoplus_{p=1}^{m-1}E^2_p ;\\
&& N=2m,\qquad \mathbb{R}^N=E^1_0\bigoplus E^1_m\bigoplus_{p=1}^{m-1}E^2_p,
\end{eqnarray*}
where
\[ E^1_0=\mbox{span}_{\mathbb{R}}({\bf e}_N),\  E^1_m=\mbox{span}_{\mathbb{R}}({\bf e}_m),\   
E^2_p=\mbox{span}_{\mathbb{R}} (\mbox{Re}\, ({\bf e}_p), \mbox{Im}\, ({\bf e}_p)).\]

We define ``elementary waves'' as solutions 
corresponding to the case where $\bn_0$ is simply a 
combination of two eigenvectors. When $N=2m$, there are $m$ elementary 
waves: one pair of
real eigenvalues and $m-1$ pairs of complex conjugate eigenvalues.
When $N=2m-1$, there are $m-1$ elementary wave solutions since there is only 
one real eigenvalue, which leads to 
trivial solutions and we exclude it. The other solutions can be built from 
these elementary wave solutions together with
trivial solutions. 

We are able to write down the elementary wave solutions for arbitrary $N$. To 
do so, we make use the following identity:
For fixed $p\in \bbbn$ and $\omega^N=1$, by direct computation we have
\begin{eqnarray}\label{imid}
\sum_{l=1}^N \omega^{pl} \mu^{2 \{(j-l)\!\! \mod 
N\}}=\frac{\mu^{2N}-1}{\mu^2\omega^{-p}-1} \omega^{pj} , \quad \mu\in \bbbc.
\end{eqnarray}
\begin{The}\label{th1} For any given nonzero constants $\beta, \nu\in \bbbr$, $N\in \bbbn,\ \nu^2\ne 
1,\ N>2$ and $p\in\{1, \cdots \lfloor 
\frac{N-1}{2}\rfloor\}$,
system (\ref{2+1})
 has elementary periodic wave solution of rank $1$  given by
\begin{eqnarray*}
\phi^{(j)}=\frac{1}{2} \ln \frac{\tau_{j-1} \tau_{j+1}}{\tau_j^2}, \qquad 
\tau_j=\frac{\nu^2 \cos(2b-\frac{4p(j+1)\pi}{N}) 
-\cos(2b-\frac{4pj\pi}{N})}{|\nu^2\omega^{2p}-1|^2} +\frac{1}{\nu^2-1}, 
\end{eqnarray*}
where
\begin{eqnarray}\label{bb}
 b=(\nu+\frac{1}{\nu})\sin(\frac{2p \pi}{N}) x+(\nu^2+\frac{1}{\nu^2}) 
\sin(\frac{4p\pi}{N}) t-\beta.
\end{eqnarray}
For even $N=2m$, there is also a time independent rank 
$1$ elementary kink solution of the form 
\begin{eqnarray*}
\phi^{(j)}=\ln\left|\frac{(\beta^2+ e^{-4\xi})(\nu^2+1)
+2 \beta e^{-2 \xi} (-1)^{j}   (\nu^2-1)}{(\beta^2+ e^{-4\xi})(\nu^2+1)-2 \beta 
e^{-2 \xi} (-1)^j   (\nu^2-1)}
\right|, \qquad \xi=(\nu-\frac{1}{\nu})x.
\end{eqnarray*}
\end{The}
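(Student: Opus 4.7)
The plan is to construct both solutions by directly applying Proposition \ref{pro4} (and in the even case Proposition \ref{pro42} reduces to the same formula when $\bn_0$ is real with the appropriate support) to carefully chosen constant vectors $\bn_0$ written in the eigenbasis $\{\be_k\}$ of $\Delta$. For the elementary periodic wave I would take
\[ \bn_0 = e^{\i\beta}\,\be_p + e^{-\i\beta}\,\be_{N-p}, \]
which is real because $\be_{N-p}=\be_p^\star$, and for the time-independent kink on even $N=2m$ I would take $\bn_0 = \be_m + \beta\,\be_N$, using $\omega^m=-1$ to get $(\be_m)_j=(-1)^j$.

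For the periodic case I would then use that $\Psi_0(x,t,\nu)\be_k=e^{\eta_k}\be_k$ with $\eta_k=(\nu\omega^{-k}-\nu^{-1}\omega^k)x+(\nu^{-2}\omega^{2k}-\nu^2\omega^{-2k})t$, extract the real part of $\eta_p$ into a common prefactor (which drops out of $\phi^{(j)}$), and recognise that the imaginary part supplies precisely the variable $b$ in (\ref{bb}). The components of $\bn$ then take the form $n_j \propto \cos(2p\pi j/N - b)$, so
\[ n_l^2 \;\propto\; 1+\cos\!\Bigl(\tfrac{4p\pi l}{N}-2b\Bigr). \]
Plugging this into $\tau_j=(\nu^{2N}-1)^{-1}\sum_l n_l^2\,\nu^{2\{(j-l)\!\!\mod N\}}$ splits the sum into a constant geometric piece $\tfrac{\nu^{2N}-1}{\nu^2-1}$ and an oscillating piece that I evaluate by applying identity (\ref{imid}) with $p\mapsto 2p$:
\[ \sum_l \omega^{2pl}\,\nu^{2\{(j-l)\!\!\mod N\}} \;=\; \frac{\nu^{2N}-1}{\nu^2\omega^{-2p}-1}\,\omega^{2pj}. \]
Taking real parts (after multiplying numerator and denominator by $\nu^2\omega^{2p}-1$) produces exactly the quantity $\nu^2\cos(2b-\tfrac{4p(j+1)\pi}{N})-\cos(2b-\tfrac{4pj\pi}{N})$ divided by $|\nu^2\omega^{2p}-1|^2$, matching the claimed $\tau_j$ up to the common prefactor that cancels in (\ref{1kink}).

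For the time-independent kink the analogous computation uses $\omega^m=-1$: after factoring out $e^{\eta_m}$ the components become $\bn_j/e^{\eta_m}=e^{2\xi}\beta-(-1)^j$ type (with $\xi=(\nu-\nu^{-1})x$), so they are $t$-independent, and $n_l^2$ contains a term $(-1)^l$ whose weighted sum is evaluated again via (\ref{imid}) at $p=m$, giving $-\tfrac{\nu^{2N}-1}{\nu^2+1}(-1)^j$. Grouping the contributions one finds $\tau_j\propto (\beta^2+e^{-4\xi})(\nu^2+1)-2\beta e^{-2\xi}(-1)^j(\nu^2-1)$, and since $(-1)^{j\pm 1}=-(-1)^j$, the ratio $\tau_{j-1}\tau_{j+1}/\tau_j^2$ becomes a perfect square whose logarithm, halved, gives the stated absolute-value expression.

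The main technical obstacle is essentially bookkeeping: correctly splitting the squared cosine into a constant plus an oscillatory exponential, then applying (\ref{imid}) with the right index shift and rationalising the resulting complex denominator $\nu^2\omega^{-2p}-1$ to recover $|\nu^2\omega^{2p}-1|^2$ and the precise cosine arguments $2b-\tfrac{4pj\pi}{N}$ and $2b-\tfrac{4p(j+1)\pi}{N}$. Once this one algebraic identity is in hand, the constraint $\phi^{(j)}\in\mathbb{R}$ is automatic from our symmetric choice of $\bn_0$, and the two cases collapse to the two formulas in the statement.
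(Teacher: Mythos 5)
Your proposal follows essentially the same route as the paper's own proof: the same choices of $\bn_0$ in the $\Delta$-eigenbasis ($e^{\i\beta}\be_p+e^{-\i\beta}\be_{N-p}$ and, for $N=2m$, $\be_m+\beta\be_{2m}$), the same factorisation of $\bn_k$ into a common real exponential times an oscillatory part, the same application of identity (\ref{imid}) with the index doubled, and the same rationalisation of $\nu^2\omega^{-2p}-1$ to produce $|\nu^2\omega^{2p}-1|^2$. The only detail to correct is a sign in your $\eta_k$: from Proposition \ref{pro4} the $t$-part of the exponent is $-(\nu^{-2}\omega^{2k}-\nu^2\omega^{-2k})t=(\nu^2\omega^{-2k}-\nu^{-2}\omega^{2k})t$, which is what yields the $+(\nu^2+\nu^{-2})\sin(\frac{4p\pi}{N})t$ term in $b$ as defined in (\ref{bb}).
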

\begin{proof}Let us take $\bn_0=e^{\i\beta}\be_p+e^{-\i\beta}\be_{N-p}$ then  
the 
$k$-th component of the vector $\bn=\Psi_0(x,t,\nu)\bn_0$ can be written as 
follows:
\begin{eqnarray*}
&&\bn_k=e^{(\frac{\nu}{\omega^p}-\frac{\omega^p}{\nu})x +
(\frac{\nu^2}{\omega^{2p}}-\frac{\omega^{2p}}{\nu^2})t+\beta \i} ({\bf e}_p)_k 
+e^{(\nu\omega^p-\frac{1}{\omega^p \nu})x +
(\nu^2\omega^{2p}-\frac{1}{\omega^{2p}\nu^2})t-\beta \i} ({\bf e}_{N-p})_k 
=e^{a}\left(e^{-b\i}\omega^{kp} 
+e^{b\i}\omega^{-kp}\right),
\end{eqnarray*}
where
\begin{eqnarray*}
&&a=(\nu-\frac{1}{\nu})\cos(\frac{2p \pi}{N}) x+(\nu^2-\frac{1}{\nu^2}) 
\cos(\frac{4p\pi}{N}) t
\end{eqnarray*}
and $b$ is defined by (\ref{bb}).
It follows from (\ref{1kink}) that
\begin{eqnarray*}
&&\tau_j=\frac{1}{\nu^{2N}-1}e^{2a} 
\sum_{k=1}^N\left(\left(e^{-b\i}\omega^{kp} 
+e^{b\i}\omega^{-kp}\right)^2 \right)   \nu^{2 \{(j-k) \mod N\}}\\
&&\qquad=e^{2a} 
\left(e^{-2b\i}\frac{\omega^{2pj}}{\nu^2\omega^{-2p}-1}+e^{2b\i}\frac{\omega^{
-2pj}}{\nu^2\omega^{2p}-1}+\frac{2}{\nu^2-1} \right)\\
&&\qquad =2 e^{2a}\left(\frac{\nu^2 \cos(2b-\frac{4p(j+1)\pi}{N}) 
-\cos(2b-\frac{4pj\pi}{N})}{|\nu^2\omega^{2p}-1|^2} +\frac{1}{\nu^2-1} \right),
\end{eqnarray*}
which leads to the periodic solutions for $\phi^{(j)}$ given in the statement.

Similarly, in the case $N=2m$, we compute the solution corresponding to
 $\bn_0=\bv_m +\beta \bv_{2m}$. Now we have
$$\bn_k=\beta e^{(\nu-\frac{1}{\nu})x+(\nu^2-\frac{1}{\nu^2})t}+(-1)^k 
e^{(-\nu+\frac{1}{\nu})x+(\nu^2-\frac{1}{\nu^2})t}
=e^{(\nu-\frac{1}{\nu})x+(\nu^2-\frac{1}{\nu^2})t} \left( \beta +(-1)^k 
e^{-2(\nu-\frac{1}{\nu})x})\right),
$$
where $k=1, \cdots N=2m $. This leads to
\begin{eqnarray*}
\tau_j=e^{2(\nu-\frac{1}{\nu})x+2(\nu^2-\frac{1}{\nu^2})t} \left( 
\frac{\beta^2+ e^{-4(\nu-\frac{1}{\nu})x}}{\nu^2-1}
+(-1)^{j+1}\frac{2\beta e^{-2(\nu-\frac{1}{\nu})x}}{\nu^2+1}\right),
\end{eqnarray*}
which gives us the solutions $\phi^{(j)}$ independent of time $t$ given in the 
statement.
\end{proof}
When $N=2m$, according to Proposition \ref{pro42}, to get kink solutions we take $\mu=\nu e^{\frac{\pi \i}{2m}}$
and $\bn_0=Q\bn_0^\star$. It follows from (\ref{even}) there are also $m$ elementary waves.
Similar to Theorem \ref{th1}, we explicitly derive the elementary solutions in this case.
\begin{The}\label{th2} Let $N=2m$, where the integer $m\geq2$. For any given nonzero constants 
$\beta, \nu\in \bbbr$ and $p\in\{1, \cdots m\}$,
system (\ref{2+1})
 has elementary periodic wave solution of rank $1$  given by
\begin{eqnarray*}
\phi^{(j)}=\frac{1}{2} \ln \frac{\tau_{j-1} \tau_{j+1}}{\tau_j^2}, \qquad 
\tau_j=\frac{\nu^2 \cos(2b-\frac{(2p-1)(j+1)\pi}{m}) 
-\cos(2b-\frac{(2p-1)j\pi}{m})}{|\nu^2\omega^{2p-1}-1|^2} +\frac{1}{\nu^2-1}, 
\end{eqnarray*}
where
\begin{eqnarray}\label{bb2}
b=(\nu+\frac{1}{\nu})\sin(\frac{(2p-1) \pi}{2m}) x+(\nu^2+\frac{1}{\nu^2}) 
\sin(\frac{(2p-1)\pi}{m}) t-\beta .
\end{eqnarray}
\end{The}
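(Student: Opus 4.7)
The plan is to mirror the proof of Theorem~\ref{th1}, but now starting from an initial vector $\bn_0$ in the $\Delta$-eigenbasis adapted to the reality constraint $\bn_0 = Q\bn_0^\star$ of Proposition~\ref{pro42}. Since $Q\be_k = \be_{k+1}$ and $\be_k^\star = \be_{N-k}$, we have $Q\be_k^\star = \be_{N-k+1}$, so for $N=2m$ the natural ``elementary" choice pairing indices as in (\ref{even}) is
\[
\bn_0 \;=\; e^{\i\beta}\,\be_p \;+\; e^{-\i\beta}\,\be_{N-p+1},\qquad p\in\{1,\ldots,m\}.
\]
One first checks that this choice indeed satisfies $\bn_0 = Q\bn_0^\star$ so that Proposition~\ref{pro42} applies.

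Next I would compute $\bn = \Psi_0(x,t,\mu)\bn_0$ with $\mu=\nu\exp(\pi\i/(2m))$. The crucial observation is that
\[
\mu\omega^{-p} = \nu\,e^{\i(1-2p)\pi/(2m)},\qquad \mu\omega^{p-1} = \nu\,e^{-\i(1-2p)\pi/(2m)},
\]
i.e.\ $\mu\omega^{p-1} = (\mu\omega^{-p})^\star$, and similarly $\mu^2\omega^{2p-2} = (\mu^2\omega^{-2p})^\star$. Hence the two exponential factors appearing in the spectral decomposition of $\bn$ are complex conjugates of each other, and one obtains
\[
\bn_j \;=\; e^{a}\!\left( e^{-\i b}\omega^{pj} + e^{\i b}\omega^{(1-p)j}\right),
\]
where $a=(\nu-\nu^{-1})\cos\!\bigl(\tfrac{(2p-1)\pi}{2m}\bigr)x+(\nu^2-\nu^{-2})\cos\!\bigl(\tfrac{(2p-1)\pi}{m}\bigr)t$ and $b$ is exactly the quantity (\ref{bb2}). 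Squaring gives $\bn_j^2 = e^{2a}\bigl(e^{-2\i b}\omega^{2pj} + e^{2\i b}\omega^{2(1-p)j} + 2\omega^j\bigr)$.

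Substituting this into the formula for $\tau_j$ from Proposition~\ref{pro42} and applying the identity (\ref{imid}) term-by-term with $p\mapsto 2p$, $p\mapsto 2(1-p)$ and $p\mapsto 1$, one uses $\mu^2 = \nu^2\omega$ to see that the three denominators become $\nu^2\omega^{1-2p}-1$, $\nu^2\omega^{2p-1}-1$ and $\nu^2-1$. The first two are complex conjugates, and combining the corresponding two rational summands over the common denominator $|\nu^2\omega^{2p-1}-1|^2$ produces, in the numerator, precisely $2\omega^j\bigl[\nu^2\cos(2b-\tfrac{(2p-1)(j+1)\pi}{m})-\cos(2b-\tfrac{(2p-1)j\pi}{m})\bigr]$; the last summand contributes $2\omega^j/(\nu^2-1)$. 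Thus $\tau_j$ equals $2e^{2a}\omega^j$ times the bracket displayed in the theorem.

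Finally, since $\phi^{(j)}$ is built as $\tfrac12\ln(\tau_{j-1}\tau_{j+1}/\tau_j^2)$, any factor of the form $C\,\omega^{cj}$ (with $C$ independent of $j$) cancels out: $\omega^{(j-1)+(j+1)-2j}=1$ and $e^{2a(j-1)+2a(j+1)-4aj}=1$. Therefore the overall prefactor $2e^{2a}\omega^j$ can be dropped and $\tau_j$ can be replaced by the bracketed expression of the statement, yielding the claimed formula. The main obstacle is purely bookkeeping: tracking the exponents modulo $N=2m$ through (\ref{imid}) and recognising the two complex-conjugate fractions so that they combine into the clean cosine form appearing in the theorem; once this algebra is handled, the reality of $\phi^{(j)}$ is already guaranteed by Proposition~\ref{pro42}.
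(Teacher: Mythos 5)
Your proposal is correct and follows essentially the same route as the paper's own proof: the same choice $\bn_0=e^{\i\beta}\be_p+e^{-\i\beta}\be_{2m-p+1}$ dictated by (\ref{even}), the same computation of $\bn_k=e^{a}(e^{-b\i}\omega^{kp}+e^{b\i}\omega^{-k(p-1)})$, the same term-by-term application of (\ref{imid}) using $\mu^2=\nu^2\omega$, and the same pairing of the two conjugate fractions over $|\nu^2\omega^{2p-1}-1|^2$, after which the $j$-independent-in-effect prefactor $2e^{2a}\omega^{j}$ drops out of $\tau_{j-1}\tau_{j+1}/\tau_j^2$. The only additions are your explicit check that $\bn_0=Q\bn_0^\star$ and the remark on the prefactor cancellation, both of which the paper leaves implicit.
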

\begin{proof} For $\mu=\nu\exp(\frac{\pi\i}{2m})$, we take $\bn_0=e^{\i\beta}\be_p+e^{-\i\beta}\be_{2m-p+1}$
following from (\ref{even}) and then  
the $k$-th component of the vector $\bn=\Psi_0(x,t,\nu)\bn_0$ can be written as 
follows:
\begin{eqnarray*}
&&\bn_k=e^{(\frac{\nu}{\omega^{p-\frac{1}{2}}}-\frac{\omega^{p-\frac{1}{2}}}{\nu})x +
(\frac{\nu^2}{\omega^{2p-1}}-\frac{\omega^{2p-1}}{\nu^2})t+\beta \i} ({\bf e}_p)_k 
+e^{(\nu\omega^{p-\frac{1}{2}}-\frac{\omega^{\frac{1}{2}-p}}{\nu})x
+(\nu^2\omega^{2p-1}-\frac{1}{\omega^{2p-1}\nu^2})t-\beta \i} 
({\bf e}_{2m-p+1})_k \\
&&=e^{a}\left(e^{-b\i}\omega^{kp} 
+e^{b\i}\omega^{-k(p-1)}\right),
\end{eqnarray*}
where
\begin{eqnarray*}
&&a=(\nu-\frac{1}{\nu})\cos(\frac{(2p-1)\pi}{2m}) x+(\nu^2-\frac{1}{\nu^2}) 
\cos(\frac{(2p-1)\pi}{m}) t
\end{eqnarray*}
and $b$ is defined by (\ref{bb2}).
It follows from (\ref{1kink2}) that
\begin{eqnarray*}
&&\tau_j=\frac{1}{\nu^{2N}-1}e^{2a} 
\sum_{k=1}^N\left(\left(e^{-b\i}\omega^{kp} 
+e^{b\i}\omega^{-k(p-1)}\right)^2 \right)   \mu^{2 \{(j-k) \mod N\}}\\
&&\qquad=e^{2a} 
\left(e^{-2b\i}\frac{\omega^{2pj}}{\mu^2\omega^{-2p}-1}+e^{2b\i}\frac{\omega^{
(2-2p)j}}{\mu^2\omega^{2p-2}-1}+\frac{2\omega^j}{\mu^2\omega^{-1}-1} \right)\\
&&\qquad =2 e^{2a}\omega^j\left(\frac{\nu^2 \cos(2b-\frac{(2p-1)(j+1)\pi}{m}) 
-\cos(2b-\frac{(2p-1)j\pi}{m})}{|\nu^2\omega^{2p-1}-1|^2} +\frac{1}{\nu^2-1} \right),
\end{eqnarray*}
which leads to the periodic solutions for $\phi^{(j)}$ given in the statement. 
\end{proof}

Elementary rank $1$ kink solutions correspond to two dimensional 
$\Delta$--invariant subspaces of $\bbbr^N$. Other rank 1 solutions correspond 
to invariant subspaces of dimension $3,4,\ldots, N$. The number of all 
possible $\Delta$--invariant subspaces gives us the number of all rank 1 
solutions.
\begin{The} 
Equation (\ref{2+1}) with odd  $N=2m-1$ has $2^m-2$ different 
rank $1$ kink solutions. In the case of even  $N=2m$ it has
 $3 \cdot 2^{m}-4$ different rank $1$ kink solutions.
\end{The}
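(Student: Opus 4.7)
The plan is to count rank 1 kink solutions by putting them in bijection with non-trivial $\Delta$-invariant subspaces of the appropriate real ambient space, using the spectral decomposition of $\Delta$ displayed just before the statement. By Propositions \ref{pro4} and \ref{pro42}, a rank 1 kink is determined (modulo the scalar freedom of Remark \ref{rem1}) by the initial vector $\bn_0$, and by (\ref{bn}) together with the diagonal action $\Psi_0(x,t,\mu)\be_k=\exp((\mu\omega^{-k}-\mu^{-1}\omega^{k})x+(\mu^2\omega^{-2k}-\mu^{-2}\omega^{2k})t)\be_k$, the $\tau_j$ in (\ref{1kink}) and (\ref{1kink2}) is a fixed linear combination of exponentials whose frequencies are encoded by $\{k:\alpha_k\ne0\}$. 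Because these exponentials are pairwise distinct for generic $\mu$, the smallest $\Delta$-invariant real subspace containing $\bn_0$ is intrinsic to the solution. Hence counting ``different'' solutions reduces to counting invariant subspaces that produce a nonzero $\phi^{(j)}$.

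Next I would pin down the trivial subspaces. If $\bn_0=\be_N=(1,\ldots,1)^{\tr}$, then $n_k^2$ is $k$-independent, so $\tau_i$ in (\ref{1kink}) is $i$-independent and $\phi^{(j)}\equiv0$; the subspace $E_0^1$ therefore gives no solution. For even $N=2m$ the same happens for $E_m^1$ because $\be_m$ has entries $\pm1$, so again $n_k^2$ is constant. Conversely, for any subspace containing some $E_p^2$ with $1\le p\le m-1$ (or, in Proposition \ref{pro42}, any paired block), the corresponding $\tau_i$ picks up genuinely $i$-dependent trigonometric terms of the form written in Theorems \ref{th1}--\ref{th2}, so those subspaces all yield nontrivial solutions. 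Different subspaces are inequivalent because each adds an independent family of exponential/oscillatory modes.

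For odd $N=2m-1$ only Proposition \ref{pro4} applies (Proposition \ref{pro42} needs $N$ even). The decomposition $\mathbb{R}^{2m-1}=E_0^1\oplus\bigoplus_{p=1}^{m-1}E_p^2$ has $m$ irreducible real pieces, giving $2^m$ $\Delta$-invariant subspaces obtained by choosing arbitrary subsets. Subtracting the two trivial ones ($\{0\}$ and $E_0^1$) leaves $2^m-2$.

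For even $N=2m$ I would count the two families separately. Proposition \ref{pro4} uses $\mathbb{R}^{2m}=E_0^1\oplus E_m^1\oplus\bigoplus_{p=1}^{m-1}E_p^2$ with $m+1$ irreducible pieces, hence $2^{m+1}$ invariant subspaces, minus the three trivial ones $\{0\}, E_0^1, E_m^1$, for a total of $2^{m+1}-3$. Proposition \ref{pro42} imposes the reality condition $\bn_0=Q\bn_0^\star$, which, in the $\be_k$ basis, reads $\alpha_k=\alpha_{N-k+1}^\star$ and pairs the indices as $(1,2m),(2,2m-1),\ldots,(m,m+1)$, producing exactly $m$ invariant real 2-planes. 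This gives $2^m$ invariant subspaces, of which only $\{0\}$ is trivial, yielding $2^m-1$ new solutions. These are distinct from the previous family because the two propositions use different poles $\mu=\nu$ versus $\mu=\nu e^{\pi\i/N}$, so their dressing matrices and hence their solutions carry different frequency content. Adding the two counts gives $(2^{m+1}-3)+(2^m-1)=3\cdot2^m-4$. The main obstacle is the injectivity check in Step~1 (distinct invariant subspaces give distinct solutions), which rests on the genericity of $\mu$ ensuring that the exponentials indexed by the different $\be_k$ are linearly independent as functions of $(x,t)$; the $N=3,4$ calculations already performed in the paper confirm the count in those cases and serve as a sanity check.
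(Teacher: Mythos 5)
Your proposal is correct and follows essentially the same route as the paper: decompose the relevant real space into irreducible $\Delta$-invariant pieces ($E^1_0$, $E^1_m$ and the $E^2_p$ for $\mu=\nu$ real; the $m$ paired $2$-planes forced by $\bn_0=Q\bn_0^\star$ for $\mu=\nu e^{\pi\i/N}$), discard the pieces giving $\phi^{(j)}\equiv 0$, and count the remaining invariant subspaces, adding the two families in the even case. Your bookkeeping (all $2^{\#\text{pieces}}$ subsets minus the trivial ones) is arithmetically identical to the paper's sum over combinations of elementary solutions, and your explicit attention to why distinct subspaces give distinct solutions is, if anything, slightly more careful than the paper's.
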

\begin{proof} 
When $N=2m-1$, there are $m-1$ elementary solutions from $\bn_0\in E^2_p$ for each $p=1, 2, \cdots, m-1$, 
and one constant solution
from $\bn_0\in E^1_0$.  We can build up other solutions by taking
any combination of them. For example, there are $C_m^2=\frac{m(m-1)}{2}$ 
different solutions if we take any two combinations.
Thus, the total number of different rank $1$ kink solutions is
$$
m-1+\sum_{k=2}^{m} C_m^k=2^m-2 .
$$
When $N=2m$, when $\mu=\nu\in\bbbr$ there are $m-1$ elementary solutions from 
$\bn_0\in E^2_p$ for each $p=1, 2, \cdots, m-1$, one elementary
solution from $\bn_0 \in E^1_0\bigoplus E^1_m$ and two constant solutions from 
$\bn_0\in E^1_0$ or $\bn_0\in E^1_m$.
we can build up other solutions by taking any combinations of $m-1$ elementary 
solutions alone or together with either one real
or both real eigenvectors.
Thus, the total number of different rank $1$ kink solutions in this case is
$$
1+4 \sum_{k=1}^{m} C_{m-1}^k =4 (2^{m-1}-1)+1=2^{m+1}-3.
$$
When $N=2m$, when $\mu=\nu e^{\frac{\pi\i}{2m}}, \nu\in\bbbr$ there are 
also $m$ elementary solutions. In this case the total number of different rank 
$1$ kink solutions is
$$
\sum_{k=1}^{m} C_m^k=2^m-1.
$$
Hence when $N=2m$, the total number of different rank $1$ kink solutions is
$3\cdot 2^m-4 $. Thus 
we complete the proof.
\end{proof}
Notice that the statement is consistent with the concrete results for $N=3$ and 
$N=4$.
We plot some density plots of $\phi^{(1)}$ and snapshots for $N=5$ when 
$\nu=0.4$ and $\alpha$ are chosen as stated in Figures
\ref{kink51111} and \ref{kink511111}. 
\begin{figure}[ht]
\centering
\includegraphics[scale=0.5]{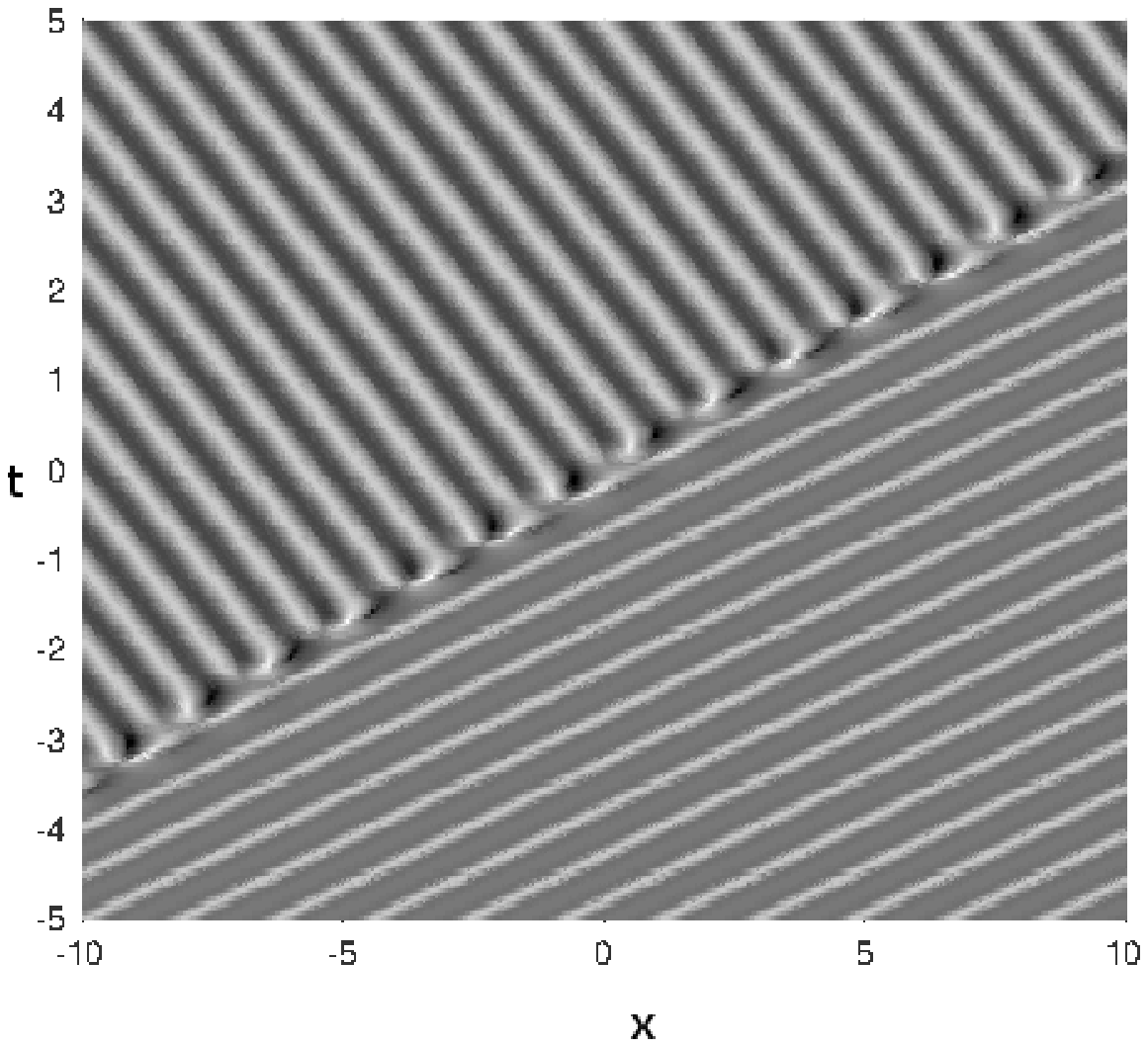}
\includegraphics[scale=0.467]{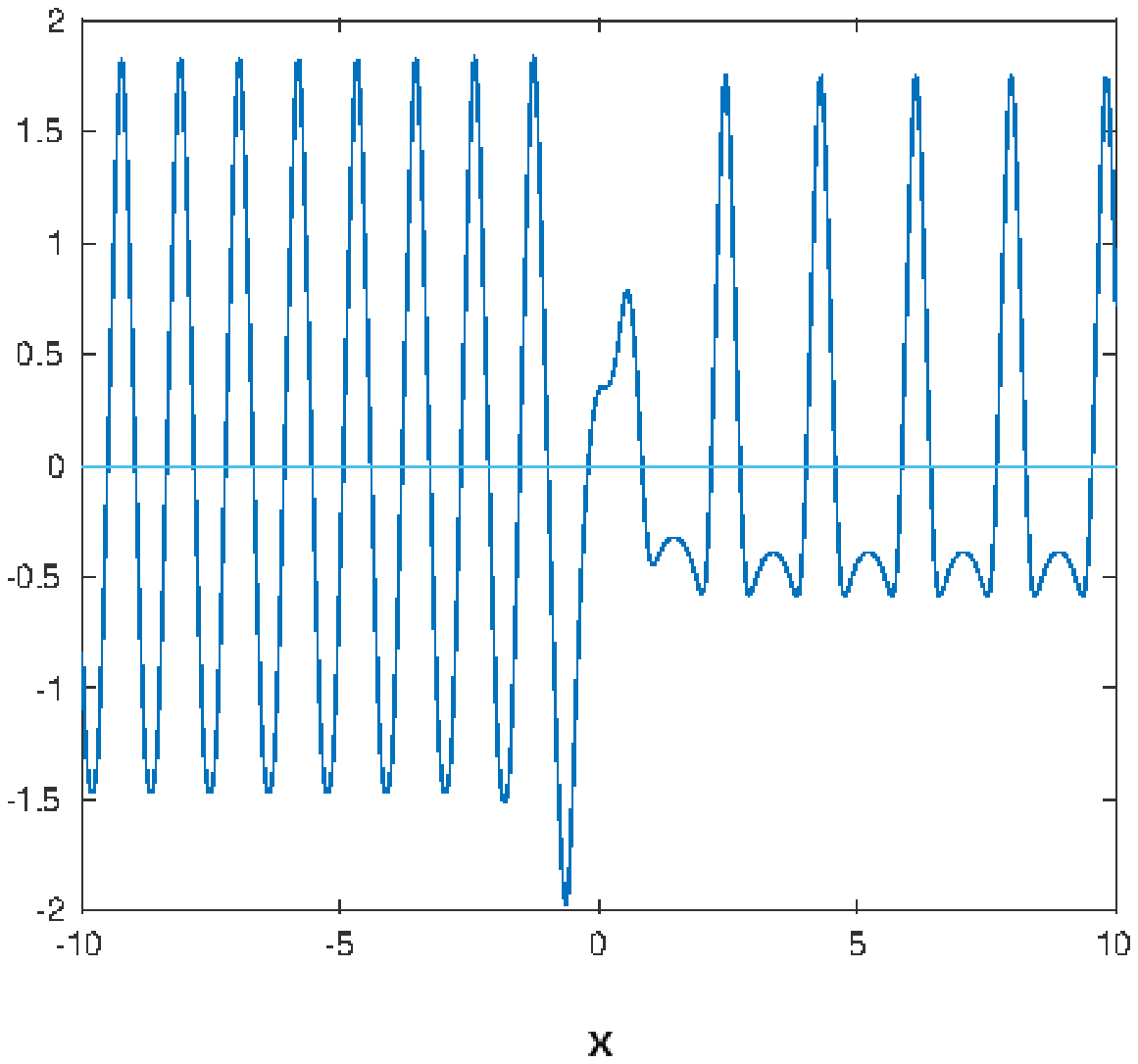}
\caption{Density  plot of $\phi^{(1)}(x,t)$  and a snapshot of $\phi^{(1)}$ at 
t=0 ($\alpha=(1,1,1,1,0)$, $\nu=0.4$) .
}
\label{kink51111}
\end{figure}

\begin{figure}[ht]
\centering
\includegraphics[scale=0.5]{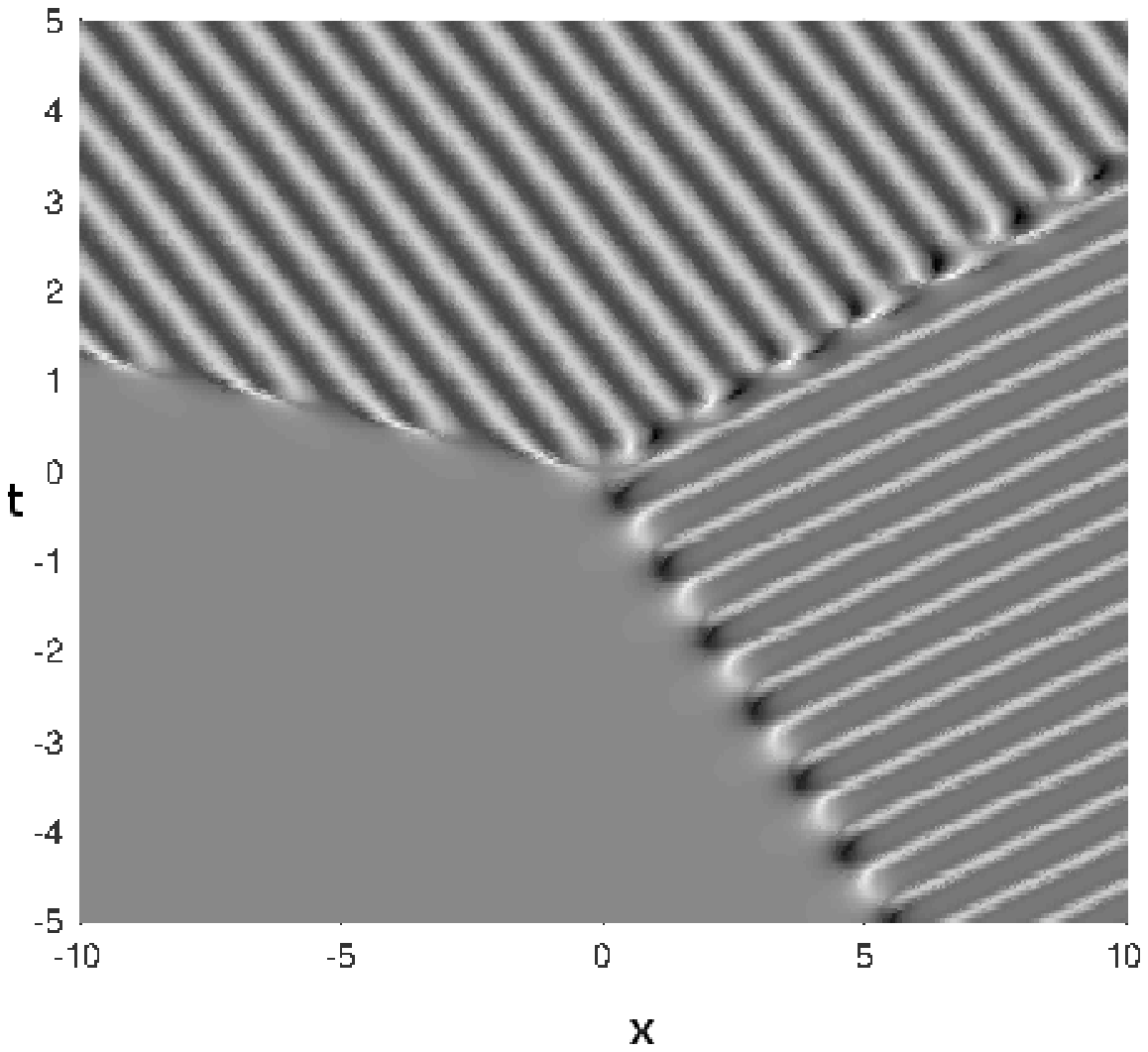}
\includegraphics[scale=0.467]{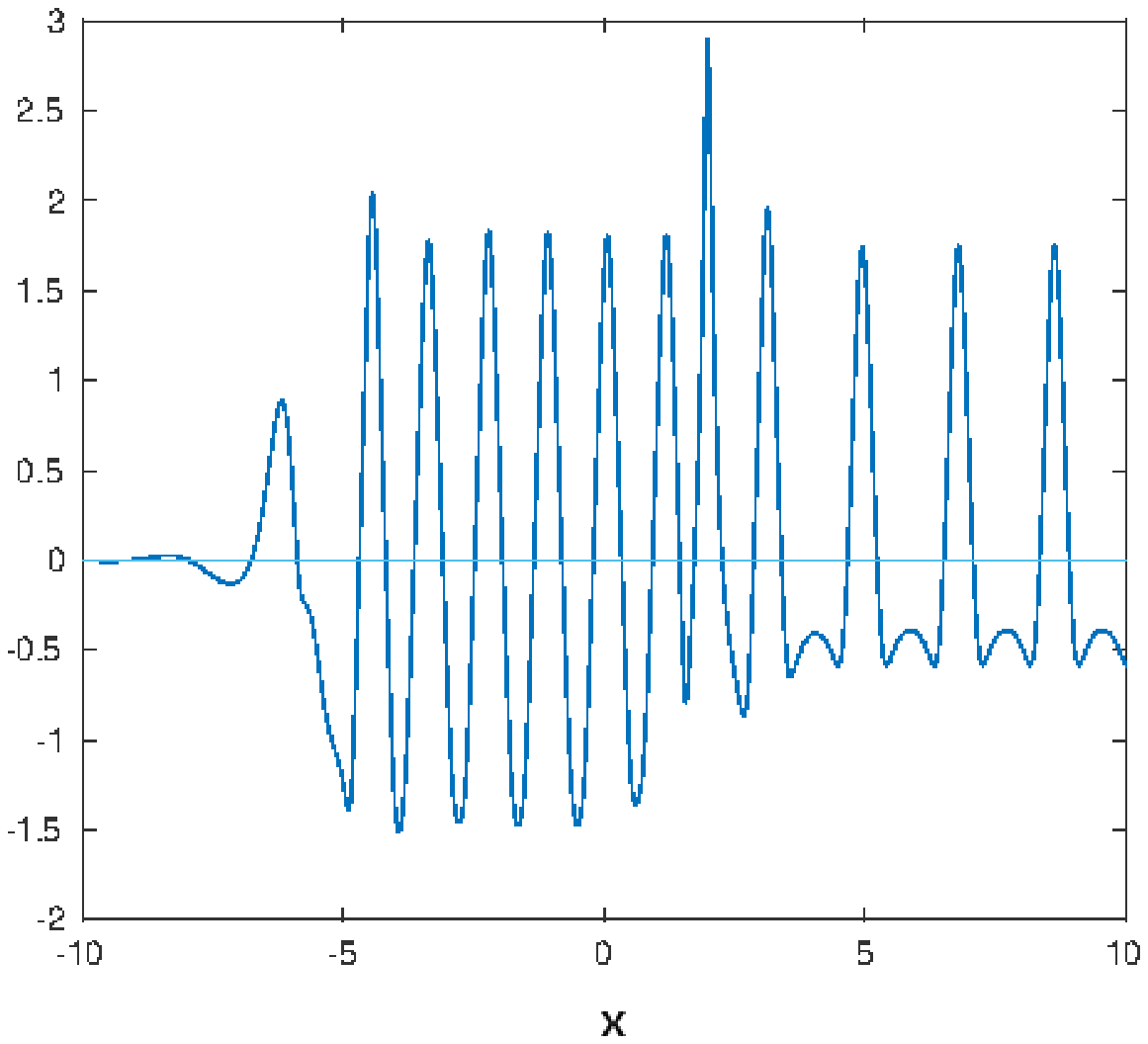}
\caption{Density  plot of $\phi^{(1)}(x,t)$  and a snapshot of $\phi^{(1)}$ at 
t=0.7  ($\alpha=(1,1,1,1,1)$, $\nu=0.4$) .
}
\label{kink511111}
\end{figure}

\subsubsection{Tropicalisation and wave front trajectories} \label{tropsec}

In the general case 
of rank 1 ``kink'' solutions the trajectories of the wave fronts can be 
understood geometrically. According Proposition \ref{pro4} the 
$(x,t)$ dependence of the solution is determined by the vector $\bn$ which can 
be 
presented in the form 
\[
 \bn=\sum_{k=1}^N e^{\Theta_k(x,t)} \be_k ,
\]
where 
\[
 \Theta_k(x,t)=
(\nu\omega^{-k}-\nu^{-1}\omega^{k})x+(\nu^2\omega^{-2k}-\nu^{-2}\omega^{2k
} )t+\log \alpha_k.
\]
The imaginary part $\Theta_k^{\rm Im}(x,t)$ is responsible for oscillations 
of the solution, while the real part 
\[
 \Theta_k^{\rm Re}(x,t)=(\nu-\nu^{-1})\cos(\frac{2\pi k}{N})x+(\nu^2-\nu^{ -2} 
)\cos(\frac{4\pi k}{N}) t+\log| \alpha_k|
\]
tells us which term in the sum is dominant  at a given point $(x,t)$. In a 
region where only one term in the sum is dominant, and thus we can ignore other 
terms, the solution is close to the trivial (zero) solution. In regions where 
two terms have the same real exponent ($ \Theta_k^{\rm Re}(x,t)= 
\Theta_{-k}^{\rm Re}(x,t) $) we observe elementary waves. The boundaries of 
these regions correspond to the wave fronts. Thus the wave fronts can be 
described as follows: We consider a 
set of  linear functions
$\Theta_k^{\rm Re}(x,t),\ k=1,\ldots, N$
and define a continuous piecewise linear function
\begin{equation}\label{tropTheta}
  \Theta(x,t)=\max (\Theta_1^{\rm Re}(x,t),\cdots, \Theta_N^{\rm Re}(x,t)).
\end{equation}
The locus where the function $\Theta(x,t)$ is not smooth corresponds to the 
wave fronts. To compare the numerical result for wave fronts with the locus 
described above one can compare Figures \ref{kink511111} and \ref{kinkTheta}.
\begin{figure}[ht]
\centering
\includegraphics[scale=0.5]{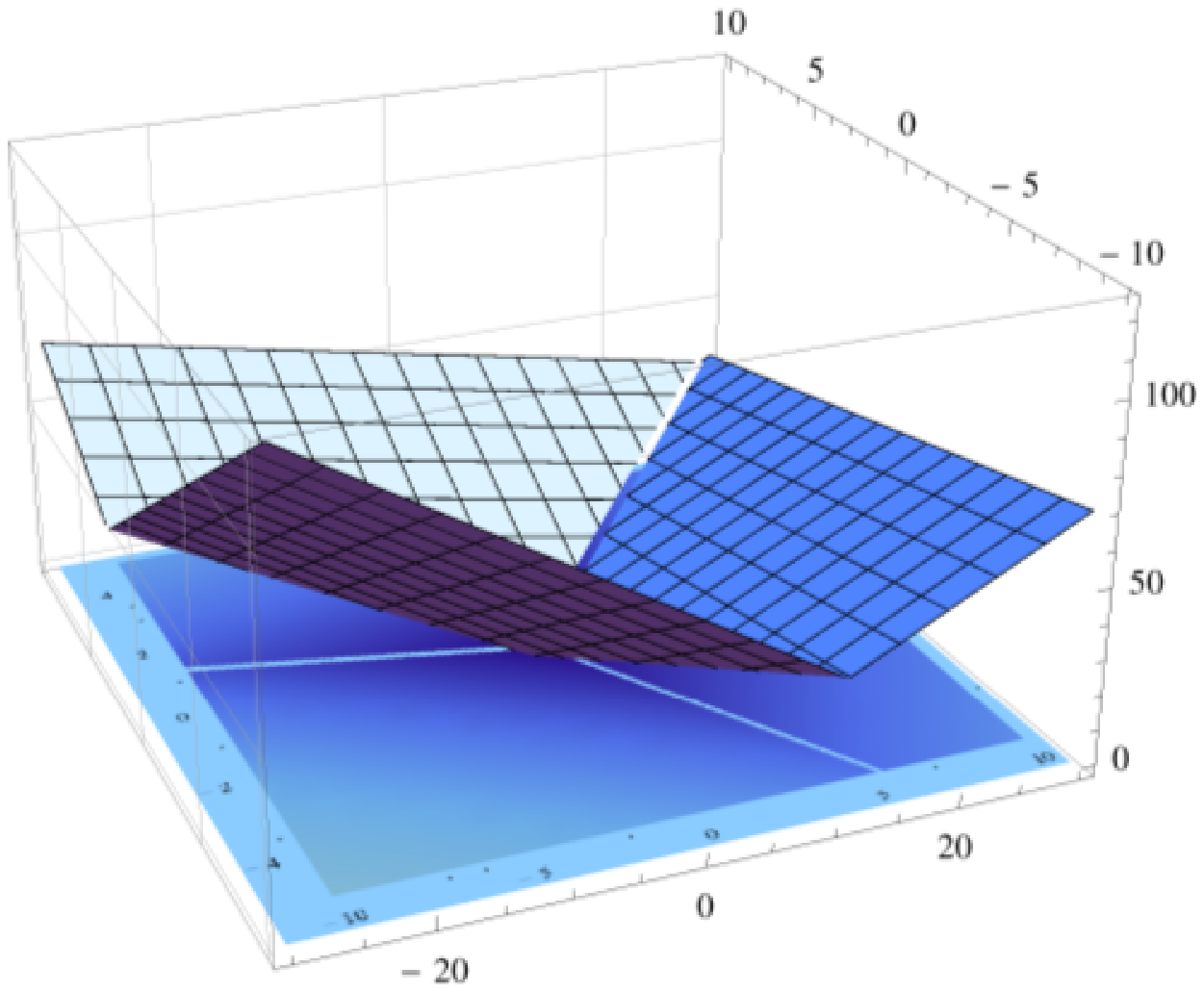}
\includegraphics[scale=0.4]{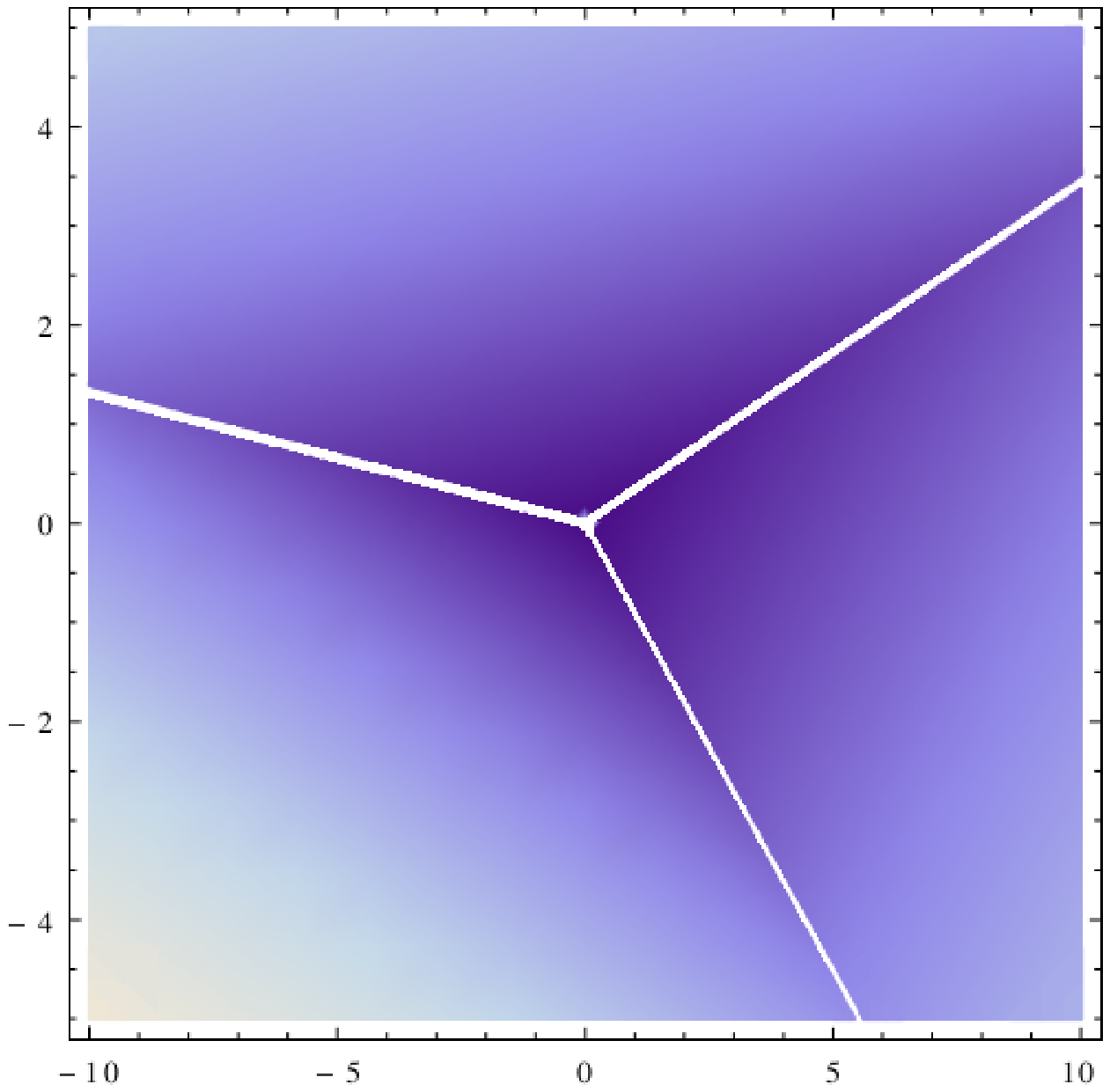}
\caption{Density and 3-d plots for $\Theta(x,t)$, $N=5$, $\alpha=(1,1,1,1,1)$, 
$\nu=0.4$ (compare with Figure \ref{kink511111}).
}
\label{kinkTheta}
\end{figure}
This construction is similar to tropicalisation and soliton graphs proposed by 
Kodama and Williams for the case of KP solitons \cite{kodama2014kp}, although 
there is a slight difference, since we do not use rescaling in our definition 
and keep the logarithmic term $\log|\alpha_k|$, which disappears in the scaling 
limit.

\subsection{Classification of rank $1$ breather solutions for arbitrary 
dimensions}
In this section, we classify all possible rank $1$ breather solutions for 
arbitrary dimension $N$. According to Proposition \ref{pro5}, our soliton 
solution depends only on $\bn_0\in \bbbc^N$, $\mu\in \bbbc$ and $|\mu|\notin 
\{0,1\}$.
In a similar manner to the case for kinks, a natural way to classify possible solutions in 
terms of 
$\bn_0$ is to first consider eigenvectors and eigenvalues of the constant matrix
$\Delta$. We decompose $\bbbc^N$ as a direct sum of invariant 
subspaces of $\Delta$ as follows:
\begin{eqnarray*}
 \mathbb{C}^N=\bigoplus_{p=1}^{N}E^1_p, \qquad  E^1_p=\mbox{span}_{\mathbb{C}}({\bf e}_p) .
\end{eqnarray*}
The 
vector $\bn_0$ 
in this basis 
\begin{eqnarray}\label{n0b}
 \bn_0=\sum_{p=1}^N \alpha_p \bv_p, \qquad \alpha_p\in \bbbc
\end{eqnarray}
is 
given by a matrix $\alpha=(\alpha_1,\ldots ,\alpha_N)$.
We immediately get the following result:
\begin{Pro}\label{pro10} For a constant complex vector $\bn_0$ in the form of 
(\ref{n0b}), if 
there is only one $\alpha_p$ is nonzero, then solutions for (\ref{2+1}) are 
trivial, that is, $\phi^{(i)}=0$.
\end{Pro}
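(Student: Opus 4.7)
The plan is to substitute the hypothesis $\bn_0=\alpha_p\be_p$ (a single eigenvector) directly into the explicit $\tau$-function formula of Proposition \ref{pro5} and show that $\tau_i$ becomes independent of $i$, so that $\phi^{(i)}=\tfrac{1}{2}\ln(\tau_{i-1}\tau_{i+1}/\tau_i^2)=0$ automatically.

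First, I would exploit that $\be_p$ from (\ref{bek}) is a joint eigenvector of $\Delta^{\pm 1}$ and $\Delta^{\pm 2}$ with eigenvalues $\omega^{\pm p}$ and $\omega^{\pm 2p}$. Hence the undressed fundamental solution acts diagonally on it: $\Psi_0(x,t,\mu)\be_p=\psi_p(x,t)\be_p$ with
\[
\psi_p(x,t)=\exp\bigl((\mu\omega^{-p}-\mu^{-1}\omega^{p})x+(\mu^2\omega^{-2p}-\mu^{-2}\omega^{2p})t\bigr).
\]
Consequently $\bn=\alpha_p\psi_p\be_p$ has components $n_l=\alpha_p\psi_p\omega^{pl}$, so that $n_l^{2}=\alpha_p^{2}\psi_p^{2}\omega^{2pl}$ and, crucially, $|n_l|^{2}=|\alpha_p|^{2}|\psi_p|^{2}$ is the same for every $l$.

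Second, I would insert these expressions into the formulas (\ref{srho}) for $\sigma(i)$ and $\rho(i)$. For $\rho(i)$ the factor $|n_l|^2$ is $l$-independent and can be pulled out, leaving the geometric sum $\sum_{l=1}^{N}|\mu|^{2\{(i-l)\!\!\mod N\}}=(|\mu|^{2N}-1)/(|\mu|^{2}-1)$; therefore $\rho(i)=|\alpha_p|^{2}|\psi_p|^{2}/(|\mu|^{2}-1)$ does not depend on $i$. For $\sigma(i)$ I would apply identity (\ref{imid}) with the exponent $p$ there replaced by $2p$, which gives
\[
\sigma(i)=\frac{\alpha_p^{2}\psi_p^{2}\,\omega^{2pi}}{\mu^{2}\omega^{-2p}-1},
\]
and since $|\omega^{2pi}|=1$, the modulus $|\sigma(i)|^2$ is also $i$-independent.

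Combining these, $\tau_i=\rho(i)^2-|\sigma(i)|^2=|\alpha_p|^{4}|\psi_p|^{4}\,K(\mu,p)$ for a scalar $K(\mu,p)$ independent of $i$, so the ratio $\tau_{i-1}\tau_{i+1}/\tau_i^{2}=1$ and $\phi^{(i)}=0$ as claimed. The computation is essentially mechanical once (\ref{imid}) is invoked, so the only obstacle is bookkeeping: one must verify that the denominators $|\mu|^{2}-1$ and $\mu^{2}\omega^{-2p}-1$ are nonzero, which follows from the standing breather assumptions $|\mu|\ne 1$ and $\mu\ne\omega^{k}\mu^{\star}$; and one should note the degenerate stratum on which $K(\mu,p)=0$ and $\tau_i\equiv 0$, where the dressing procedure itself breaks down and Proposition \ref{pro5} does not produce a bona fide solution, so the conclusion $\phi^{(i)}=0$ still holds vacuously on the complement.
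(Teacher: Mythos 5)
Your proposal is correct and follows essentially the same route as the paper's proof: substitute $\bn=\alpha_p\psi_p\be_p$ into (\ref{srho}), use the identity (\ref{imid}) to evaluate $\sigma(i)$ and the geometric sum for $\rho(i)$, and conclude that $\tau_i=|\alpha_p|^4|\psi_p|^4\bigl(\tfrac{1}{(|\mu|^2-1)^2}-\tfrac{1}{|\mu^2-\omega^{2p}|^2}\bigr)$ is independent of $i$, hence $\phi^{(i)}=0$. Your extra remark about the degenerate stratum $\tau_i\equiv 0$ is a reasonable caution the paper omits, but it does not change the argument.
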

\begin{proof} 
It follows from (\ref{imid}) that
\begin{eqnarray*}
\frac{1}{\mu^{2N}-1}\sum_{l=1}^N \omega^{2pl}   \mu^{2 \{(i-l)\!\!\! 
\mod N\}}=\frac{\omega^{2p(i+1)}}{\mu^2-\omega^{2p}} ; \qquad
\frac{1}{|\mu|^{2N}-1} \sum_{l=1}^N  |\mu|^{2 \{(i-l)\!\!\!  \mod N\}}
=\frac{1}{|\mu|^2-1} .
\end{eqnarray*}
Thus we can compute $\tau_i$ in (\ref{1breather}) as follows:
\begin{eqnarray*}
 \tau_i=|\alpha_p|^4 |e^{(\frac{\mu}{\omega^p}-\frac{\omega^p}{\mu})x +
(\frac{\mu^2}{\omega^{2p}}-\frac{\omega^{2p}}{\mu^2})t}|^4 
\left(\frac{1}{(|\mu|^2-1)^2}-\frac{1}{|\mu^2-\omega^{2p}|^2} \right),
\end{eqnarray*}
which is independent of $i$. According to Proposition \ref{pro5}, we get 
solutions $\phi^{(i)} =0$ as stated.
\end{proof}
We now consider the case when there are only two nonzero components, say 
$\alpha_p$ and $\alpha_q$ among all $\alpha_i,i=1,\cdots, N$, that is,
\begin{eqnarray}
 \bn_0=\alpha_p \bv_p+\alpha_q \bv_q, \qquad \alpha_p \alpha_q \neq 0 , \qquad 
q>p.
\end{eqnarray}
It follows that the $k$-component of vector $\bn$ is
\begin{eqnarray*}
 n_k=\alpha_p e^{(\frac{\mu}{\omega^p}-\frac{\omega^p}{\mu})x +
(\frac{\mu^2}{\omega^{2p}}-\frac{\omega^{2p}}{\mu^2})t}\omega^{kp} +\alpha_q
e^{(\frac{\mu}{\omega^q}-\frac{\omega^q}{\mu})x +
(\frac{\mu^2}{\omega^{2q}}-\frac{\omega^{2q}}{\mu^2})t} 
\omega^{kq}=A(\omega^{kp}+\gamma \omega^{kq}),
\end{eqnarray*}
where we introduce notations for $A, \gamma\in \bbbc$ to shorten the expressions 
of 
$\sigma(j)$ and $\rho(j)$ defined by (\ref{srho}). We have
\begin{eqnarray*}
&&\sigma(j)=A^2(\frac{\omega^{2p(j+1)}}{\mu^2-\omega^{2p}} +2 \gamma 
\frac{\omega^{(p+q)(j+1)}}{\mu^2-\omega^{p+q}}+\gamma^2 
\frac{\omega^{2q(j+1)}}{\mu^2-\omega^{2q}}) ;\\
&&\rho(j)=|A|^2\left((1+|\gamma|^2) \frac{1}{|\mu|^2-1}+\gamma^* 
\frac{\omega^{(p-q)(j+1)}}{|\mu|^2-\omega^{p-q}} +\gamma 
\frac{\omega^{(q-p)(j+1)}}{|\mu|^2-\omega^{q-p}}\right) .
\end{eqnarray*}
According to Proposition \ref{pro5}, the breather solutions depend on 
$\gamma$ and $\mu$ since $A$ is cancelled when we compute the solutions. Let 
$\mu=|\mu|e^{i\delta}$.
The breather trajectory  is 
determined by the condition $|\gamma|=1$, where
\begin{eqnarray*}
&&|\gamma|=|\frac{\alpha_q}{\alpha_p}| 
|e^{(\frac{\mu}{\omega^q}-\frac{\omega^q}{\mu}-\frac{\mu}{\omega^p}+\frac{
\omega^p}{\mu})x 
+(\frac{\mu^2}{\omega^{2q}}-\frac{\omega^{2q}}{\mu^2}-\frac{\mu^2}{\omega^{2p}}
+\frac{\omega^{2p}}{\mu^2})t }|\\
&&\qquad =e^{(|\mu|-\frac{1}{|\mu|})(\cos 
(\delta-\frac{2\pi q}{N})-\cos(\delta-\frac{2\pi p}{N}))x 
+(|\mu|^2-\frac{1}{|\mu|^2})(\cos 
(2\delta-\frac{4\pi q}{N})-\cos(2\delta-\frac{4\pi p}{N}))t +\ln 
|\frac{\alpha_q}{\alpha_p}|  }\\
&&\qquad =e^{2 (|\mu|-\frac{1}{|\mu|})\sin 
(\delta-\frac{(p+q)\pi }{N})\sin\frac{(q-p)\pi }{N}(x-v_{pq}t-x_{pq}^0) } .
\end{eqnarray*}
It reflects the balance between the exponents.
Thus the speed of the breather is given by
\[v_{pq} = -4\left(\frac{1}{|\mu|}+|\mu|\right)
\cos\left(\delta-\frac{\pi (p+q)}{N}\right)\cos\left(\frac{\pi 
(q-p)}{N}\right)\]
and it is shifted to the right along the $x$-axis by
\[x_{pq}^0 = \frac{\ln |\frac{\alpha_p}{\alpha_q}| }{2 
(|\mu|-\frac{1}{|\mu|})\sin 
(\delta-\frac{(p+q)\pi }{N})\sin\frac{(q-p)\pi }{N}} .\]
It is localised in $x$ and of size $L_{pq}$
\[
 L^{-1}_{pq}=2 
(|\mu|-|\mu|^{-1})\sin 
(\delta-\frac{(p+q)\pi }{N})\sin\frac{(q-p)\pi }{N}.
\]
The rank $1$ breather solutions can be obtained in the following way:
\begin{itemize}
 \item There  are $C_N^2$ possible choices of two dimensional 
$\Delta$--invariant subspaces in $\bbbc^N$ and therefore there are $C_N^2$ 
elementary breathers.
\item Solutions corresponding to three dimensional invariant 
subspaces, i.e. 
$$\bn_0=\alpha_p \be_p+\alpha_q \be_q+\alpha_r \be_r,\ \alpha_p\alpha_q\alpha_r 
\ne 0$$  represent decays or fusions of breathers (``Y'' shape), and there are 
$C_N^3$ such solutions.
\item Solutions corresponding to four dimensional invariant subspaces, i.e. 
$$\bn_0=\alpha_p \be_p+\alpha_q \be_q+\alpha_r \be_r+\alpha_s \be_s,\ 
\alpha_p\alpha_q\alpha_r \alpha_s 
\ne 0$$  represent solutions combining 2 ``Y'' shapes (``2Y'' shape solutions). 
There are 
$C_N^4$ such solutions, etc.
\end{itemize}
Examples of ``Y'', ``2Y'' and ``3Y'' configurations in the case $N=5$ are 
presented in Figure \ref{br111}.
\begin{figure}[ht]
\centering
\includegraphics[scale=0.4]{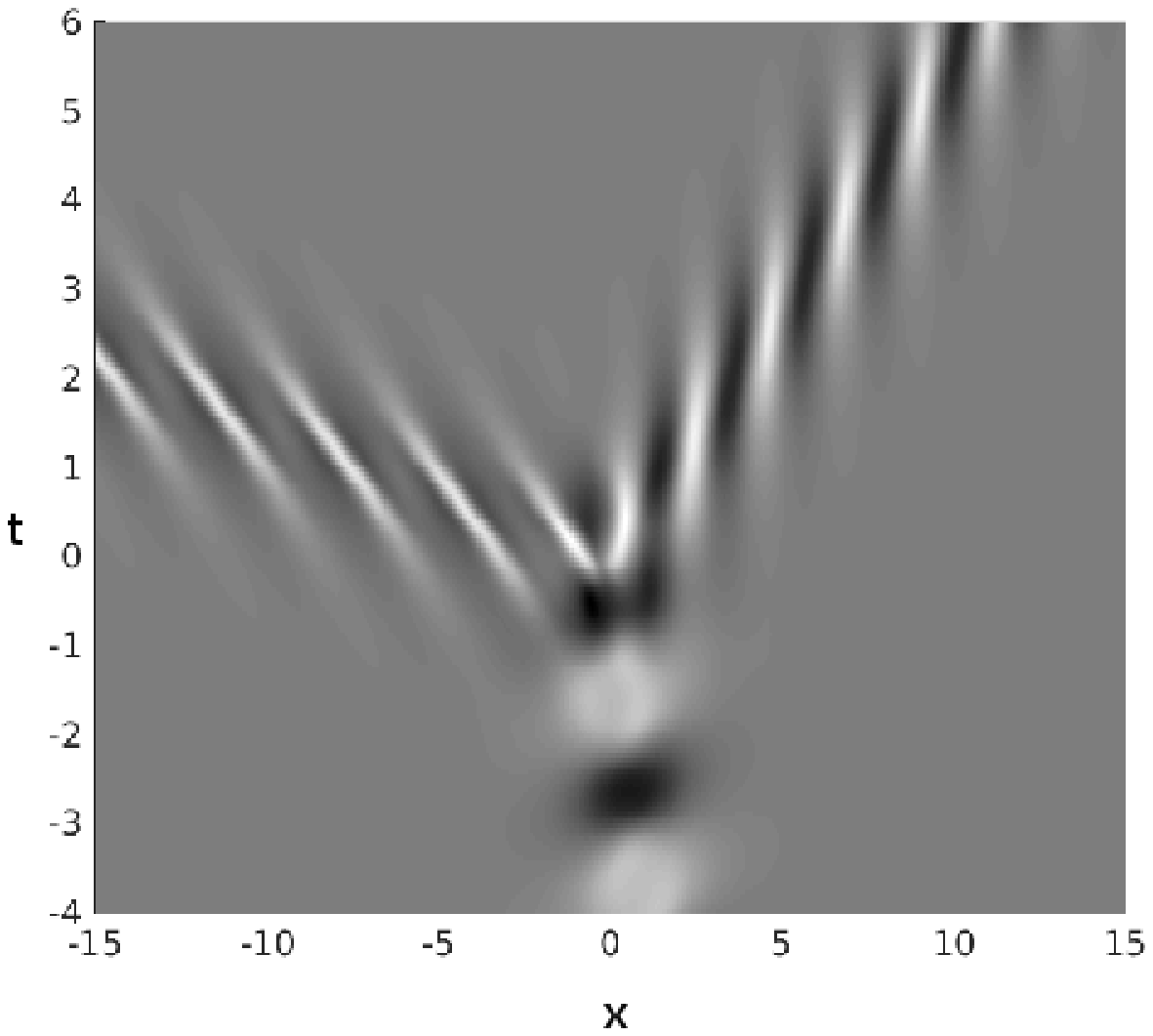}
\includegraphics[scale=0.4]{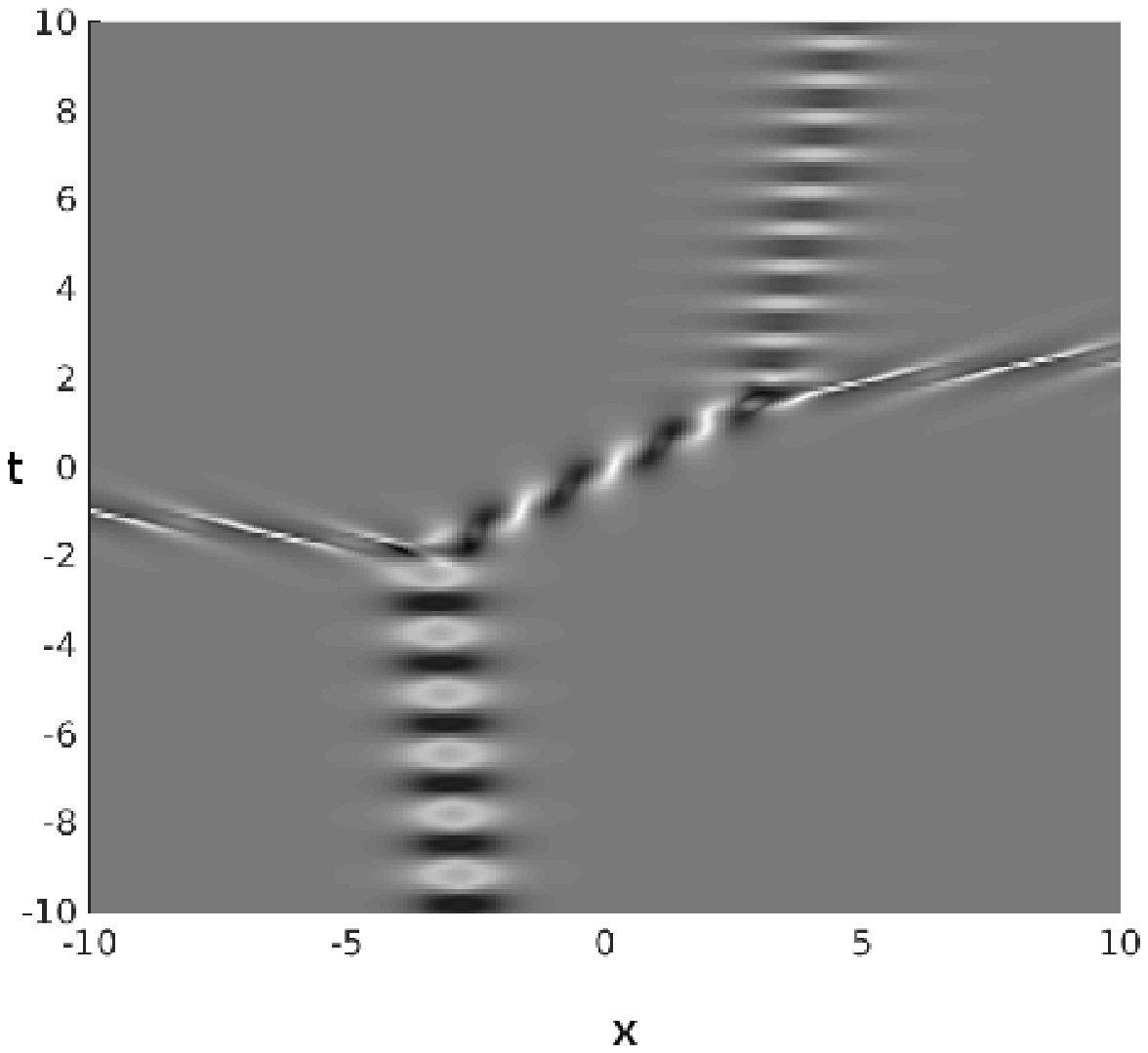} 
\includegraphics[scale=0.4]{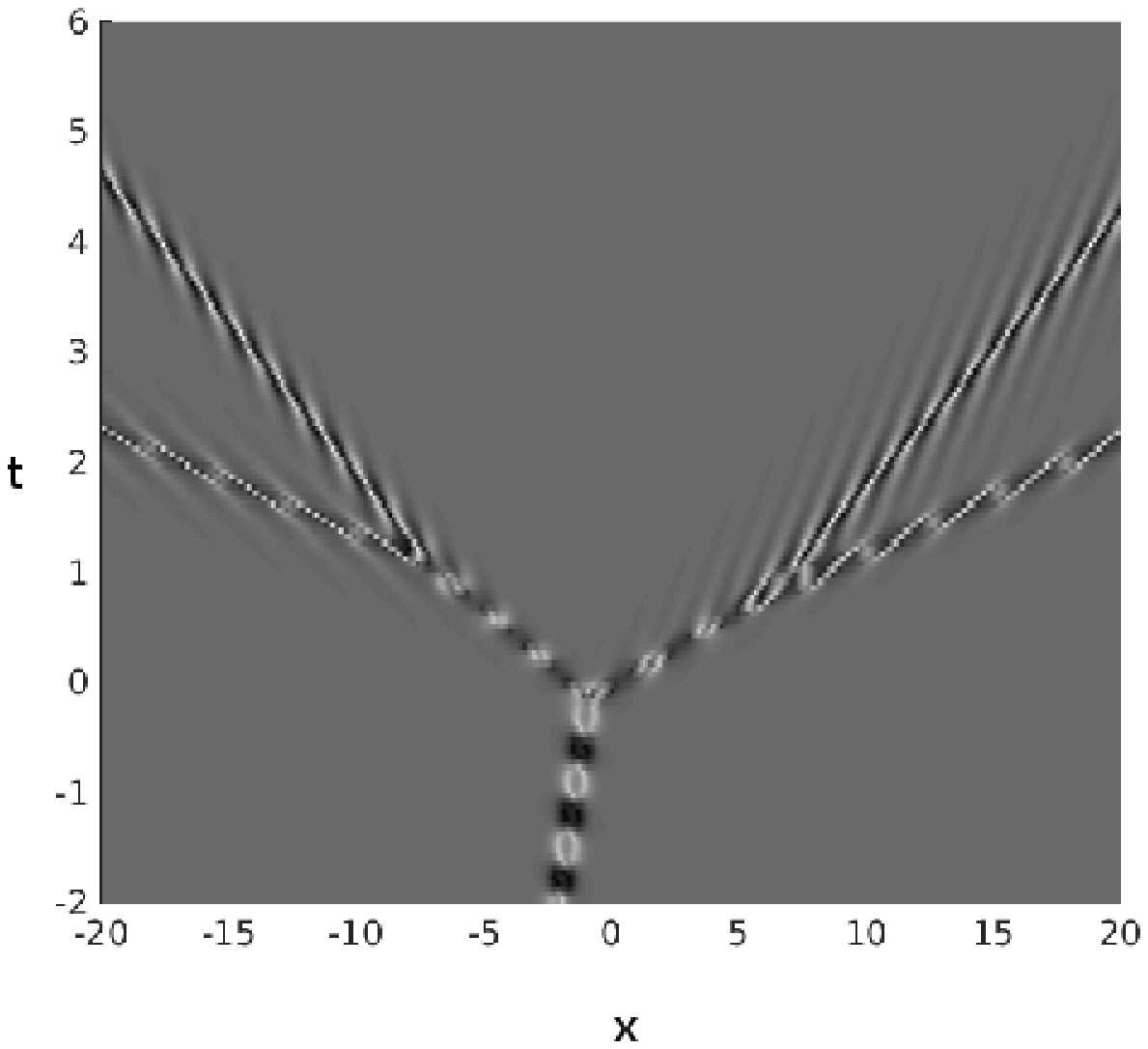}
\caption{Density  plots of $\phi^{(1)}(x,t)$  for breather solutions $N=5$ with
$\mu=0.7+0.15\i
,\ \alpha=(1, 0, 1, 0, 1)$, and 
$\mu=0.5+0.15\i
,\ \alpha=(1, 0.01, 1, 0, 0.0001)$, and
$\mu=0.3+0.15\i
,\ \alpha=(0.01, 1, 10, 10\i, 0.1)$ respectively.}
\label{br111}
\end{figure}

The total number of possible distinct configurations for a breather solution 
given in Proposition \ref{pro5} is
\[
 \sum_{k=2}^N C_N^k=2^N-N-1.
\]

The type of the breather solution depends on the choice of the matrix $\alpha 
=(\alpha_1,\ldots,\alpha_N)$. The explicit expression for the solution is given 
in Proposition \ref{pro5} and it is quite complicated, but the method of 
tropicalisation, which we used in Section  \ref{tropsec}, enables us to give a 
simple description of the soliton graph. We explore the observation that the vector 
\[
 \bn=\sum_{k=1}^N e^{\Theta_p(x,t)}\be_p,
\]
where
\[
 \Theta_p(x,t)=(\frac{\mu}{\omega^p}-\frac{\omega^p}{\mu})x +
(\frac{\mu^2}{\omega^{2p}}-\frac{\omega^{2p}}{\mu^2})t+\log \alpha_p, \quad 
\mu=|\mu| e^\delta
\]
completely determines the $(x,t)$ dependence of the solution. In the regions 
where only one term is dominant the solution is exponentially small. We can 
define the tropical graph of the breather as a locus where two or more terms 
are in balance. More precisely, let us consider the real part of  
$\Theta_p(x,t)$, that is,
\[
 \Theta_p^{\rm Re}(x,t)=(|\mu|-|\mu|^{-1})\cos(\delta-\frac{2\pi 
p}{N})x+(|\mu|^2-|\mu|^{-2})\cos(2\delta-\frac{4\pi p}{N})t+\log |\alpha_p|
\]
and a piecewise linear continuous function of variables $(x,t)$:
\[
 \Theta(x,t)=\max_p \Theta _p^{\rm Re}(x,t).
\]
\begin{Def}
 For rank one breather solutions the tropical soliton graph is defined as a 
locus of points where the function 
$\Theta(x,t)$ is not 
smooth. 
\end{Def}

In order to visualise the tropical plot we present the density plot for the
piecewise constant function  $\partial_x \Theta(x,t)$. In Figure \ref{br111t} 
we show the plots corresponding to solutions plotted in Figure \ref{br111}. 
\begin{figure}[ht]
\centering
\includegraphics[scale=0.5]{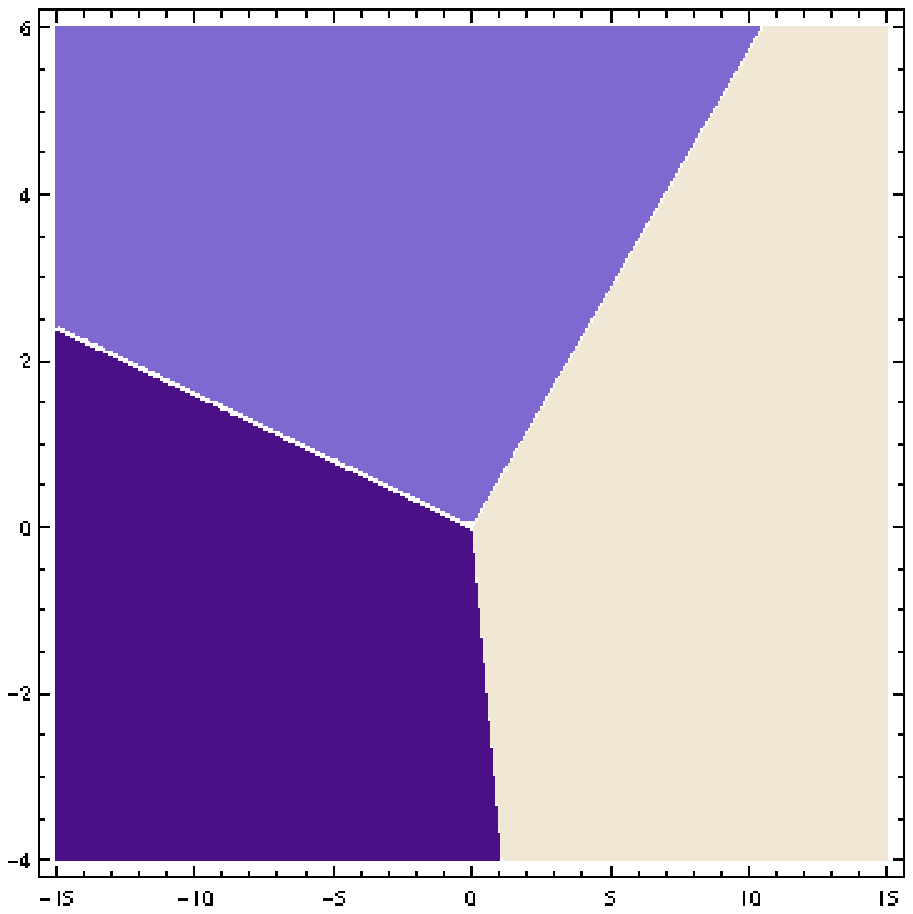}\hspace{5mm}
\includegraphics[scale=0.5]{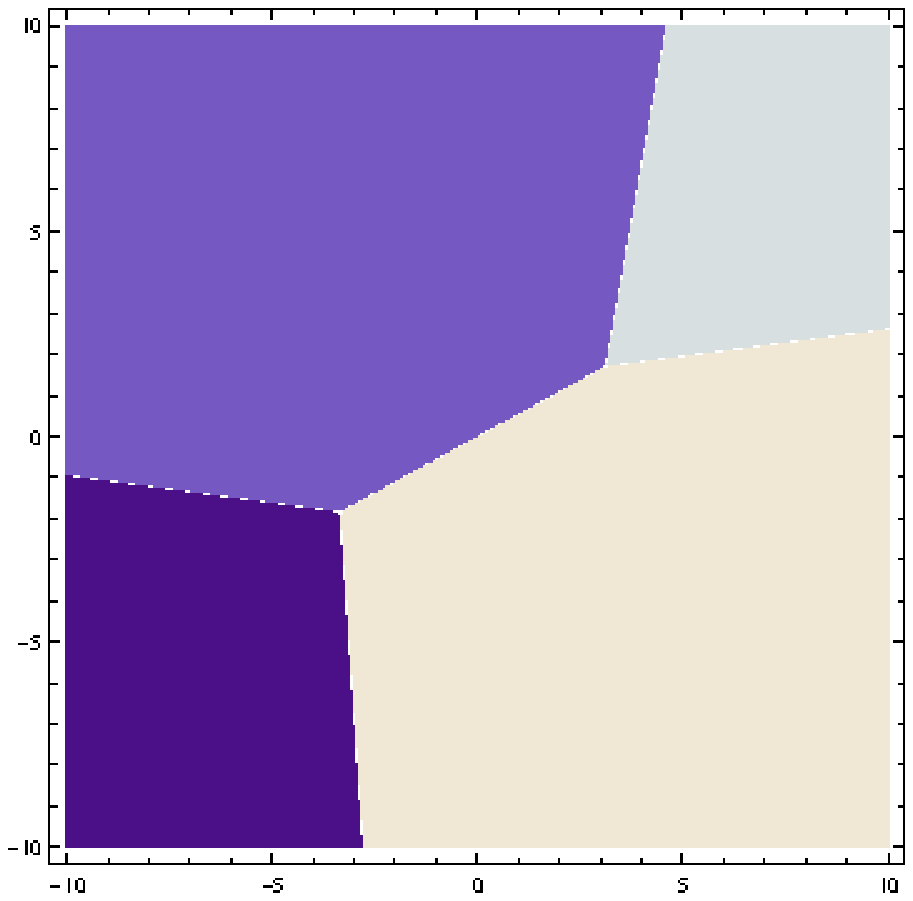} \hspace{5mm}
\includegraphics[scale=0.5]{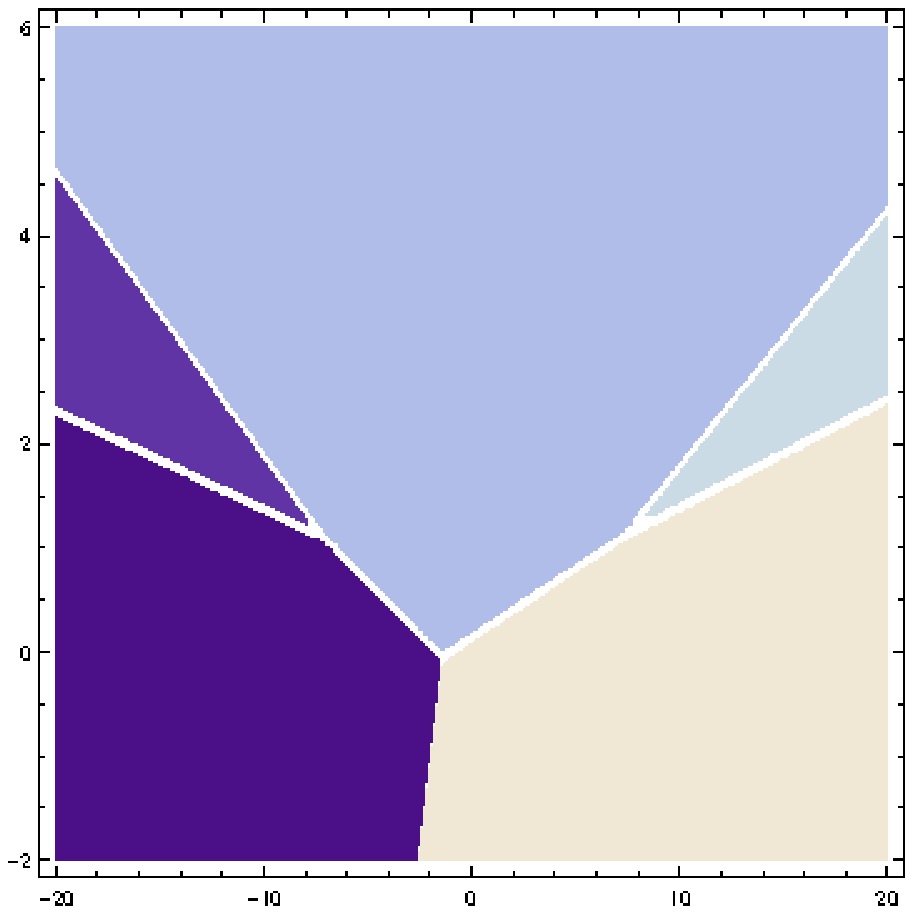}
\caption{Tropical  plots of $\phi^{(1)}(x,t)$  for breather solutions $N=5$ 
with parameters $\mu$ and $\alpha$ corresponding to plots on Figure \ref{br111}.
}
\label{br111t}
\end{figure}

This definition does not reflect the fact that we are dealing with a system of 
equations and thus the graphs corresponding to the variables $\phi^{(i)}(x,t),\ 
i=1,\ldots ,N$ are slightly different (they may depend on the index $i$). It 
can be considered as a first approximation which does capture well the 
trajectories of the solitons (breathers). 

The general approach to visualisation of rank $r$ solutions is similar to the case of 
rank one. We use the fact that a rank $r$ solution is a function of the 
point 
$$\bn(x,t)=\exp((\mu \Delta^{-1}-\mu^{-1} \Delta) x-(\mu^{-2} \Delta^2-\mu^2 
\Delta^{-2}) t)\bn_0 $$ 
on the Grassmannian ${\rm Gr}(r, N)$, where $\bn(x,t)$ is a $N\times r$ full rank 
matrix (as it follows from Proposition \ref{pro5b}). In the basis $\be_k$ 
(\ref{bek}) the matrix $\bn(x,t)$ can be represented as 
\[
 \bn(x,t)=(\be_1,\ldots ,\be_N){\alpha}^{\tr}(x,t),
\]
where $\alpha(x,t)$ is $r\times N$ matrix of full rank and 
\[
 \left(\alpha(x,t)\right)_{pq}=\alpha^{(0)}_{pq}\exp((\mu 
\omega^{-q}-\mu^{-1} \omega^q) x-(\mu^{-2} \omega^{2q}-\mu^2 
\omega^{-2q}) t),\qquad 1\le p\le r,\ 1\le q\le N.
\]
Let 
\[
 I=\{i_1<i_2<\ldots <i_k\}\subset [1,\ldots ,N]
\]
and let $\Delta_I(x,t)$ denote the minor of $\alpha(x,t)$ with columns 
$i_1,\ldots 
, i_k$ (a Pl\"ucker coordinate on the Grassmannian ${\rm Gr}(k, N)$). Let us define 
$\Theta_I(x,t)=\log |\Delta _I(x,t)|$ if  
$\Delta_I(x,t)\ne 0$ and  for such $I$ that 
$\Delta_I(x,t)= 0$ we set $\Theta_I(x,t)=-\infty$. The function $\Theta _I(x,t)$ is a linear function of the 
coordinates $(x,t)$. If there is only one nonzero minor $\Delta_I(x,t)$, then 
it 
is easy to show that the corresponding solution is $\phi^{(i)}(x,t)=0, \ 
i=1,\ldots, N$ (similar to Proposition \ref{pro10}). The solution is concentrated 
near the points where two or more Pl\"ucker coordinates are in balance and we 
can give the following definition of the tropical soliton graph in the case of 
rank $r$ breather solutions.

\begin{Def}
 For rank $r$ breather solutions the tropical soliton graph is defined as a 
locus of points where the function 
$\Theta(x,t)=\max_I \Theta_I(x,t)$ is not smooth. 
\end{Def}

Using the definition, we plot the tropical soliton graph for
$$N=5,\qquad \mu=0.57+0.2\i,\qquad \alpha=\left(\begin{array}{rrrrr}
      1&10&10^4&10^3&1\\10^6&1&10^6&10^4&99.9
                                           \end{array}\right)$$
and compare it to the actual density plot for $\phi^{(3)}(x,t)$ in the following graph:
 \begin{figure}[ht]
\centering
\includegraphics[scale=0.52]{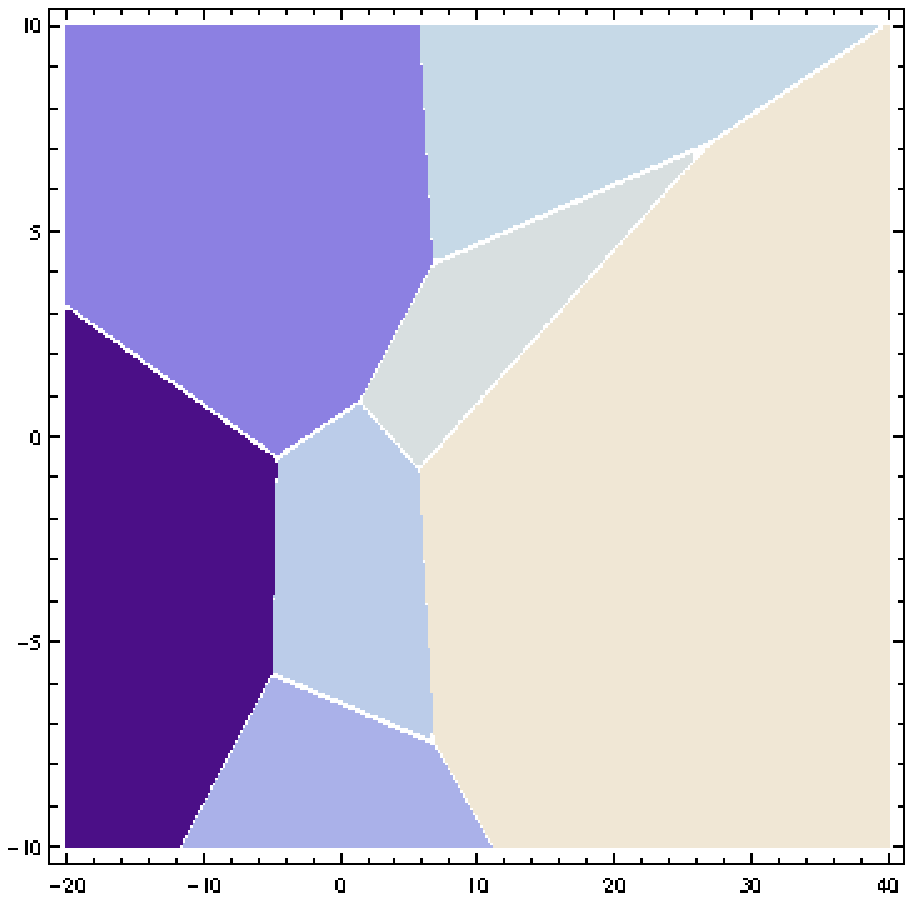}\hspace{9mm}
\includegraphics[scale=0.5]{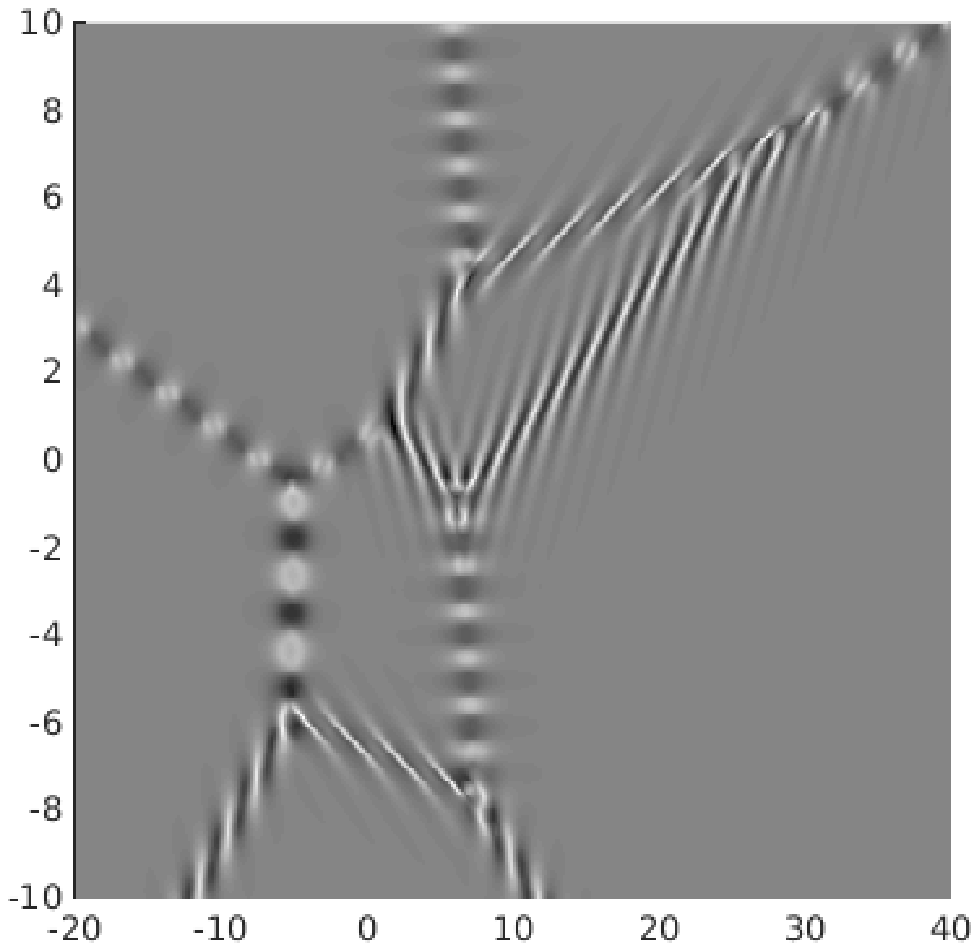}
\caption{Tropical and density plots of $\phi^{(3)}(x,t)$  for rank 2 breather 
solutions $N=5$ }
\label{breatherr2}
\end{figure}

Although the above definition of a tropical soliton graph is not perfect (it does not reflect the dependence of 
the graph on the index $i$ for different components $\phi^{(i)}(x,t)$), it 
reflects well the picture of the breather interactions. It also opens a path 
to classification of possible configurations in the multi-soliton solutions 
of arbitrary rank. 

\section{Conclusion}\label{sec5}

In this paper we developed the dressing method for the two dimensional Volterra 
system (\ref{2+1}). We have constructed two types of exact solutions to the 
system. The first type is rather unusual. It represents propagation of the wave 
fronts. Up to the best of our knowledge it is a new class of solutions in 
integrable models. The second type resembles breathers in the sine-Gordon 
equation. Nonlinear wave (``kink'') solutions are parametrised by a real  
parameter $\nu$ and a point on a real Grassmannian ${\rm Gr}_\bbbr (k,N)$. In the 
case 
of breathers the parameter $\mu$ and Grassmannian ${\rm Gr}_\bbbc (k,N)$ are complex. 
The integer $k$ is the rank of the solution. We have studied in detail the 
properties and configurations of rank 1 solutions, where the Grassmannians are 
real and complex projective spaces respectively. Classification of rank $k$ 
solutions can be linked with the classification of $\Delta$--invariant Schubert 
decompositions of the Grassmannians, where $\Delta$ is the cyclic shift matrix 
from the Lax representation of the two dimensional Volterra system (\ref{2+1}). 

In this paper we have not yet developed a classification of higher rank 
solutions, but we claim  that their  properties are quite different from the 
solutions of rank one. For example the nonlinear wave (``kink'') solutions of 
rank 2 may represent a nonlinear interference of waves (see Figure \ref{T}, 
right) 
which is impossible in the case of rank one solutions.
In the case $N=4$ we have listed in Section \ref{N4k} and presented all 
possible rank 1 kink solutions. Here we present density 
plots for two kink solutions of rank 2 when $N=4$ in Figure \ref{kink4rank2}
and some snapshots in Figure \ref{kink4rank2snap},
which does 
not resemble any rank 1 solution.
\begin{figure}[ht]
\centering
\includegraphics[scale=0.45]{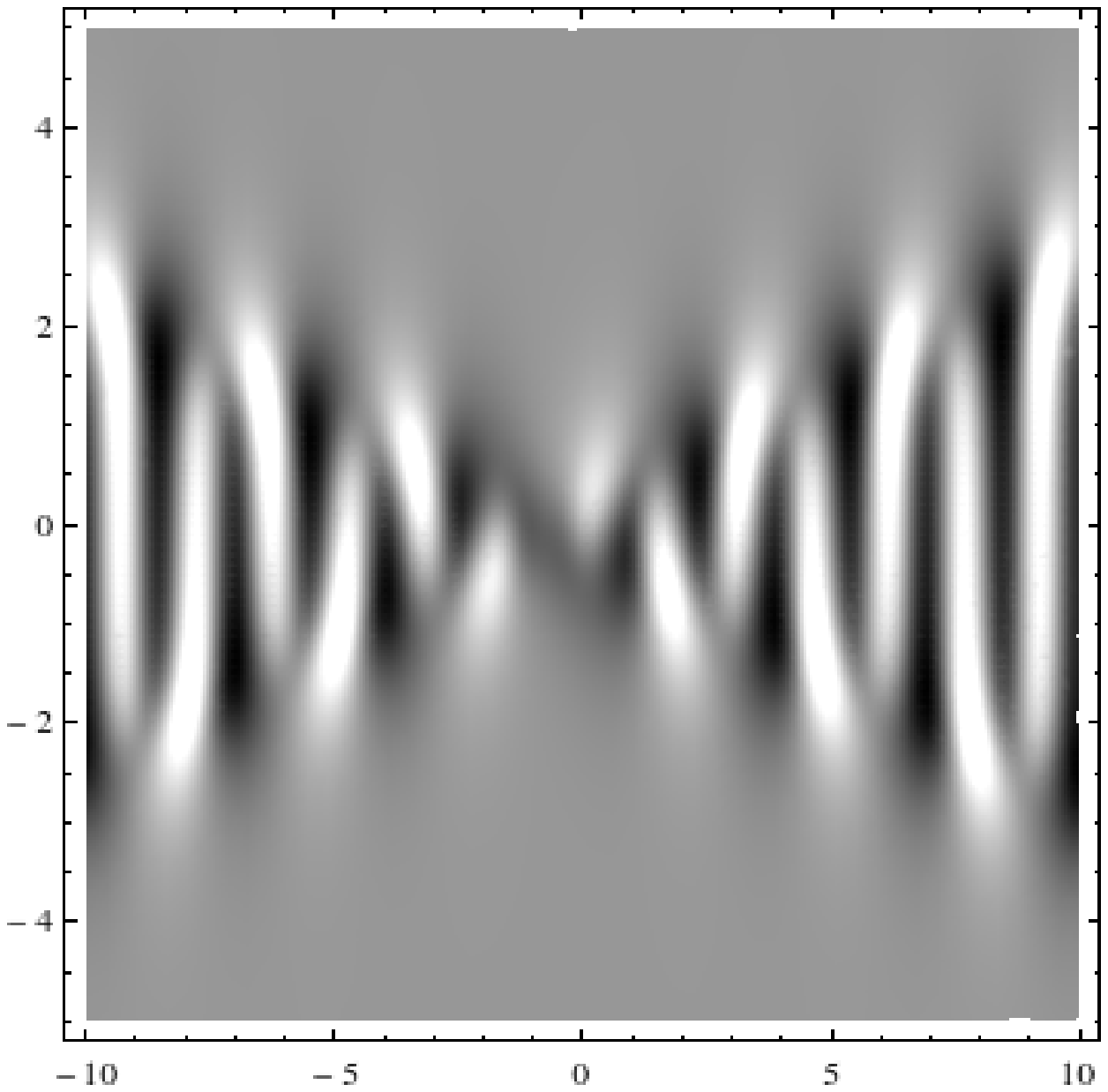} \hspace{9mm}
\includegraphics[scale=0.45]{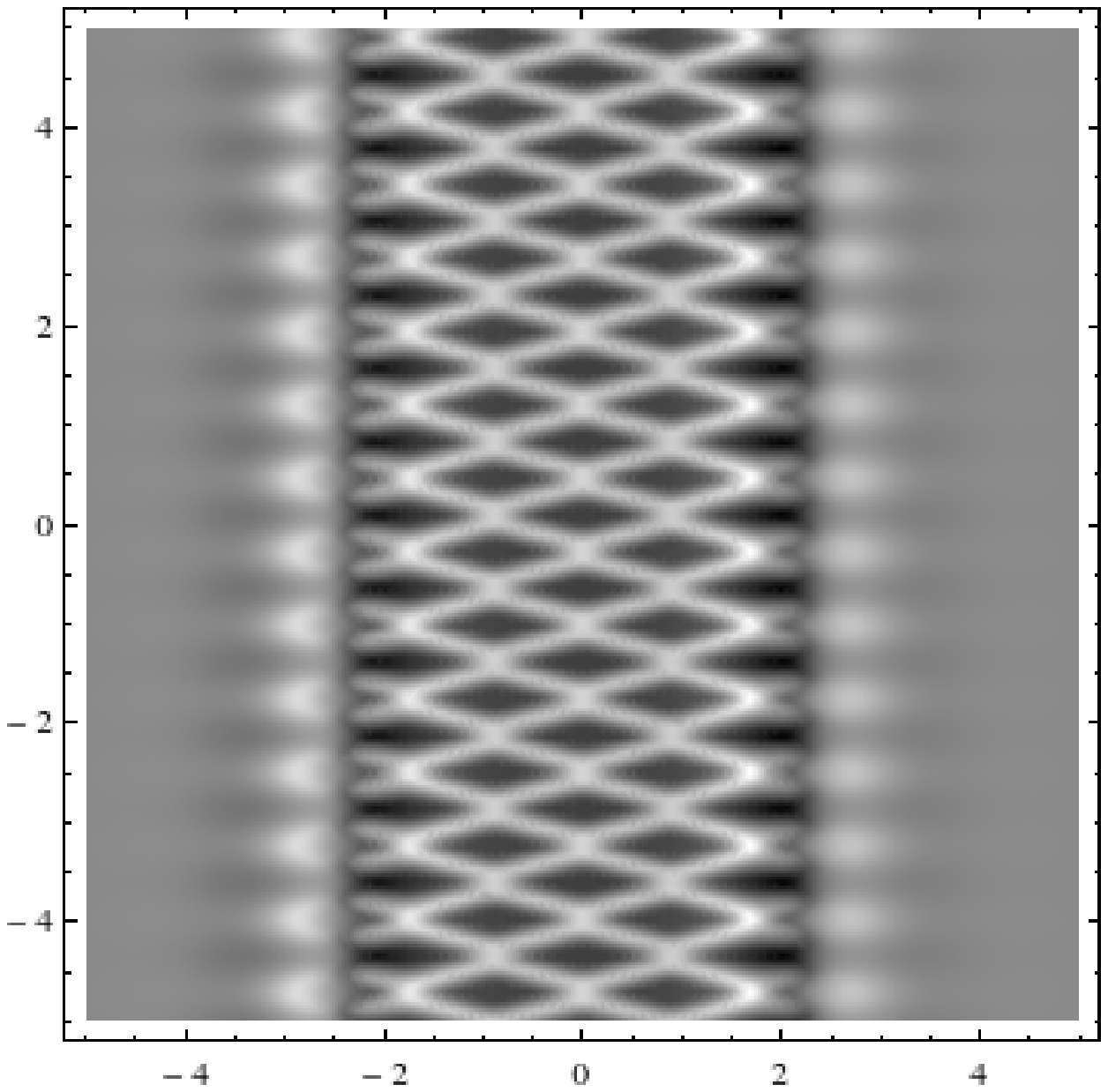}
\caption{Density  plot of $\phi^{(1)}(x,t)$ for a kink solution of rank $2$ when 
$N=4$: $\nu=0.8, 
\bn_0=(\i\be_1-2\be_2-\i\be_3+\be_4,\ 2\be_1+\be_2+2\be_3+\be_4)$ for the left 
plot and $\mu=2 e^{\frac{\pi \i}{4}}, \bn_0=(\be_1 
+(1+100\i)\be_2+\be_3+(1-100\i)\be_4,\ 10\i\be_1+5\i\be_2-10\i\be_3-5\i\be_4)$ 
for the right plot.
}
\label{kink4rank2}
\end{figure}
\begin{figure}[ht]
\centering
\includegraphics[scale=0.41]{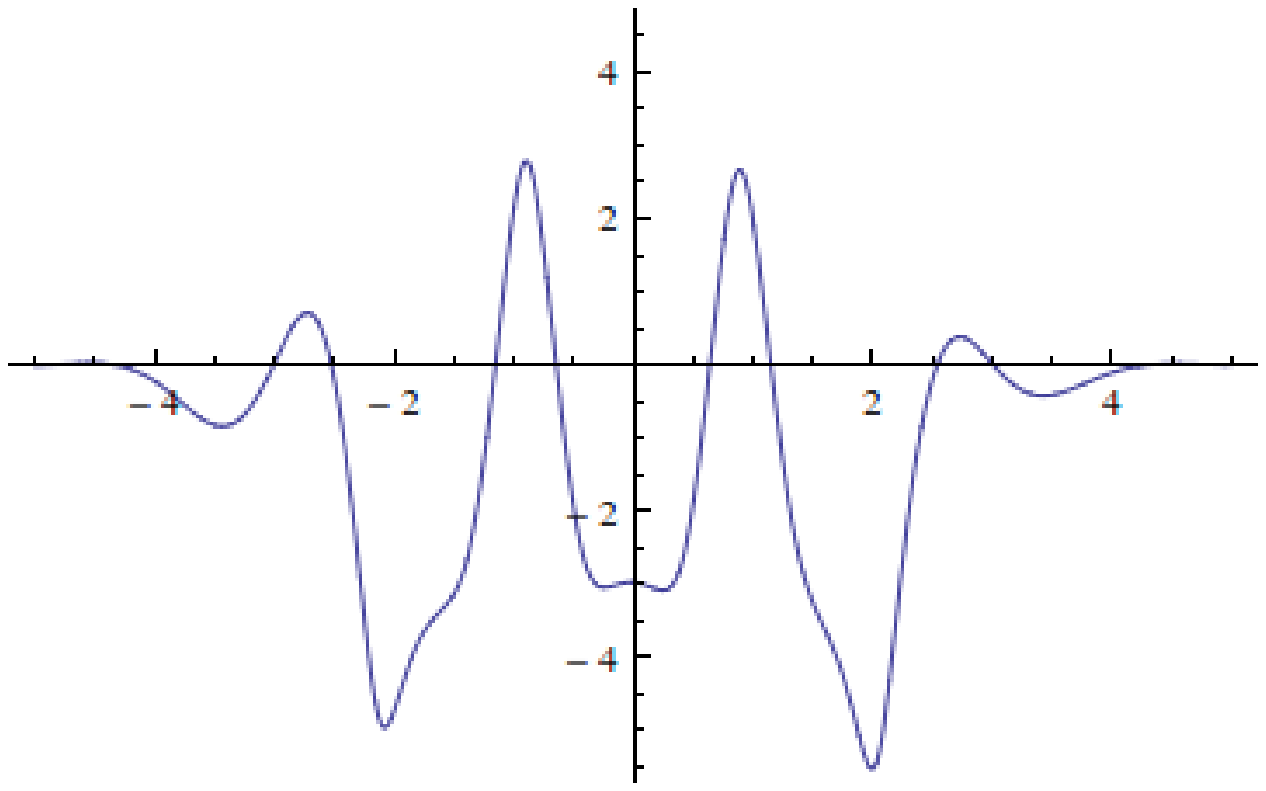} \ \ 
\includegraphics[scale=0.41]{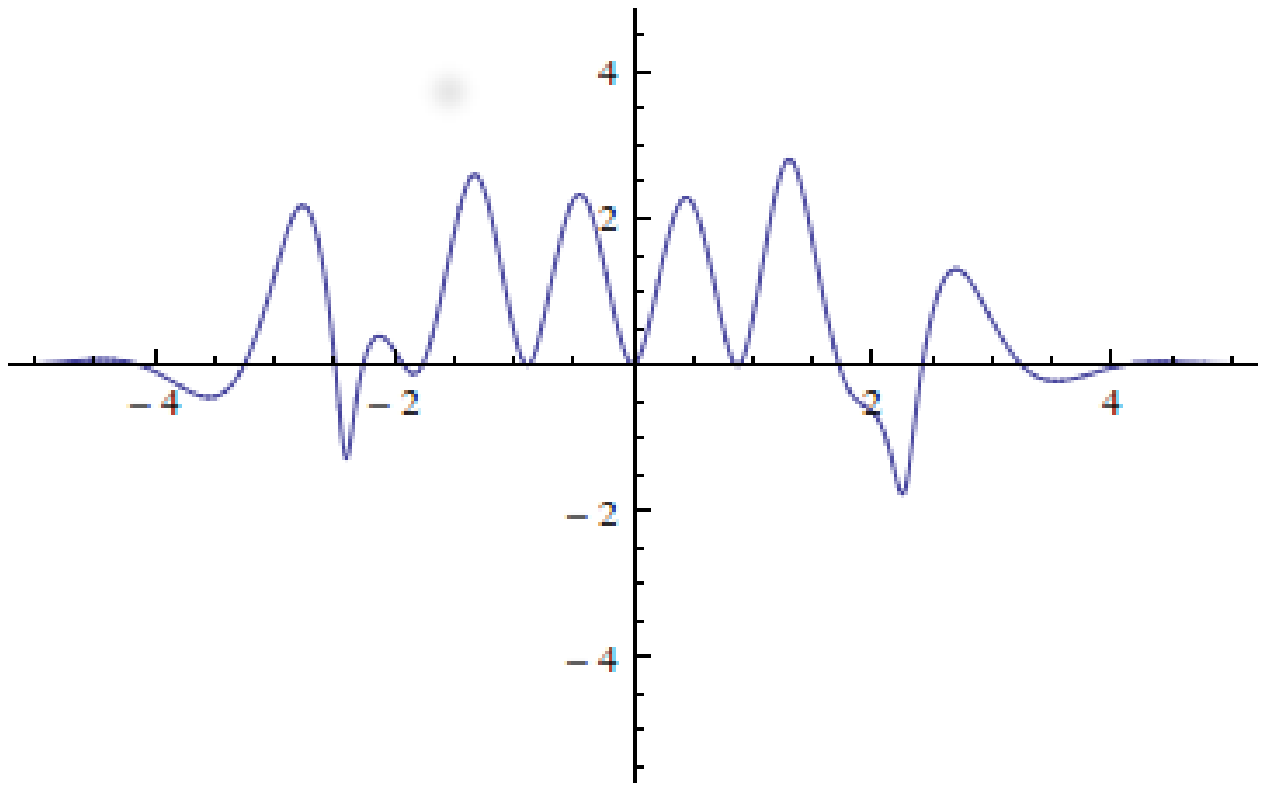} \ \ 
\includegraphics[scale=0.41]{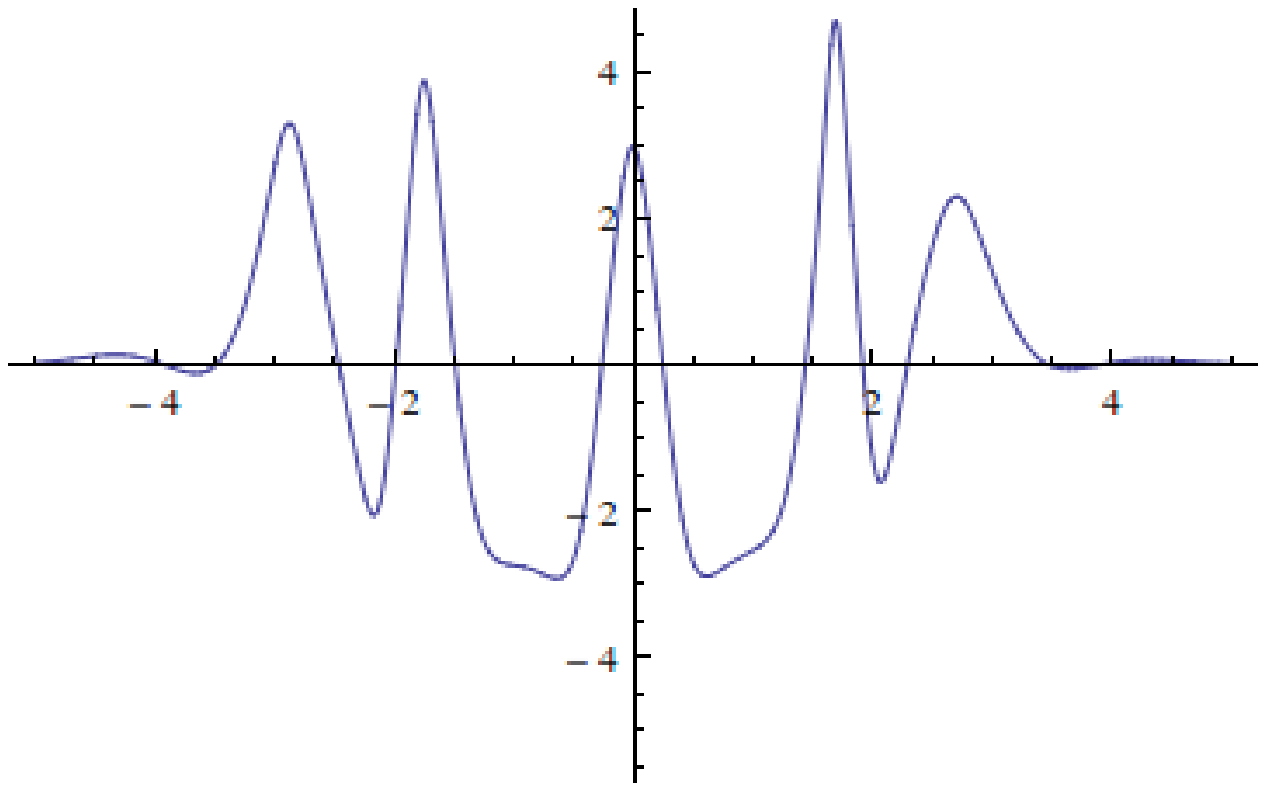}
\caption{Snapshots of $\phi^{(1)}(x,t)$ for rank $2$ kink solution on the right 
plot in Figure \ref{kink4rank2}: $t=0.0945$ for the left plot, $t=-0.09075$ for 
the middle plot and $t=-0.276$ for the right plot.
}
\label{kink4rank2snap}
\end{figure}
Breather solutions of rank 1 do not have closed loops, but in rank 2 loops 
exist (see Figure \ref{Tb}). 

To study the structure and classify higher rank wave front and 
breather solutions as well as multi-soliton solutions (with a finite number of 
 orbits of the poles in the dressing matrix $\Phi(\lambda)$) we 
need to develop further methods similar to ones proposed by Kodama et al. for the 
KP equation \cite{kodama2011kp, Kodama2013, kodama2014kp}.
There is also an interesting and as yet unsolved problem to find solutions of 
(\ref{2+1}) which approximate for large $N$ the solutions of the KP equation.

\section*{Acknowledgements}
AVM would like to acknowledge the financial support by the Leverhulme Trust. 
Both AVM and JPW would like to thank Y. Kodama for numerous discussions during 
his one-month visit 
to the UK, which was supported by the LMS, and to L. Williams for her seminal lectures 
during 
the meeting ``Total Positivity: A bridge between Representation Theory and Physics'' at the University of Kent.

\end{document}